\def\checkmark{\tikz\fill[scale=0.4](0,.35) -- (.25,0) -- (1,.7) -- (.25,.15) -- cycle;} 
\newcommand{\xmark}{%
\tikz[scale=0.23] {
    \draw[line width=0.7,line cap=round] (0,0) to [bend left=6] (1,1);
    \draw[line width=0.7,line cap=round] (0.2,0.95) to [bend right=3] (0.8,0.05);
}}
\newcolumntype{C}[1]{%
 >{\vbox to 5ex\bgroup\vfill\centering}%
 p{#1}%
 <{\egroup}}  
\tikzset{
dot4/.style = {circle, fill, minimum size=#1,
              inner sep=0pt, outer sep=0pt,
              postaction={
pattern={
Hatch[distance=1mm,  line width=3mm],
}, 
pattern color=white,
}},
dot4/.default = 6pt,
dot3/.style = {circle, fill, minimum size=#1,
              inner sep=0pt, outer sep=0pt,
              postaction={
pattern={
Lines[angle=00, distance=1mm,  line width=3mm]
},    pattern color=white,
}},
dot3/.default = 6pt,
dot2/.style = {circle, fill, minimum size=#1,
              inner sep=0pt, outer sep=0pt,
              postaction={
pattern={
Lines[angle=90, distance=1mm,  line width=3mm]
},    pattern color=white,
}},
dot2/.default = 6pt,
dot/.style = {circle, fill, minimum size=#1,
              inner sep=0pt, outer sep=0pt},
dot/.default = 6pt 
}
\newcommand\xqed[1]{%
	\leavevmode\unskip\penalty9999 \hbox{}\nobreak\hfill
	\quad\hbox{#1}}
\newcommand\demo{\xqed{$\triangle$}}
\newcommand\numberthis{\addtocounter{equation}{1}\tag{\theequation}}
\newtheorem{theorem}{Theorem}
\newtheorem{proposition}[theorem]{Proposition}
\newtheorem{assumption}[theorem]{Assumption} 
\newtheorem{remark}[theorem]{Remark}
\newtheorem{corollary}[theorem]{Corollary}
\newtheorem{lemma}[theorem]{Lemma}
\theoremstyle{definition}
\newtheorem{example}[theorem]{Example}
\newtheorem{definition}[theorem]{Definition}
\definecolor{c1}{RGB}{74,143,222}
\definecolor{c2}{RGB}{195,135,44}
\title{Using Time Structure to Estimate Causal Effects}
\begin{document}
\author{Tom Hochsprung$^{1, 2}$\footnote{Email address of corresponding author: tom.hochsprung@dlr.de} \footnote{Code to reproduce results available at \protect\url{https://gitlab.com/dlr-dw/using_time_structure_to_estimate_causal_effects_code}} \and Jakob Runge$^{3, 2, 1}$ \and Andreas Gerhardus$^1$}
\date{\small
    $^1$German Aerospace Center (DLR), Institute of Data Science, Jena, Germany\\%
    $^2$Technische Universität Berlin, Berlin, Germany\\%
    $^3$Center for Scalable Data Analytics and Artificial Intelligence (ScaDS.AI) Dresden/Leipzig, TU Dresden, Dresden, Germany \\[2ex]%
    \normalsize
    \today}
\maketitle	
\begin{abstract}
There exist several approaches for estimating causal effects in time series when latent confounding is present. Many of these approaches rely on additional auxiliary observed variables or time series such as instruments, negative controls or time series that satisfy the front- or backdoor criterion in certain graphs. In this paper, we present a novel approach for estimating direct (and via Wright's path rule total) causal effects in a time series setup which does not rely on additional auxiliary observed variables or time series. This approach assumes that the underlying time series is a Structural Vector Autoregressive (SVAR) process and estimates direct causal effects by solving certain linear equation systems made up of different covariances and model parameters. We state sufficient graphical criteria in terms of the so-called full time graph under which these linear equations systems are uniquely solvable and under which their solutions contain the to-be-identified direct causal effects as components. We also state sufficient lag-based criteria under which the previously mentioned graphical conditions are satisfied and, thus, under which direct causal effects are identifiable. Several numerical experiments underline the correctness and applicability of our results.

   \noindent \textbf{Keywords:} causal inference, causal effect estimation, time series, SVAR processes, trek rule
    \end{abstract}
\section{Introduction}
\label{sec_intro}
Identifying causal effects is an important task in numerous research areas such as social science [e.g., \citealp{angrist1996identification, morgan2014counterfactuals, imbens2015causal}], epidemiology [e.g., \citealp{hernan2006instruments, didelez2010assumptions}] and Earth system science [e.g., \citealp{runge2023causal}]. Unfortunately, this identification is often not possible due to latent confounding [e.g., \citealp{pearl2009causality, peters2017elements}].
If a latent variable confounds cause and effect, then one can attribute any portion of dependence between cause and effect to spurious dependencies introduced by the latent confounder. Therefore, without further assumptions, it is in general not possible to identify causal effects in the presence of latent confounding.
 
In this paper, we consider the case where cause, effect and latent confounder are time series. In particular, we assume that all these time series are components of a stable structural vector autoregressive (SVAR) process [e.g., \citealp{lutkepohl2005new, hyvarinen2010estimation, moneta2011causal, malinsky2018causal}] with finite-second-moment noise.
As we will show, direct causal effects can in this setting often be identified by just using covariances between cause and effect plus qualitative knowledge of the structural relationships between cause, effect and latent confounder. Together with Wright's path rule \citep{wright1934method}, we thus implicitly also show that \emph{total} causal effects can often be identified from the same assumptions (which, however, we do not explicitly discuss in our paper).
 
Our results stand in stark contrast to most existing identification results for latently confounded causal effects in a time series setup: These results frequently resemble results in the i.i.d.\ setting and typically require \emph{additional} auxiliary variables or time series such as negative controls \citep{hu2023using}, instruments \citep{mogensen2023instrumental, michael2024instrumental, thams2022identifying}, or time series that satisfy the backdoor or frontdoor criterion in certain graphs \citep{eichler2007causal,eichler2010granger,eichler2012causal}. Our results, however, do \emph{not} rely on additional auxiliary observed variables or time series. 

Within the domain of time series analysis, we are only aware of a few other works that also not require additional variables or time series for identification of latently confounded causal effects: One approach for SVAR processes is to combine biased estimators of various structural coefficients into an unbiased estimator of the direct causal effect [\citealp{malinvaud1961estimation}, particularly Section 6 in \citealp{bercu2013sharp,proia2013further}]. This approach, however, has only been developed for rather simple SVAR processes and, in particular, for a rather simple confounding structure and is much less general than our approach.\footnote{Namely, this approach is \emph{not} applicable to Examples \ref{ex_app_1}, \ref{ex_app_2} and \ref{ex_app_3} in Section \ref{sec_examples} of the Appendix and to the real world example in Section \ref{real_world_example}. In our paper, this approach is only applicable to Example \ref{example1} in Section \ref{sec_main_identifiability_result}.} Furthermore, this approach results in to-be-solved equation systems for which it is not immediately clear from the SVAR process alone whether direct causal effects are identifiable or not. 
Besides this approach, there is the recent work from  \citet{schur2024decor} (and the related and more application-oriented work of \citet{mahecha2010global}). \citet{schur2024decor} transform the entire problem to the frequency domain and show that if the latent confounder is sparse in the frequency domain and some further technical assumptions are satisfied, then causal effects can be identified without further observed variables or time series.
However, besides that their results are proven for continuous bounded time, the assumption of sparsity in the frequency domain does not seem to be easily translatable to SVAR processes. 

Our identification results are formulated in the time domain and for discrete time: For identifying the direct causal effect of some variable $X_{t-h}$ on another variable $Y_t$, we rely on solving a certain linear equation system of the form
\begin{align*}
\Gamma_{R,Y_t} = \Gamma_{R, C} \cdot v,
\end{align*}
where $R$ and $C$ are sets of observed variables at various time points, $\Gamma_{R,C}$ denotes the matrix of covariances between the variables in $R$ and the variables in $C$, and $\Gamma_{R,Y_t}$ denotes the vector of covariances between the variables in $R$ and $Y_t$. Hereby, in contrast to approaches such as instrumental time series \citep{thams2022identifying}, the set $C$ and thus this linear equation system is also allowed to contain variables with strictly larger time index than $t$ because these ``future'' variables allow us to express the latent covariance structure just in terms of covariances between observed variables.

We also derive graphical conditions that are sufficient for this linear equation system to be uniquely solvable and the solution vector $v$ having the to-be-identified causal effect as a certain component. These graphical conditions
are stated in terms of the so-called \emph{full time graph} [e.g., \citealp{peters2013causal}, Section 10 in \citealp{peters2017elements, malinsky2018causal, runge2023causal, gerhardus2024characterization}], which is a graph that contains vertices for each component time series at each particular time point and edges between these vertices if there is a direct causal effect from one vertex onto another vertex---see Figure \ref{example_graph_1} in the main paper and Figures \ref{ex_full_time_graph_app_1}, \ref{ex_full_time_graph_app_2} and \ref{ex_full_time_graph_app_3} in the Appendix for examples.

When articulating and deriving these graphical conditions, we will frequently use the notion of \textit{treks}, that is, paths without colliders, which we have borrowed from the Linear Structural Equation Models (LSEMs) literature \citep{sullivant2010trek,foygel2012half,barber2022half}.\footnote{This borrowing of terms should not be too surprising given that SVAR processes, depending on the precise time index set, either are or closely resemble Linear Structural Equation Models (LSEMs).} This borrowing of terms does not, however, imply that our results are trivial extensions of existing LSEM identification results: Our identification results are neither a trivial extension of results that rely on projecting away latent variables [e.g., \citealp{foygel2012half}], because an autodependent latent confounder typically implies that the latent confounding is non-sparse after projecting it away, thus hindering identification using these approaches. Nor are our results a trivial extension of an LSEM-identification approach that explicitly models latent variables \citep{barber2022half} as we do, because this approach assumes that the latent variables are source variables and independent, which is an assumption that we do not make and which, in addition, is already violated for any autodependent latent time series. (We do assume that the latent time series do not have observed parents, but this is a much weaker assumption.)
Furthermore, we work with \emph{infinite} graphs, whereas the LSEM literature usually only applies to finite graphs. Moreover, in contrast to LSEMs, we can also exploit the fact that full time graphs exhibit regular edge patterns (which is also known as \emph{causal stationarity} [e.g., \citealp{gerhardus2024characterization}]), and thus graphical conditions can to some degree be translated into assumptions on lags, which one can rather easily implement on a computer or use to study classes of full time graphs for which causal effects are identifiable.

The structure of our paper is as follows:  We start in \textbf{Section \ref{sec_var_processes}} by reviewing some preliminaries on SVAR processes, their graphical representation via full time graphs and the notion of generic identifiability.  We then explain our graph- and in particular trek-based identifiability results in \textbf{Section \ref{sec_main_identifiability_result}}. While we focus on identifiability and not on statistical estimation, we will also present a simple consistency result in this section. In \textbf{Section \ref{sec_sufficient_lag_based_criteria}}, we then state sufficient lag-based criteria for identifiability. Afterwards, in  \textbf{Section \ref{sec_numerical_experiments}}, we empirically underline the correctness of our theory by demonstrating convergence of the estimated to the true direct causal effects for randomly drawn full time graphs and parameters.  
In \textbf{Section \ref{real_world_example}}, we then further illustrate our approach by providing numerical simulations for the recent electricity market example from \citet{tiedemann2024identifying}. For this example, we do not rely on further observed instrumental time series for identification unlike \citet{tiedemann2024identifying}. In \textbf{Section \ref{sec_proof_of_main_result}}, we then present the proof of our main theorem (Theorem \ref{main_theorem}). In \textbf{Section \ref{sec_conclusion}}, we finish with a conclusion and an outlook. \textbf{Section \ref{sec_glossary}} in the Appendix contains a glossary providing an overview of our frequently used notation and the \textbf{Appendix} in general contains proof details and further explanations and examples.

\section{SVAR processes and their full time graphs}
\label{sec_var_processes}
\textbf{Basic notation and (linear) SVAR equation:}	In this paper\footnote{We frequently complement the symbols $\mathbb{N}$, $\mathbb{Z}$ or $\mathbb{R}$ by indices indicating which numbers are excluded or where these particular sets start. For example, $\mathbb{N}_{>0}$ stands for the natural numbers starting at $1$, and $\mathbb{R}_{\neq 0}$ stands for the real numbers excluding zero.}, we consider a stochastic process $\{S_t\}_{t\in\mathbb{Z}}$ with unobserved component time series $\{U^1_t\}_{t\in\mathbb{Z}},\allowbreak\ldots,\allowbreak\{U^{d_U}_t\}_{t\in\mathbb{Z}}$ where $d_U\in \mathbb{N}_{\geq 0}$ and observed component time series $\{O^1_t\}_{t\in\mathbb{Z}},\allowbreak\ldots,\allowbreak\{O^{d_O}_t\}_{t\in\mathbb{Z}}$ where $d_O\in \mathbb{N}_{>0}$, that is, $\{S_t\}_{t\in\mathbb{Z}}=\allowbreak\{(S^1_t,\allowbreak\ldots,\allowbreak S^{d}_t)\}_{t\in\mathbb{Z}}=\allowbreak\{(U^1_t,\allowbreak\ldots,\allowbreak U^{d_U}_t,\allowbreak O^1_t,\allowbreak\ldots,\allowbreak O^{d_O}_t)\}_{t\in\mathbb{Z}}$ where $d:=d_U+d_O$. If $d_U=1$ or $d_O=1$, then we sometimes drop the superscript and just write $U_t$ instead of $U^1_t$ and $O_t$ instead of $O^1_t$, respectively.
We assume that $U^i_t,O^i_t\in \mathbb{R}$ for all $t\in\mathbb{Z}$ and $i\in [d_U]_1:=\{1,\ldots, d_U\}$ respectively $i\in [d_O]_1:=\{1,\ldots,d_O\}$. Without loss of generality, we let $\{Y_t\}_{t\in\mathbb{Z}}:=\{O^1_t\}_{t\in\mathbb{Z}}$ be the target time series, and we are interested in identifying the direct causal effects of some other observed component time series $\{X_t\}_{t\in\mathbb{Z}}:=\{O^i_t\}_{t\in\mathbb{Z}}$ with $i\in[d_O]_1$ onto $\{Y_t\}_{t\in\mathbb{Z}}$. (Note that $\{X_t\}_{t\in\mathbb{Z}}$ is allowed to equal $\{Y_t\}_{t\in\mathbb{Z}}$).

We further assume that
the joint process $\{S_t\}_{t\in\mathbb{Z}}$ is a (linear) SVAR process of some order $p\in\mathbb{N}_{\geq 0}$ [e.g., \citealp{hyvarinen2010estimation, moneta2011causal, malinsky2018causal}, Chapters 2 and 9 in \citealp{lutkepohl2005new}]: This SVAR-assumption means that one can write every $S_t$ as a linear function of $S_t,\allowbreak S_{t-1},\ldots,\allowbreak S_{t-p}$ and a noise vector $\epsilon_t\in\mathbb{R}^d$ via the coefficient matrices $A^{(0)},\allowbreak A^{(1)},\ldots,\allowbreak A^{(p)}\in \mathbb{R}^{d\times d}$ with $A^{(p)}\neq 0$ if $p>0$. That is, (almost surely) for all $t\in\mathbb{Z}$, 
  
	\begin{align}
	\label{var_process}
S_t	= A^{(0)}S_t + A^{(1)}S_{t-1}+\ldots+A^{(p)}S_{t-p} + \epsilon_t.
	\end{align}
	We furthermore assume that $\{\epsilon_{t}\}_{t\in\mathbb{Z}}$ is an i.i.d.\ process and that each $\epsilon_t$ has mean zero (for notational simplicity) and diagonal covariance matrix $\Sigma$ with non-negative and finite diagonal entries. Moreover, Assumptions \ref{assumption_no_instantaneous_self_edges} and \ref{assumption_acyclicity} later in this section
	ensure that $I_d-A^{(0)}$ (where $I_d$ denotes the $d\times d$-identity matrix) is invertible.
	This invertibility assumption allows one to write equation \eqref{var_process} as a classical VAR process (with, in general, non-diagonal noise-covariance matrix) that is, as
	\begin{align}
	    \label{var_process_2}
	    S_t	= (I_d-A^{(0)})^{-1}A^{(1)}S_{t-1}+\ldots+(I_d-A^{(0)})^{-1}A^{(p)}S_{t-p} + (I_d-A^{(0)})^{-1}\epsilon_t.
	\end{align}
    Throughout the paper, we assume that $\{S_t\}_{t\in\mathbb{Z}}$ is stable, that is, that $\{S_t\}_{t\in\mathbb{Z}}$ interpreted as a classical VAR process is stable, that is, $\det(I_d - (I_d-A^{(0)})^{-1}A^{(1)}\lambda-(I_d-A^{(0)})^{-1}A^{(2)}\lambda^2-\cdots -(I_d-A^{(0)})^{-1}A^{(p)}\lambda^p)=0$ implies $|\lambda| >  1$. This stability assumption is often made and helpful because, together with finite-second-moment noise, it implies weak stationarity, that is, that the mean of $S_t$ does not depend on $t$ and that the covariance matrices $\mathbb{E}[S_tS_{t-h}^T]$ only depend on $h\in\mathbb{Z}$ and not on $t$.\footnote{In fact, stability plus finite-second-moment i.i.d.\ noise also implies strict stationarity, that is, that all distributions of finitely many variables are time-shift invariant---see for example Section 2.1.2 in \citet{fan2008nonlinear} or Section 4 in \citet{hochsprungglobal} for an explanation of this fact (in the latter reference the further assumption that each $\epsilon_t$ has independent components is not required for the explanation).} In addition, stable SVAR processes have a moving average representation, that is, one can write $S_t=\sum_{i=0}^\infty \Phi_i\epsilon_{t-i}$ where each $\Phi_i\in \mathbb{R}^{d\times d}$ and each component series $\{(\Phi_i)_{k,l}\}_{i\in\mathbb{N}_{\geq 0}}$ with $k,l\in[d]_1$ is absolutely summable,  and where $\sum_{i=0}^\infty \Phi_i\epsilon_{t-i}$ is understood as a limit in mean square.

\textbf{Autocovariance function:} The autocovariance function $\Gamma_S:\mathbb{Z}\rightarrow \mathbb{R}^{d\times d}$  is defined by
\begin{align*}
\Gamma_S(h)&:=
\mathbb{E}[S_tS_{t-h}^T].
\end{align*} 
Due to weak stationarity, the missing time index in the notation $\Gamma_S(h)$ is not a problem. Also due to weak stationarity, $\Gamma_S(-h)=(\Gamma_S(h))^T$ (where $^T$ denotes the transpose).

When referring to a specific entry of $\Gamma_S(h)$, say the covariance between $O^1_{t-h}$ and $U^1_t$, we typically write $\Gamma_{U^1_tO^1_{t-h}}$ or $\Gamma_{O^1_{t-h}U^1_t}$ instead of $(\Gamma_S(h))_{1d_U+1}$. Besides, for finite sets of variables $R:=\{r_1,r_2,\ldots\}$ and $S:=\{s_1,s_2,\ldots\}$, we write $\Gamma_{R,C}$ to denote the matrix with entries $(\Gamma_{R, S})_{ij} = \Gamma_{r_i, s_j}$. Lastly, we write $\Gamma^{\textnormal{obs}}_S(h)$ to denote the submatrix of $\Gamma_S(h)$ just refererring to the observed time series.

\textbf{Full time graph:} 
One can represent Equation \eqref{var_process} by the so-called \emph{full time graph} [e.g., \citealp{peters2013causal}, Section 10 in \citealp{peters2017elements, malinsky2018causal, runge2023causal, gerhardus2024characterization}].\footnote{One could of course also use Equation \eqref{var_process_2} as a basis for the graphical representation, however, we are not interested in doing so in this paper.} In our paper, the full time graph formally is a tuple $(V,E)$ where $V$ is the vertex/node set and $E$ is the edge set. The elements of $V$ are called vertices/nodes (or, in slight abuse of notation, variables) and the elements of $E$ (directed) edges. In our paper, the full time graph has infinitely many vertices, namely $S^i_t$ for all $i\in[d]_1$ and $t\in\mathbb{Z}$. There is a directed edge from $S^{i}_{t-h}$ to $S^{j}_t$ for all $t\in \mathbb{Z}$, written as $S^{i}_{t-h}\rightarrow S^{j}_t$, if and only if the ($j,i$)-th entry of $A^{(h)}$ is nonzero. For example, if $d_U=1$ and $A^{(4)}_{21}\neq 0,$ then there is a directed edge from $U^1_{t-4}$ to $O^1_t$ for all $t\in \mathbb{Z}$. If the $(j,i)$-th entry of $A^{(h)}$ is zero or if $h>p$, then there is no directed edge from $S^{i}_{t-h}$ to $S^{j}_t$. We call an edge \emph{instantaneous} or \emph{contemporaneous}, if the edge is from some $S^i_t$ to some $S^j_t$, so if the corresponding time lag is $h=0$. For examples of full time graphs, see Figure \ref{example_graph_1} in this section and Figures \ref{ex_full_time_graph_app_1}, \ref{ex_full_time_graph_app_2} and \ref{ex_full_time_graph_app_3} in Section \ref{sec_examples} of the Appendix.

 We utilize the following terminology and notation when talking about the full time graph: 
 To denote a vertex in the full time graph without paying respect to the time step or the particular component time series, we use the lower-case letters $v$, $a$, $b$, $q$ and so on. We sometimes also index these lower-case letters by some index $i\in \mathbb{N}_{\geq 1}$, so write $v_i$, $a_i$, $b_i$, $q_i$ and so on when we refer to several such variables; the index $i$ in this case is not a time index but rather refers to some underlying enumeration of these vertices. If $a\rightarrow b$ is a directed edge in the full time graph, then $\textnormal{lag}(a\rightarrow b)\in\mathbb{N}_{\geq 0}$ denotes the lag between $a$ and $b$, that is, the time index of $b$ minus the time index of $a$. For a set of vertices $B\subseteq V$, we write $t_{\textnormal{inf}}(B)$ and $t_{\textnormal{sup}}(B)$ to denote the (potentially infinite) infimum and supremum time indices of $B$, respectively. Similarly, for a vertex $v$, we write $t(v)$ for its time index. Furthermore, for a set of variables $B$ and a component time series $\{S^i_t\}_{t\in\mathbb{Z}}$, we write $B_{S^i}:=B\cap \{S^i_t\}_{t\in\mathbb{Z}}$.
 Besides, we sometimes refer to variables that are part of the time series $\{S^{i}_t\}_{t\in\mathbb{Z}}$ as $S^{i}$-variables, $S^{i}$-nodes or $S^{i}$-vertices. We sometimes also use the notation $A^{(h)}_{w v}:=A^{(h)}_{wS^k_{t-h}}:=A^{(h)}_{S^j_t v}:=A^{(h)}_{S^j_tS^k_{t-h}}:=A^{(h)}_{S^jS^k}:=A^{(h)}_{jk}$ where $v=S^k_{t-h}$ and $w=S^j_{t}$.
 
 We now introduce some standard graphical terminology [e.g., \citealp{lauritzen1996graphical, pearl2009causality}]: We call for a vertex $v\in V$ the set of all vertices from which edges point towards $v$ the \emph{parents} of $v$. We denote the parents of $v$ by $\textnormal{pa}(v)$ and more formally, $\textnormal{pa}(v):=\bigcup_{a\in V:\;a\rightarrow v} \{a\}$. For a set of vertices $B\subseteq V$, we write $\textnormal{pa}(B):=\bigcup_{v\in B} \textnormal{pa}(v)$. Additionally, we call for a vertex $v\in V$ the set of all vertices to which $v$ points the \emph{children} of $v$. We denote the children of $v$ by $\textnormal{ch}(v)$ and more formally, $\textnormal{ch}(v):=\bigcup_{a\in V:\;v\rightarrow a} \{a\}$. For a set of vertices $B\subseteq V$, we write $\textnormal{ch}(B):=\bigcup_{v\in B} \textnormal{ch}(B)$. From Assumption \ref{assumption_no_instantaneous_self_edges} later in this section, it follows that vertices are never parents or children of themselves in our paper.
 Moreover, we occasionally call the parents of $v$ that are part of the time series $\{S^{j}_t\}_{t\in\mathbb{Z}}$ the $S^{j}$-parents of $v$; similarly for children. For example, we call the parents of $v$ that are part of $\{O^1_t\}_{t\in\mathbb{Z}}$ the $O^1$-parents of $v$. Furthermore, to denote the latent and observed parents of some vertex $v\in V$, we write $\textnormal{pa}^{\textnormal{lat}}(v)$ and $\textnormal{pa}^{\textnormal{obs}}(v)$, respectively; similarly, we write $\textnormal{pa}^{\textnormal{lat}}(B)$ and $\textnormal{pa}^{\textnormal{obs}}(B)$ for a set of vertices $B\subseteq V$. The notation for children is analogous.
 
 We call a not necessarily finite sequence of not necessarily distinct vertices $(v_1,v_2,\ldots)$, such that every pair of successive vertices in this sequence is connected by a directed edge pointing in any direction, a \emph{walk}. If all vertices with the potential exception of the endpoint vertices in a walk are unique, then we call a walk a \emph{path}. If the path is just a sequence of one vertex, then we call that path \emph{trivial}. If every edge in a path is pointing in the same direction and away from the starting vertex $v_1$ (towards the potential end-vertex), then we call a path \emph{directed}. If both endpoint vertices in a directed path are equal, then we call a path a \emph{directed cycle}. (Assumption \ref{assumption_acyclicity} later in this section ensures that no directed cycles exist in our paper). We call a vertex $v\in V$ an ancestor of another vertex $w\in V$ and $w$ a descendant of $v$ if there is a directed path from $v$ to $w$ or if $v=w$. For a vertex $v\in V$, we write $\textnormal{an}(v)$ and $\textnormal{dec}(v)$ to denote the set of all ancestors and descendants, respectively; similarly for a set of vertices $B\subseteq V$, we write $\textnormal{an}(B):=\bigcup_{v\in B}\textnormal{an}(v)$ and $\textnormal{dec}(B):=\bigcup_{v\in B}\textnormal{dec}(v)$, respectively. To just denote the latent or observed ancestors or descendants, we write $\textnormal{an}^{\textnormal{lat}}(v)$, $\textnormal{an}^{\textnormal{obs}}(v)$, $\textnormal{dec}^{\textnormal{lat}}(v)$ and $\textnormal{dec}^{\textnormal{obs}}(v)$, respectively; similarly for a set of vertices $B\subseteq V$. Besides, for a triplet of subsequent vertices $(v_{i-1}, v_i, v_{i+1})$ in a walk $\pi$ where $i\geq 2$ is the enumeration index of $\pi$ and $v_i$ is a non-endpoint vertex of $\pi$, we call $v_i$ a collider in $\pi$ if that triplet is $v_{i-1}\rightarrow v_i\leftarrow v_{i+1}$. Otherwise, we call a non-endpoint vertex $v_i$  of $\pi$ a non-collider in $\pi$ in that triplet. For a path $\pi$, we typically only write ``collider/non-collider in $\pi$" without writing ``in that triplet" as each vertex except for the potential exception of the end-point vertices in $\pi$ is unique. Furthermore, for a walk $\pi$ and a set of vertices $C\subseteq V$, we say that $C$ $d$-blocks $\pi$ if there is a non-collider in $\pi$ that is an element of $C$ or if there is a collider in $\pi$ that does not have a descendant in $C$. For two disjoint sets of vertices $B_1\subseteq V$ and $B_2\subseteq V$, we say that $C$ $d$-blocks $B_1$ and $B_2$ if every path between $B_1$ and $B_2$ is $d$-blocked by $C$.

Full time graphs exhibit certain regularity patterns (see, for example, \citet{gerhardus2024characterization}). For example, they have a \textit{repeating edges property}:  The existence of an edge $S^{i}_{t-h}\rightarrow S^{j}_t$ for some $t\in \mathbb{Z}$ implies the existence of edges $S^{i}_{\Tilde{t}-h}\rightarrow S^{j}_{\Tilde{t}}$ for all other $\Tilde{t}\in\mathbb{Z}$.

For the remainder of this paper, we make the following assumptions.
\begin{assumption}
\label{assumption_no_instantaneous_self_edges}
For all $i\in[d]_1$, we assume that $(A^{(0)})_{ii}=0$.
\end{assumption}

\begin{assumption}
\label{assumption_acyclicity}
We always assume that $A^{(0)}$ is such that the full time graph does not have directed cycles.
\end{assumption}
Note that Assumption \ref{assumption_no_instantaneous_self_edges} means that there are no (contemporaneous) self-edges. Also note that Assumption \ref{assumption_acyclicity} is always satisfied if there are no instantantaneous edges in the full time graph, since by construction there are never edges from a strictly larger time index to a strictly smaller time index in the full time graph. Moreover, note that Assumptions \ref{assumption_no_instantaneous_self_edges} and \ref{assumption_acyclicity} imply that $A^{(0)}$ can be permuted into a strictly-lower triangular matrix, thus showing that $I_d-A^{(0)}$ is invertible.

We further assume the following.
\begin{assumption}
\label{assumption_latents_have_no_observed_parents}
There are no directed edges from observed variables to latent variables. That is, $A^{(h)}_{U^jO^k}=0$ for all $h\in[p]_0$ and $j\in [d_U]_1$ and $k\in [d_O]_1$.
\end{assumption}

Assumption \ref{assumption_latents_have_no_observed_parents} or similar assumptions also occur in related work: The instrumental time series approach from \citet{thams2022identifying} also makes this assumption, and an approach to identify LSEMs by explicity modelling latent variables \citep{barber2022half} makes an even stronger assumption by assuming that all latent variables are source nodes and independent of each other, which in the time series setting is already violated for any autodependent latent confounder. In addition, we allow different latent time series to be parents of each other, which \citet{barber2022half} also exclude.

\textbf{Lag notation:} For two component time series $\{S^i_t\}_{t\in\mathbb{Z}}$ and $\{S^j_t\}_{t\in\mathbb{Z}}$ with $i,j\in[d]_1$, we write $m_{S^iS^j}$ to denote the number of lags $h\in [p]_0$ for which $A^{(h)}_{S^iS^j}\neq 0$  and, if $m_{S^iS^j}>0$, we write $l^{S^iS^j}_1,\ldots, l^{S^iS^j}_{m_{S^iS^j}}$ for the particular lags $h\in [p]_0$  in increasing order for which $A^{(h)}_{S^iS^j}\neq 0$.
In the full time graph, we call the corresponding edges $l^{S^iS^j}_k$-edges. For the special case $i=j$, we typically just write $m_{S^i}$ and $l^{S^i}_1,\ldots, l^{S^i}_{m_{S^i}}$.

For an $l^{S^iS^j}_k$-edge and a vertex $v\in \{S^i_t\}_{t\in\mathbb{Z}}$, we write $\textnormal{pa}(v,l^{S^iS^j}_k)$ for the (unique) $S^j$-variable that is a parent of $v$ and connected to $v$ via an $l^{S^iS^j}_k$-edge; similarly, for a set of vertices $B\subseteq \{S^i_t\}_{t\in\mathbb{Z}}$, we write $\textnormal{pa}(B,l^{S^iS^j}_k)$ for the $S^j$-variables that are parents of $B$ and connected to some element of $B$ via an $l^{S^iS^j}_k$-edge.
\addtocounter{theorem}{-3}
\begin{example}
\label{example_lag_notation}
For the full time graph from Figure \ref{example_graph_1} it holds that $m_U=1$, $l^U_1=1$, $m_Y=1$, $l^Y_1=3$, $m_{YU}=1$ and $l^{YU}_1=1$. \demo
\end{example}
\addtocounter{theorem}{2}
\begin{figure}[h]
	\centering
	\includegraphics[scale=0.57]{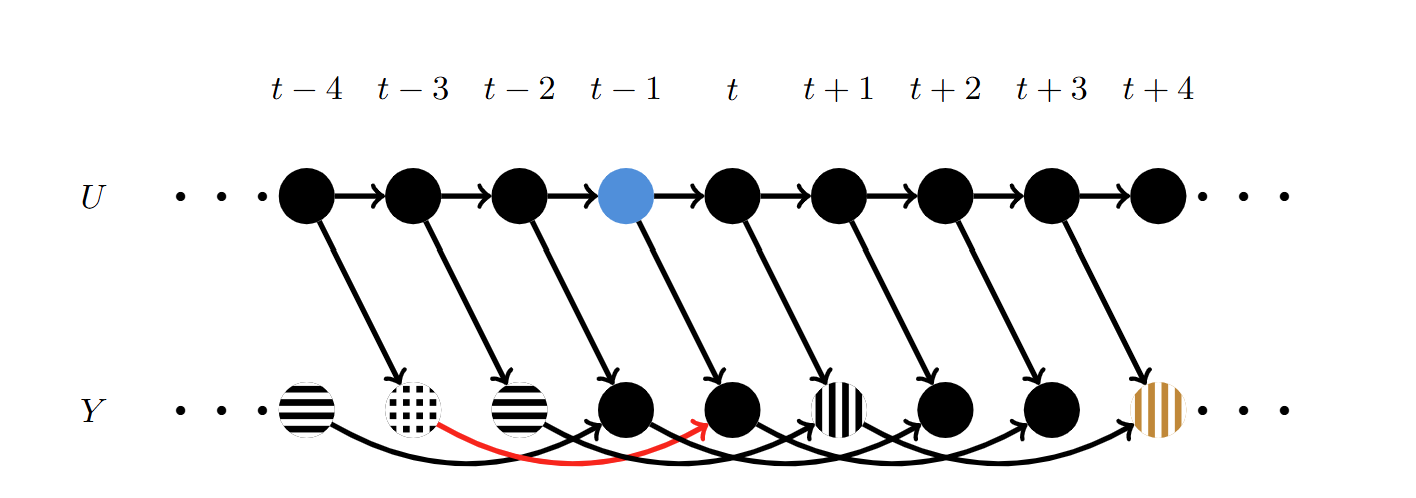}
	\caption{Example full time graph for Examples \ref{example_lag_notation}, \ref{ex_treks} and \ref{example1}. The different colors and hatchings are only relevant for Example \ref{example1}. For Example \ref{example1}: The \color{red} red \color{black} edge corresponds to the parameter $A^{(3)}_{YY}$ that one wants to identify. The \color{c1} blue \color{black} vertex is the only element of $B_U$ and the \color{c2} yellow vertex \color{black} is the only element of $F^{\textnormal{obs}}$. The vertices with vertical hatching are elements of $C$ and the vertices with horizontal hatchings are elements of $R$ (vertices with grid hatchings are both in $R$ and $C$).}
	\label{example_graph_1}
\end{figure}
\textbf{Generic identifiability:} In this paper, we study identifiability of direct causal effects in \emph{generic settings}---a notion which we have also borrowed (and slightly modified) from the LSEM literature [e.g., \citealp{sullivant2010trek}, \citealp{foygel2012half}, \citealp{barber2022half}].
Roughly speaking, a direct causal effect $A^{(h)}_{jk}$ for a given full time graph is identifiable for a certain fixed value if all other values for this direct causal effect yield different observed covariances. And roughly speaking, genericity means that the respective causal effect is identifiable for nearly all parameter choices that yield stable SVAR processes.

For a more precise definition of identifiability, let $\mathbb{R}^{d_O\times d_O}$ denote the set of all real-valued $d_O\times d_O$-matrices and let $\mathbb{R}^d_{\geq 0}$ denote the $d$-times Cartesian product of the non-negative real numbers.
	For a given full time graph $G$, enumerate the edges (so one edge for each
	non-zero entry of a coefficient matrix $A^{(0)},\ldots, A^{(p)}$) by $(h_1,j_1,k_1),\allowbreak \ldots,\allowbreak (h_N,j_N,k_N)$ where the $h$-index corresponds to the lag of the edge, the $j$-index corresponds to the component time series of the end-point vertex of that edge, and the $k$-index corresponds to the component time series of the start vertex of that edge.
Define the set $\Theta_{\textnormal{stable}}(G)$ 
	by
	\begin{align*}
	\Theta_{\textnormal{stable}}(G)&:=\{(\theta_1,\ldots,\theta_N)\in \mathbb{R}_{\neq 0}^{N}:\;\theta_i= A^{(h_i)}_{j_ik_i} \textnormal{ for all $i\in[N]_1$} \\
	& \hspace{1cm}\textnormal{ and } A^{(0)},\ldots,A^{(p)} \textnormal{ satisfy the stability condition}\},
	\end{align*}
where $\mathbb{R}_{\neq 0}^{N}$ denotes the $N$-times Cartesian product of the real numbers excluding zero.
Define the map 
	\begin{align*}
	\phi_G:\;&\Theta_{\textnormal{stable}}(G)\times \mathbb{R}^d_{\geq 0}\rightarrow \{(B_1,B_2,\ldots)\in \mathbb{R}^{d_O\times d_O}\times \mathbb{R}^{d_O\times d_O} \times \cdots  \},\\
	&(\theta_1,\ldots, \theta_N,\Sigma_{11},\ldots, \Sigma_{dd})\mapsto (\Gamma^{\textnormal{obs}}_S(0),\Gamma^{\textnormal{obs}}_S(1),\Gamma^{\textnormal{obs}}_S(2),\ldots),
	\end{align*}
	where $\{S_t\}_{t\in\mathbb{Z}}$ is any SVAR process
	with coefficient matrix entries $\theta_1,\allowbreak\ldots,\allowbreak\theta_N$  and variances  $\Sigma_{11},\allowbreak\ldots,\allowbreak \Sigma_{dd}$.
	For given $(\theta_1,\ldots,\theta_N,\Sigma_{11},\ldots, \Sigma_{dd})\in \Theta_{\textnormal{stable}}(G)\times \mathbb{R}^d_{\geq 0}$, we say that some direct causal effect $A^{(h)}_{jk}=:\theta_{i_0}$ with $i_0\in[N]_1$ is identifiable if for all $(\Tilde{\theta}_1,\ldots,\Tilde{\theta}_N,\Tilde{\Sigma}_{11},\ldots, \Tilde{\Sigma}_{dd})\in \Theta_{\textnormal{stable}}(G)\times \mathbb{R}^d_{\geq 0}$ 
	such that $\phi_G(\Tilde{\theta}_1,\ldots,\Tilde{\theta}_N,\Tilde{\Sigma}_{11},\ldots, \Tilde{\Sigma}_{dd})=\phi_G(\theta_1,\ldots,\theta_N,\Sigma_{11},\ldots, \Sigma_{dd})$ it holds that $\Tilde{\theta}_{i_0}=\theta_{i_0}$.

Regarding a more precise definition of genericity: For a given full time graph $G$, we say that some statement holds in generic settings if it holds for all $(\theta_1,\ldots,\theta_N,\Sigma_{11},\ldots, \Sigma_{dd})\in (\Theta_{\textnormal{stable}}(G)\times \mathbb{R}^d_{\geq 0})\setminus W$ where $W$ is a set with Lebesgue measure zero. Here, note that $\Theta_{\textnormal{stable}}(G)\times \mathbb{R}^d_{\geq 0}$ itself has positive Lebesgue measure---we refer to Section \ref{sec_proofs_sec_causal_effects} in the Appendix for a proof of this fact.

\section{Main identifiability result}
\label{sec_main_identifiability_result}
As mentioned in the introduction, for identifying the direct causal effect of some
$X_{t-h}$ on $Y_t$, we rely on solving a linear equation system of the form
\begin{align}
\label{main_lin_eq}
\Gamma_{R,Y_t} = \Gamma_{R, C} \cdot v,
\end{align}
where $R$ and $C$ are sets of observed variables at various time points.
Before introducing the (sufficient) theorem stating for which choices of $R$ and $C$ and under which further assumptions the direct causal effect of question can be identified from Equation \eqref{main_lin_eq}, we introduce further required graphical concepts. Our exposition for these graphical concepts partially follows the existing LSEM-paper from \citet{foygel2012half}.

	\begin{definition} [Treks]
		\label{def_trek_rule_var_process}
	Let $t_1,t_2\in\mathbb{Z}$ be arbitrary but fixed time points. A \emph{trek} in the full time graph from \emph{source} $S^{i}_{t_1}$ to \emph{target} $S^{j}_{t_2}$ is a \emph{finite} albeit arbitrary long walk that does not have any colliders.  That is, a trek from $S^{i}_{t_1}$ to $S^{j}_{t_2}$ takes the form
	\begin{align*}
S^{i}_{t_1}=v^L_l\leftarrow v^L_{l-1}\leftarrow \cdots \leftarrow v^L_1 \leftarrow v^T\rightarrow v^R_1\rightarrow \cdots \rightarrow v^R_{r-1}\rightarrow v^R_{r}=S^{j}_{t_2}.
\end{align*}	
For a trek $\pi$, we define the left-part of $\pi$ by $\textnormal{Left}(\pi):=\{v^T,v^L_1,\ldots,\allowbreak v^L_l\}$ and the right-part of $\pi$ by $\textnormal{Right}(\pi):=\{v^T,v^R_1,\ldots,v^R_r\}$. We allow $\textnormal{Left}(\pi)=\{v^T\}$ or $\textnormal{Right}(\pi)=\{v^T\}$. If 
$\textnormal{Left}(\pi)=\{v^T\}=\textnormal{Right}(\pi),$ then we say that $\pi$ is trivial. 
The top node of the trek is $v^T$.
We denote the set of all treks from $S^{j}_{t_1}$ to $S^{i}_{t_2}$ by $\mathcal{T}(S^{i}_{t_1},S^{j}_{t_2})$ and the set of all treks from $S^{j}_{t_1}$ to $S^{i}_{t_2}$ with top node $v^T$ by $\mathcal{T}(S^{i}_{t_1},S^{j}_{t_2},v^T)$. 
\end{definition}
\addtocounter{theorem}{-3}
\begin{example}
\label{ex_treks}
	Consider the full time graph in Figure \ref{example_graph_1}. Two examples of treks from $U_{t-1}$ to $Y_t$ are
	\begin{align*}
&\pi_1: U_{t-1}\rightarrow Y_t \textnormal{ and }\\
&\pi_2: U_{t-1}\leftarrow U_{t-2}\leftarrow U_{t-3}\leftarrow U_{t-4}\rightarrow Y_{t-3}\rightarrow Y_{t}.
	\end{align*}
	The top node of $\pi_1$ is $U_{t-1}$. Moreover, $\textnormal{Left}(\pi_1)=\{U_{t-1}\}$ and $\textnormal{Right}(\pi_1)=\{U_{t-1},\allowbreak Y_t\}$. Similarly, the top node of $\pi_2$ is $U_{t-4}$. Moreover, $\textnormal{Left}(\pi_2)=\{U_{t-1},\allowbreak U_{t-2},\allowbreak U_{t-3},\allowbreak U_{t-4}\}$ and $\textnormal{Right}(\pi_2)= \{U_{t-4},\allowbreak Y_{t-3},\allowbreak Y_t\}$.

		The path $U_{t-1}\rightarrow Y_t\leftarrow Y_{t+3}$
	is not a trek because it contains a collider (namely $Y_{t}$).
	\demo
\end{example}
\addtocounter{theorem}{2}
\begin{remark}
\label{remark_collider}
	Once a trek went forward in time or moved forward within the same time point, so once it passed an edge $S^{i_1}_{t_1}\rightarrow S^{i_2}_{t_2}$ (in that order) with $i_1,i_2\in[d]_1$ and $t_1 \leq t_2$, it cannot afterwards move backwards in time, so it cannot after the previous edge contain an edge $S^{i_3}_{t_3}\leftarrow S^{i_4}_{t_4}$ (in that order) with $i_3,i_4\in[d]_1$ and $t_3> t_4$, as that would create a collider. 
\end{remark}

\begin{definition} [Walk/Path \& Trek monomial]
	For a walk/path $\pi$, the \emph{walk/path monomial} of $\pi$ is 
	\begin{align*}
	\pi(A) := \prod_{a\rightarrow b\in \pi}A^{(\textnormal{lag}(a \rightarrow b))}_{ba}.
	\end{align*}
		Here, the symbol $A$ in $\pi(A)$ is shorthand for the coefficient matrices $A^{(0)},\ldots,A^{(p)}$.	 By convention, we set $\pi(A)=1$ if $\pi$ is trivial.
	
	Similarly, for a trek $\pi$ with top node $v$, the \emph{trek monomial} of $\pi$ is 
	\begin{align*}
	\pi(A, \Sigma) := \Sigma_{vv}\cdot\pi(A).
	\end{align*}

\end{definition}
\addtocounter{theorem}{-5}
\begin{example}[continued]
 The path monomial of $\pi_1$ is $\pi(A)=A^{(1)}_{21}$ and the trek monomial of $\pi_1$ is $\pi_1(A,\Sigma)=\Sigma_{11}A^{(1)}_{21}$.
	Similarly, the path monomial of $\pi_2$ is $\pi_2(A)=(A^{(1)}_{11})^{3}A^{(1)}_{21}A^{(3)}_{22}$ and the trek monomial of $\pi_2$ is $\pi_2(A,\Sigma)=\Sigma_{11}(A^{(1)}_{11})^{3}A^{(1)}_{21}A^{(3)}_{22}$.
	\demo
\end{example}
\addtocounter{theorem}{4}
\begin{definition}[System of treks/directed paths]
	Let $R$ and $C$ be finite sets of vertices such that $n:=|R|=|C|$. A \textit{system of treks/directed paths} $\Pi$ from sources $R:=\{r_1,\ldots,r_n\}$ to targets $C:=\{c_1,\ldots,c_n\}$, abbreviated by $\Pi:R\rightrightarrows C$ respectively $\Pi:R\rightarrow C$, is a set of treks/directed paths $\Pi:=\{\pi_1,\ldots,\pi_n\}$ for which each $\pi_i$ goes from $r_i$ to $c_{\sigma(i)}$ for some permutation $\sigma \in \mathcal{S}_n$.\footnote{We write $\mathcal{S}_n$ to denote the symmetric group of order $n$.}  
	 We also say that $\Pi$ has permutation $\sigma$ and sometimes, to make the order induced by $\sigma$ clear, write $\Pi:R\rightrightarrows(c_{\sigma(1)},\ldots,c_{\sigma(n)})$ respectively $\Pi:R\rightarrow(c_{\sigma(1)},\ldots,c_{\sigma(n)})$.  The sign of the system of treks/directed paths is the sign of the permutation $\sigma$, that is, $\textnormal{sgn}(\Pi):=\textnormal{sgn}(\sigma)$.
	
	We say that a system of directed paths $\Pi$ has \textit{no intersections} if the vertices of every directed path in $\Pi$ do not occur in any other directed path in $\Pi$.
	
	We say that a system of treks $\Pi$ has \textit{no sided intersection} if 
	\begin{align*}
	\textnormal{Left}(\pi_i)\cap \textnormal{Left}(\pi_j)=\emptyset = \textnormal{Right}(\pi_i)\cap \textnormal{Right}(\pi_j)\;\;\; \forall i\neq j.
	\end{align*}
	
	If $\Pi$ is a system of directed paths, then we define the monomial of $\Pi$ by
	\begin{align*}
	\Pi(A):=\prod_{\pi\in\Pi}\pi(A).
	\end{align*}
	
	If $\Pi$ is a system of treks, then we define the monomial of $\Pi$ by
	\begin{align*}
	\Pi(A,\Sigma):=\prod_{\pi\in\Pi}\pi(A,\Sigma).
	\end{align*}
\end{definition}
We are now ready to state the main theorem of this paper. Note that this theorem identifies all direct causal effects of all observed parents of $Y_t$ onto $Y_t$ \emph{at once}, and thus this theorem does not contain the symbol $X_{t-h}$ which represents some \emph{specific} observed parent.

\begin{theorem} (Main identifiability result)
\label{main_theorem}
Assume a stable SVAR process satisfying Assumptions \ref{assumption_no_instantaneous_self_edges}, \ref{assumption_acyclicity} and \ref{assumption_latents_have_no_observed_parents}.
Furthermore, assume that in the full time graph one has
\begin{enumerate}
    \item a finite set of latent variables $B_U$ such that for each latent parent $q$ of $Y_t$ it either holds that
    \begin{enumerate}
        \item $q\in B_U$, or
        \item all directed paths from the latent vertices with time index strictly smaller than $t_{\textnormal{inf}}(B_U)$ to $q$ are d-blocked by $B_U$; and
    \end{enumerate}
    \item    a finite set of observed variables $F^{\textnormal{obs}}$ such that
    \begin{enumerate}
        \item $|F^{\textnormal{obs}}|=|B_U|$, and
        \item such that for each latent parent $\Tilde{q}$ of $F^{\textnormal{obs}}$ it either holds that
        \begin{itemize}
            \item $\Tilde{q} \in B_U$, or
            \item all directed paths from the latent vertices with time index strictly smaller than $t_{\textnormal{inf}}(B_U)$ to $\Tilde{q}$ are d-blocked by $B_U$; and
        \end{itemize}
        \item there exists a system of directed paths $\Upsilon:B_U\rightarrow F^{\textnormal{obs}}$ with no intersections  such that
        \begin{itemize}
            \item all vertices in each directed path except the last one are latent, and
            \item for every other system of directed paths $\Pi:B_U\rightarrow F^{\textnormal{obs}}$ with no intersections for which all vertices in each directed path except that last one are latent it holds that $\Pi(A)\neq \Upsilon(A)$.
        \end{itemize}
    \end{enumerate}
\end{enumerate}
Furthermore, assume that 
\begin{enumerate}
\setcounter{enumi}{2}
    \item $\bigr(F^{\textnormal{obs}}\cup \textnormal{pa}^{\textnormal{obs}}(F^{\textnormal{obs}})\bigr)\cap \bigr(\{Y_t\}\cup\textnormal{pa}^{\textnormal{obs}}(Y_t)\bigr)=\emptyset.$
\end{enumerate}

Define $C:=F^{\textnormal{obs}}\cup \textnormal{pa}^{\textnormal{obs}}(F^{\textnormal{obs}}) \cup \textnormal{pa}^{\textnormal{obs}}(Y_t)$. Let $R$ be a finite set of variables such that $|R|=|C|$ and such that
\begin{enumerate}
\setcounter{enumi}{3}
    \item no element in $R$ is a descendant of $\textnormal{ForbAn}:=F^{\textnormal{obs}}\cup \{Y_t\}\,\cup\, \bigr(\textnormal{an}^{\textnormal{lat}}\bigr(F^{\textnormal{obs}}\cup \{Y_t\})\setminus \textnormal{an}^{\textnormal{lat}}(B_U)\bigr)$.
\end{enumerate}

Then, 
\begin{align*}
\Gamma_{R,Y_t} = \Gamma_{R, C} \cdot v
\end{align*}
holds in generic settings for a vector $v$
 for which, letting $i$ be such that the $i$-th colum of $\Gamma_{R,C}$ corresponds to covariances induced by the observed parent $q$ of $Y_t$, the $i$-th component of $v$ contains the direct causal effect of $q$ on $Y_t$.
 (The other components of $v$ not related to observed parents of $Y_t$ are certain functions of the parameters, which are, however, not interesting for the remainder of this work).
 
 If there in addition is 
 \begin{enumerate}
\setcounter{enumi}{4}
\item a system of treks $\Psi:R \rightrightarrows C$ with no sided intersection such that for all other systems of treks $\Pi:R \rightrightarrows C$ with no sided intersection it holds that $\Pi(A,\Sigma)\neq \Psi(A,\Sigma)$,
\end{enumerate}
  then, in generic settings, the matrix $\Gamma_{R,C}$ in equation \eqref{main_lin_eq} is invertible and hence, this solution $v$ is unique.
\end{theorem}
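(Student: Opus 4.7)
The strategy is to explicitly construct a solution vector $v$ using the SVAR structural equations, and then, under condition 5, to establish generic invertibility of $\Gamma_{R,C}$ via a trek-rule determinant identity. First I would apply the SVAR equation $Y_t = \sum_{q\in\textnormal{pa}(Y_t)} A_{Y_t,q}\,q + \epsilon_{Y_t}$ and take covariance with each $r\in R$. Because condition 4 prevents any $r\in R$ from being a descendant of $Y_t$, the noise covariance $\textnormal{Cov}(r,\epsilon_{Y_t})$ vanishes, giving
\[\Gamma_{R,Y_t} \;=\; \sum_{q\in \textnormal{pa}^{\textnormal{obs}}(Y_t)} A_{Y_t,q}\,\Gamma_{R,q} \;+\; \sum_{q\in \textnormal{pa}^{\textnormal{lat}}(Y_t)} A_{Y_t,q}\,\Gamma_{R,q}.\]
Analogous decompositions for each $f\in F^{\textnormal{obs}}$ (whose noise terms also drop since $F^{\textnormal{obs}}\subseteq\textnormal{ForbAn}$) rewrite the latent contribution of $f$ as $L_f := \Gamma_{R,f}-\sum_{q\in\textnormal{pa}^{\textnormal{obs}}(f)}A_{f,q}\,\Gamma_{R,q}$, which already lies in the column span of $\Gamma_{R,C}$. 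The candidate $v$ has entries $v_c=A_{Y_t,c}$ for $c\in\textnormal{pa}^{\textnormal{obs}}(Y_t)$, $v_c=\lambda_c$ for $c\in F^{\textnormal{obs}}$, and $v_c=-\sum_f\lambda_f A_{f,c}$ for $c\in\textnormal{pa}^{\textnormal{obs}}(F^{\textnormal{obs}})$; the disjointness in condition 3 makes this well-defined. The target equation then collapses to the identity $L_{Y_t} = \sum_f \lambda_f L_f$ between ``latent covariance contributions'' indexed by $R$.

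To produce $\lambda$, I would expand each $\Gamma_{R,q}$ with latent $q$ via the trek rule as $\sum_{v^T} c_{v^T,r}\,c_{v^T,q}\,\Sigma_{v^Tv^T}$, summed over common ancestors $v^T$. Assumption \ref{assumption_latents_have_no_observed_parents} forces every top $v^T$ to be latent (no directed edge runs from an observed vertex into a latent one), and since $v^T$ is also an ancestor of some $r\in R$ and a latent ancestor of $Y_t$ or of some $f\in F^{\textnormal{obs}}$, the definition of $\textnormal{ForbAn}$ in condition 4 forces $v^T\in \textnormal{an}^{\textnormal{lat}}(B_U)$. Conditions 1 and 2b then guarantee that every directed latent path from such a $v^T$ to a latent parent of $Y_t$ or $F^{\textnormal{obs}}$ either starts in $B_U$ or passes through $B_U$; splitting each path at its first $B_U$-vertex $b$ refactors the contributions as $\sum_{b\in B_U}\nu_{v^T,b}\,\mu_{b,Y_t}$ and $\sum_{b\in B_U}\nu_{v^T,b}\,\mu_{b,f}$, where $\mu_{b,\cdot}$ depends only on directed latent paths starting at $b$. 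Substituting back, the identity $L_{Y_t}=\sum_f\lambda_f L_f$ collapses to the square linear system $\mu_{\cdot,Y_t}=M\lambda$ in $|B_U|=|F^{\textnormal{obs}}|$ unknowns, with $M_{b,f}=\mu_{b,f}$. Condition 2c asserts exactly that $\det M$ contains a unique maximal-monomial contributor (the system $\Upsilon$), so $\det M$ is a nonzero polynomial in the structural parameters and $\lambda$ exists outside a Lebesgue-null set, producing the required $v$ with $v_c=A_{Y_t,c}$ for every observed parent $c$ of $Y_t$.

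For the uniqueness statement under condition 5, I would invoke the trek-rule determinantal identity adapted from the LSEM literature (\citealp{sullivant2010trek, foygel2012half}):
\[\det(\Gamma_{R,C}) \;=\; \sum_{\Psi\,:\,R\rightrightarrows C,\ \textnormal{no sided intersection}} \textnormal{sgn}(\Psi)\,\Psi(A,\Sigma).\]
This identity needs to be transferred to the infinite SVAR setting; the moving-average representation together with stability ensures absolute convergence of the underlying covariance sums, while every individual system of treks appearing on the right-hand side is itself finite, so the standard Lindstr\"om--Gessel--Viennot style argument carries through. Condition 5 then singles out a system $\Psi$ whose trek monomial is distinct from that of every other admissible system; that monomial therefore survives all cancellations in the expansion, so $\det(\Gamma_{R,C})$ is a nonzero polynomial in the parameters and vanishes only on a Lebesgue-null set, which forces uniqueness of $v$ in generic settings.

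The main obstacle I anticipate is the path-splitting step of the second paragraph: condition 1 literally only controls directed latent paths originating at time indices strictly smaller than $t_{\textnormal{inf}}(B_U)$, while the case analysis on top nodes also produces candidates $v^T\in\textnormal{an}^{\textnormal{lat}}(B_U)\setminus B_U$ with time index $\geq t_{\textnormal{inf}}(B_U)$. A careful combination of conditions 1, 2b and 4 with the acyclicity assumption and the repeating-edges regularity of the full time graph---possibly via an induction on time depth or by bundling such top-node contributions into the $\nu_{v^T,b}$ coefficients---will be needed to ensure that the factorisation through $M$ captures every nonzero trek and that no uncontrolled residual latent contribution remains.
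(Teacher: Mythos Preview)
Your proposal follows essentially the same route as the paper: parent-decompose $\Gamma_{R,Y_t}$ and each $\Gamma_{R,f}$ for $f\in F^{\textnormal{obs}}$, reduce the latent-parent contributions to a square system indexed by $B_U$, use condition~2c for generic invertibility of that system (your $M$ is the paper's matrix $\tilde\Lambda$), and then invoke condition~5 together with a trek-system expansion of $\det\Gamma_{R,C}$ for uniqueness. The one substantive difference is the mechanism for reducing the latent covariances to $B_U$. You expand each $\Gamma_{r,q}$ via the full trek rule and attempt to split the right-hand directed path of every trek at its first $B_U$-vertex---which is exactly where your flagged obstacle appears. The paper never reasons about top nodes: it applies its parent-decomposition lemma \emph{recursively} to $\Gamma_{r,q}$ (decompose over $\textnormal{pa}(q)$, stop whenever a parent lies in $B_U$, continue otherwise), obtaining $\Gamma_{r,q}=\sum_{b\in B_U}\Gamma_{r,b}\,\lambda_{q,b}$ directly, with $\lambda_{q,b}$ the sum of path monomials $b\to q$ avoiding $B_U\setminus\{b\}$. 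Conditions~1 and~2b guarantee the recursion cannot descend below $t_{\textnormal{inf}}(B_U)$ without meeting $B_U$ and hence terminates; condition~4 supplies the non-descendance hypothesis required at each step and forces $\Gamma_{r,w}=0$ for any latent source $w$ reached outside $B_U$. Generic invertibility of $\tilde\Lambda$ is then obtained via the Lindstr\"om--Gessel--Viennot lemma applied to a finite auxiliary graph built from the relevant latent vertices. This recursive viewpoint is what you should adopt to dissolve the top-node difficulty you anticipate.

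One correction on the uniqueness part: $\det\Gamma_{R,C}$ is not a polynomial but an absolutely convergent power series in the structural parameters (infinitely many trek systems contribute). The paper therefore argues generic non-vanishing by showing the determinant is real-analytic on the connected open stable-parameter region and that the monomial $\Psi(A,\Sigma)$ from condition~5 appears with coefficient $\textnormal{sgn}(\Psi)\neq 0$; a non-identically-zero real-analytic function on a connected open domain has Lebesgue-null zero set.
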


\addtocounter{theorem}{-6}
\begin{example}
\label{example1}
    In this example, let $m_U=1$,  $l^U_1=1$, $m_Y=1$, $l^Y_1> 1$, $m_{YU}=1$ and $l^{YU}_1=1$. A possible full time graph for this example is presented in Figure \ref{example_graph_1}; in this full time graph, $l^Y_1=3$.
    
    Assume that one wants to identify the direct causal effect corresponding to the $l^Y_1$-edge, that is, $A^{(l^Y_1)}_{YY}$.
    Note that in this example and for this task, no auxiliary time series (such as an instrument) is available for identification. 
    
    In the following, we manually discuss why Theorem \ref{main_theorem} applies (Corollary \ref{corollary_identifiability} later in Section \ref{sec_sufficient_lag_based_criteria} directly implies this fact).
    For that, define 
    \begin{align*}
    B_U&:=\{U_{t-1}\},\\
    F^{\textnormal{obs}}&:=\{Y_{t+l^Y_1+1}\}.
    \end{align*}
    Note that $\textnormal{pa}^{\textnormal{lat}}(Y_t)=\{U_{t-1}\}$ and $\textnormal{pa}^{\textnormal{lat}}(F^{\textnormal{obs}})=\{U_{t+l^Y_1}\}$ and thus, $B_U$ and $F^{\textnormal{obs}}$ satisfy requirements 1 and 2a and 2b from Theorem \ref{main_theorem}. For 2c, consider $\Upsilon=\{\upsilon_1\}$ with
    \begin{align*}
        \upsilon_1:U_{t-1}\rightarrow U_t \rightarrow \cdots \rightarrow U_{t+l^Y_1}\rightarrow Y_{t+l^Y_1+1}.
    \end{align*}
    Note that every other system of directed paths $\Pi=\{\pi_1\}:B_U\rightarrow F^{\textnormal{obs}}$ with $\Pi(A)=\Upsilon(A)$ needs to have exactly $t+l^Y_1-(t-1)=l^Y_1+1$-many $l^U_1$-edges and exactly one $l^{YU}_1$-edge. As the last edge of $\pi_1$ must be an $l^{YU}_1$-edge (by requirement 2c in Theorem \ref{main_theorem}), it follows that the last edge of $\pi_1$ is as in $\upsilon_1$. The previous edges in $\pi_1$ can thus only be $l^U_1$-edges as only terms for $l^U_1$-edges are left in $\Pi(A)=\Upsilon(A)$. Therefore,  $\Pi=\Upsilon$. 
    
    Next, note that $\textnormal{pa}^{\textnormal{obs}}(F^{\textnormal{obs}})=\{Y_{t+1}\}$ and $\textnormal{pa}^{\textnormal{obs}}(Y_t)=\{Y_{t-l^Y_1}\}$ and hence,
    \begin{align*}
                C&:=\{Y_{t-l^Y_1}, Y_{t+l^Y_1+1}, Y_{t+1}\}.
    \end{align*}
    Now define
    \begin{align*}
                R&:=\{Y_{t-l^Y_1}, Y_{t-l^Y_1+1}, Y_{t-l^Y_1-1}\}.
    \end{align*}
    Because
    \begin{align*}
        \textnormal{ForbAn}=\{U_t, U_{t+1}, \ldots, U_{t+l^Y_1}, Y_t, Y_{t+l^Y_1+1}\},
    \end{align*}
    $R$ satisfies the non-descendance requirement (condition 4) from Theorem \ref{main_theorem}.
    For the system of treks from $\Psi:R\rightrightarrows C$ take
    \begin{align*}
        &\Psi_1:\;Y_{t-l^Y_1}\\
        &\Psi_2:\;Y_{t-l^Y_1+1}\rightarrow Y_{t+1}\\
        &\Psi_3:\; Y_{t-l^Y_1-1}\leftarrow U_{t-l^Y_1-2}\rightarrow U_{t-l^Y_1-1}\rightarrow \cdots \rightarrow U_{t+l^Y_1}\rightarrow Y_{t+l^Y_1+1}. 
    \end{align*}
    This system of treks has no sided intersection because $l^Y_1>1$. It also holds that $\Psi$ has a unique monomial among all other system of treks $\Pi:R\rightrightarrows C$ with no sided intersection---this fact follows because $\Psi$ equals the constructed system of treks in the proof of Lemma \ref{lemma_resid_class1} from Section \ref{sec_sufficient_lag_based_criteria}. (Alternatively, the existence of a system of treks $\Psi:R\rightrightarrows C$ as required by Theorem \ref{main_theorem} directly follows from Lemma \ref{lemma_resid_class1} as we will explain in Section \ref{sec_sufficient_lag_based_criteria}).
   
    Therefore, all requirements from Theorem \ref{main_theorem} are satisfied and thus, the to-be-identified direct causal effect $A^{(l^Y_1)}_{YY}$ is given (in generic settings) by
    \begin{align}
    \label{est_ex_1}
        \begin{pmatrix}
        \color{red}A^{(l^Y_1)}_{YY}\color{black}\\
        \vdots \\
        \textnormal{other terms}\\
        \vdots
        \end{pmatrix}=\begin{pmatrix}
        \Gamma_{Y_{t-l^Y_1}\color{red}Y_{t-l^Y_1}\color{black}} & \Gamma_{Y_{t-l^Y_1}Y_{t+l^Y_1+1}} & \Gamma_{Y_{t-l^Y_1}Y_{t+1}}\\
        \Gamma_{Y_{t-l^Y_1+1}\color{red}Y_{t-l^Y_1}\color{black}} & \Gamma_{Y_{t-l^Y_1+1}Y_{t+l^Y_1+1}} & \Gamma_{Y_{t-l^Y_1+1}Y_{t+1}}\\
        \Gamma_{Y_{t-l^Y_1-1}\color{red}Y_{t-l^Y_1}\color{black}} & \Gamma_{Y_{t-l^Y_1-1}Y_{t+l^Y_1+1}} & \Gamma_{Y_{t-l^Y_1-1}Y_{t+1}}
        \end{pmatrix}^{-1}\cdot
        \begin{pmatrix}
        \Gamma_{Y_{t-l^Y_1}Y_{t}}\\
        \Gamma_{Y_{t-l^Y_1+1}Y_{t}}\\
        \Gamma_{Y_{t-l^Y_1-1}Y_{t}}
        \end{pmatrix}.
    \end{align}
    Here, the red colour indicates which component corresponds to which column.
    For a numerical validation of this example, see Figure \ref{Figure_num_validation_ex_1}. For further examples where not only the identification of direct $Y$-to-$Y$-effects is of interest but also the identification of direct effects from other observed time series onto $Y$, we refer the reader to Section \ref{sec_examples} in the Appendix. Section \ref{sec_examples} in the Appendix also contains an example where $d_U>1$.
    \demo
    	\begin{figure}
    	\centering
		\includegraphics[scale=0.3]{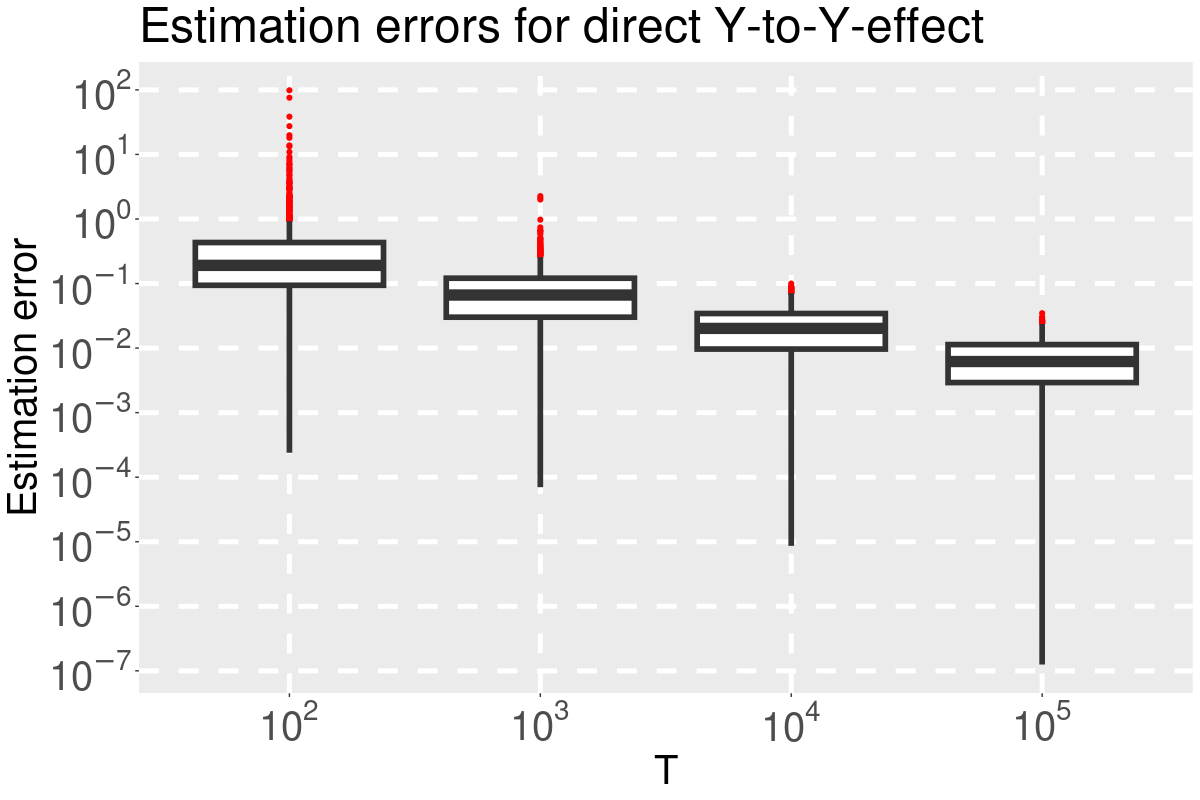}
		\caption{A numerical validation of Example \ref{example1}. For $1000$ different parameters that yield a stable SVAR process inducing the full time graph from Figure \ref{example_graph_1} and time series lengths $T\in\{10^2, 10^3, 10^4, 10^5\}$, we plot the error of the estimate of $A^{(3)}_{YY}$ from equation \eqref{est_ex_1} to the true $A^{(3)}_{YY}$. Remark: In these boxplots, the whisker's outside the boxes correspond to the smallest and largest points within the $1.5$-inner quartile range (calculated on the original scale). Outliers are highlighted in red. The ordinate axis is log10-transformed.}
		\label{Figure_num_validation_ex_1}
	\end{figure}
\end{example}
\addtocounter{theorem}{5}
\begin{proposition}[Consistent Estimation]
\label{proposition_consistency}
One can consistently estimate the solution $v$ in equation \eqref{main_lin_eq} in generic settings assuming that Theorem \ref{main_theorem} applies: For $T_0\in\mathbb{Z}$ and $T\in \mathbb{N}_{\geq 1}$, consider $S_{T_0+1},\ldots,S_{T_0+T}$. Estimate the covariances in $\Gamma_{R,C}$ and $\Gamma_{R,Y_t}$ by using the required entries from the sample autocovariance matrices
\begin{align*}
\hat{\Gamma}_{S_{t-h}S_t}:=\frac{1}{T-h}\sum_{j=h+1 }^TS_{T_0+j-h}S_{T_0+j}^T.
\end{align*}
Writing $\hat{\Gamma}_{R,C}$ and $\hat{\Gamma}_{R,Y_t}$ for the estimators of $\Gamma_{R,C}$ and $\Gamma_{R,Y_t}$, respectively, it then follows that $\hat{v}:=\hat{\Gamma}_{R,C}^{-1}\cdot \hat{\Gamma}_{R,Y_t}$ converges in probability to $v=\Gamma_{R,C}^{-1}\cdot \Gamma_{R,Y_t}$ in generic settings if Theorem \ref{main_theorem} applies.
\end{proposition}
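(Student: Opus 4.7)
The plan is to prove the proposition by a standard two-step continuous-mapping argument: first establish entrywise (probability) convergence of the sample autocovariances used to build $\hat{\Gamma}_{R,C}$ and $\hat{\Gamma}_{R,Y_t}$, and then propagate this convergence through matrix inversion and multiplication. Since Theorem \ref{main_theorem} already supplies that $\Gamma_{R,C}$ is invertible in generic settings, the continuous mapping step is legal.

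For the first step, the core fact I will invoke is that a stable SVAR process with i.i.d.\ finite-second-moment noise is strictly stationary and ergodic; this is already noted in the text, where the moving-average representation $S_t = \sum_{i=0}^\infty \Phi_i \epsilon_{t-i}$ with absolutely summable $\Phi_i$ and finite-second-moment i.i.d.\ $\epsilon_t$ is recalled. Strict stationarity follows from the footnote after Equation \eqref{var_process_2}, and ergodicity follows from the fact that $\{S_t\}$ is a measurable functional of an i.i.d.\ sequence (i.i.d.\ sequences are ergodic, and measurable factor maps preserve ergodicity). For any fixed lag $h \in \mathbb{Z}_{\geq 0}$ and any pair of component indices, the product process $\{S_{T_0+j-h} S_{T_0+j}^T\}_{j}$ is a stationary ergodic sequence with finite first moment (finiteness of $\mathbb{E}\|S_t\|^2$ follows from the MA representation plus finite-second-moment noise). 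Birkhoff's ergodic theorem therefore yields
\begin{align*}
\hat{\Gamma}_{S_{t-h}S_t} = \frac{1}{T-h}\sum_{j=h+1}^{T} S_{T_0+j-h}S_{T_0+j}^T \;\xrightarrow{\text{a.s.}}\; \mathbb{E}[S_{t-h}S_t^T] = \Gamma_S(h),
\end{align*}
and hence $\hat{\Gamma}_{S_{t-h}S_t} \to \Gamma_S(h)$ in probability as well. This applies to each of the finitely many lag-indexed entries that make up $\hat{\Gamma}_{R,C}$ and $\hat{\Gamma}_{R,Y_t}$, so by a union bound (finitely many entries) we obtain joint convergence in probability $\hat{\Gamma}_{R,C} \to \Gamma_{R,C}$ and $\hat{\Gamma}_{R,Y_t} \to \Gamma_{R,Y_t}$.

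For the second step, note that by Theorem \ref{main_theorem}, in generic settings the limit matrix $\Gamma_{R,C}$ is invertible. Matrix inversion is continuous on the open set of invertible matrices, so the continuous mapping theorem yields $\hat{\Gamma}_{R,C}^{-1} \xrightarrow{\mathbb{P}} \Gamma_{R,C}^{-1}$ (strictly speaking, $\hat{\Gamma}_{R,C}$ is invertible on an event whose probability tends to one, and we define $\hat{v}$ arbitrarily on the complement; this does not affect convergence in probability). Another application of the continuous mapping theorem, now to the continuous map $(M, w) \mapsto M w$, gives
\begin{align*}
\hat{v} = \hat{\Gamma}_{R,C}^{-1}\hat{\Gamma}_{R,Y_t} \;\xrightarrow{\mathbb{P}}\; \Gamma_{R,C}^{-1}\Gamma_{R,Y_t} = v.
\end{align*}

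The only delicate point I foresee is the ergodicity/LLN step for $\hat{\Gamma}_{S_{t-h}S_t}$; this is classical for stable VAR processes (see, e.g., the asymptotic results in Chapter 3 of \citet{lutkepohl2005new}) and I would simply cite it rather than reprove it, provided the stability and finite-second-moment assumptions from Section \ref{sec_var_processes} are in force, which they are by hypothesis. Everything else is bookkeeping with the continuous mapping theorem.
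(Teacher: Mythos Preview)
Your proposal is correct and follows essentially the same route as the paper: establish convergence in probability of the sample autocovariances (the paper cites Hamilton's Section 10.2 directly rather than going through ergodicity and Birkhoff, but this is a citation choice, not a substantive difference), then apply the continuous mapping theorem to matrix inversion and multiplication using the generic invertibility of $\Gamma_{R,C}$ from Theorem \ref{main_theorem}. Your extra care about defining $\hat{v}$ on the event where $\hat{\Gamma}_{R,C}$ is singular is a nice touch that the paper omits.
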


\begin{remark}
\label{remark_generalization_main_theorem}
    In fact, it is possible to replace both $\textnormal{pa}^{\textnormal{obs}}(F^{\textnormal{obs}})$ and $\textnormal{pa}^{\textnormal{obs}}(Y_t)$ in Theorem \ref{main_theorem} by supersets just containing finitely many observed variables (we do the proof of Theorem \ref{main_theorem} in Section \ref{sec_proof_of_main_result} directly for this generalization). The coefficients in the solution $v$ corresponding to the induced columns of the additional variables in the superset of $\textnormal{pa}^{\textnormal{obs}}(Y_t)$ are then all zero.
    Similarly, one can replace $\textnormal{pa}^{\textnormal{lat}}(F^{\textnormal{obs}})$ and $\textnormal{pa}^{\textnormal{lat}}(Y_t)$ by supersets just containing finitely many latent variables (again, we do the proof of Theorem \ref{main_theorem} in Section \ref{sec_proof_of_main_result} directly for this generalization). Taking these supersets might be relevant when the full time graph is not entirely known or when one wants to derive estimators that are valid for multiple full time graphs at once (see Section \ref{real_world_example} for such valid-across-different-full-time-graphs-estimators). Assuming only supersets, however, typically makes it more difficult to satisfy the requirements from Theorem \ref{main_theorem}.
\end{remark}

\begin{remark}
Theorem \ref{main_theorem} has some similarities with Theorem 3.7 in \citet{barber2022half} which is an identification result for LSEMs (with finitely many variables): For example, conditions (i) and (ii) in their Theorem 3.7 resemble conditions 2a and 3 in Theorem \ref{main_theorem}, respectively. However, there are also notable differences: Theorem 3.7 in \citet{barber2022half} is recursive, meaning that to identify certain edge parameters other edge parameters need to have been previously identified---in the time series setting with autodependencies and infinitely many edges this recursive structure seems to be rather prohibitive. Moreover, as already mentioned in Section \ref{sec_intro}, \citet{barber2022half} assume that latent variables are source variables and independent, which we do not assume as this assumption is already too prohibitive when working with latent confounding time series that have autodependencies. Moreover, Theorem 3.7 of \citet{barber2022half} only applies to graphs with \emph{finitely} many vertices whereas we work with graphs containing \emph{infinitely} many vertices.
\end{remark}

\section{Sufficient lag-based criteria}
\label{sec_sufficient_lag_based_criteria}
Theorem \ref{main_theorem} is rather abstract and primarily graphical. As full time graphs can be entirely characterized by their lags, it is possible to derive lag-based sufficient (which are albeit not necessary) criteria that, when satisfied, imply that certain graphical parts of Theorem \ref{main_theorem} hold.  In particular, we will in this section introduce the following lag-based lemmas: Lemma \ref{lemma_resid_class1} states a sufficient condition under which for given sets $R$ and $C$ there exists a system of treks $\Psi:R\rightrightarrows C$ as required by Theorem \ref{main_theorem}. Building on this result, the subsequent lemma (Lemma \ref{lemma_resid_class_new}) then states a sufficient condition under which for a given $C$ a suitable $R$ with a system of treks $\Psi:R\rightrightarrows C$ as required by Theorem \ref{main_theorem} exists. Afterwards, in Lemma \ref{lemma_BU_Fobs_example}, we then look at the case $d_U=1$ and state possible sets $F^{\textnormal{obs}}$ and $B_U$ showing that they satisfy the requirements 1 and 2 from Theorem \ref{main_theorem}. Finally, in Corollary \ref{corollary_identifiability}, we connect Lemma \ref{lemma_resid_class_new} with Lemma \ref{lemma_BU_Fobs_example}, thus yielding a sufficient condition under which one gets identifiability for the case $d_U=1$. Because Lemmas \ref{lemma_resid_class1} and \ref{lemma_resid_class_new} are more broadly applicable than just for the case $d_U=1$ (see for example the full time graph from Example \ref{ex_app_3} in Section \ref{sec_examples} in the Appendix), we will explicitly state them.

In order to explain these lag-based criteria, we need to introduce  further concepts, notation and assumptions.
\begin{definition}
	\label{def_residue_class_new}
	Let $\{S^j_t\}_{t\in\mathbb{Z}}$ be any observed or unobserved component time series. We say that two vertices $S^j_{t_1}$ and $S^j_{t_2}$ are in the same $l^{S^j}_i$-residue class if they are connected by a path of just $l^{S^j}_i$-edges, so if $l^{S^j}_i$ divides the difference $t_1-t_2$.
	 If $m_{S^j}=0$, then we say that there are no residue classes (of any kind) for $\{S^j_t\}_{t\in\mathbb{Z}}$.
\end{definition}
\addtocounter{theorem}{-10}
\begin{example}[continued]
In the full time graph from Figure \ref{example_graph_1}, all $U$-vertices are in the same $l^U_1$-residue class. For the $Y$-vertices, there are $l^Y_1=3$-many $l^Y_1$-residue classes, namely $\{\ldots,\allowbreak Y_{t-3},\allowbreak Y_{t},\allowbreak Y_{t+3},\ldots\}$ and $\{\ldots,\allowbreak Y_{t-2},\allowbreak Y_{t+1}, \allowbreak Y_{t+4}, \allowbreak\ldots\}$ and $\{\ldots,\allowbreak Y_{t-1},\allowbreak Y_{t+2},\allowbreak Y_{t+5},\allowbreak \ldots\}$.\demo
\end{example}
\addtocounter{theorem}{9}

In order to introduce further notation, we now explain why $\textnormal{ForbAn}$ from Theorem \ref{main_theorem} only contains finitely many vertices: First, note that $\{Y_t\}\cup F^{\textnormal{obs}}$ and $B_U$ have only finitely many vertices. Second, note that because $B_U$ $d$-blocks every directed path from the latent vertices with time indices strictly smaller than $t_{\textnormal{inf}}(B_U)$ to $\textnormal{pa}^{\textnormal{lat}}(F^{\textnormal{obs}}\cup \{Y_t\})\setminus B_U$ due to requirements 1 and 2 in Theorem \ref{main_theorem}, it holds that every (latent) ancestor of  $\textnormal{pa}^{\textnormal{lat}}(F^{\textnormal{obs}}\cup \{Y_t\})\setminus B_U$ with time index strictly smaller than $t_{\textnormal{inf}}(B_U)$ is also a (latent) ancestor of $B_U$, because otherwise there would be some directed path from a latent vertex with time index strictly smaller than $t_{\textnormal{inf}}(B_U)$ to $\textnormal{pa}^{\textnormal{lat}}(F^{\textnormal{obs}}\cup \{Y_t\})\setminus B_U$ that is not $d$-blocked by $B_U$. Hence, $\textnormal{ForbAn}$ only contains finitely many vertices.

Because $\textnormal{ForbAn}$ contains only finitely many vertices, it follows that $\textnormal{ForbAn}$ has a finite minimal time index and, thus, it is for each $O^i$ possible to define a (potentially non-unique) $\tau_{O^i}\in\mathbb{Z}$ such that every $O^i$-vertex with time index $t'$ satisfying $t'\leq t-\tau_{O^i}$ is not a descendant of $\textnormal{ForbAn}$.

With this notation and further using the abbreviations $C^{(1)}:=C\setminus F^{\textnormal{obs}}$ and $C^{(2)}:=C\cap F^{\textnormal{obs}}=F^{\textnormal{obs}}$, we are now ready to state the following assumptions which we either all or partially use for the remainder of this section:
   \begin{itemize}
   \item \textbf{(C1):} For a given partition $R=R^{(1)}\dot{\cup} R^{(2)}$, it holds that $|R^{(1)}_{O^i}|= |C^{(1)}_{O^i}|$ and $|R^{(2)}_{O^i}| = |C^{(2)}_{O^i}|$ for each observed component time series $\{O^i_t\}_{t\in \mathbb{Z}}$;
       \item \textbf{(C2):} For each observed component time series $\{O^i_t\}_{t\in \mathbb{Z}}$: If $C^{(1)}_{O^i}\neq \emptyset$, then there exists a lag $l^{O^i}_{j_i}$ (so, in particular, $m_{O^i}>0$) such that all vertices in $C^{(1)}_{O^i}$ with time index in $[t-\tau_{O^i}-(l^{O^i}_{j_i}-1),\infty)$ are in different $l^{O^i}_{j_i}$-residue classes.
        \item \textbf{(C3):} For a given partition $R=R^{(1)}\dot{\cup} R^{(2)}$, it holds for each observed component time series $\{O^i_t\}_{t\in \mathbb{Z}}$ that if $C^{(1)}_{O^i}\neq \emptyset$, then for every $c_i\in C^{(1)}_{O^i}$ with time index in $[t-\tau_{O^i}-(l^{O^i}_{j_i}-1),\infty)$ there exists exactly one $r_i\in R^{(1)}_{O^i}$ with time index in $[t-\tau_{O^i}-(l^{O^i}_{j_i}-1),t-\tau_{O^i}]$ from the same $l^{O^i}_{j_i}$-residue class.
        \item \textbf{(C4):} For a given partition $R=R^{(1)}\dot{\cup} R^{(2)}$, it holds for each observed component time series $\{O^i_t\}_{t\in \mathbb{Z}}$ that if $C^{(1)}_{O^i}\neq \emptyset$, then for every $c_i\in C^{(1)}_{O^i}$ with time index in $(-\infty, t-\tau_{O^i}-l^{O^i}_{j_i}]$ it holds that $c_i\in R^{(1)}_{O^i}$.
       \item \textbf{(C5):} For each observed component time series $\{O^i_t\}_{t\in \mathbb{Z}}$ it holds that
       \begin{itemize}
            \item \textbf{(C5.1):} if $C^{(1)}_{O^i}\neq \emptyset$ and $C^{(2)}_{O^i}\neq \emptyset$, then $t_{\textnormal{sup}}(C^{(1)}_{O^i})<t_{\textnormal{inf}}(C^{(2)}_{O^i})$; and
           \item \textbf{(C5.2):} in \emph{addition} to (C5.1), for a given partition $R=R^{(1)}\dot{\cup} R^{(2)}$, if $R^{(1)}_{O^i}\neq \emptyset$ and $R^{(2)}_{O^i} \neq \emptyset$, then $t_{\textnormal{sup}}(R^{(2)}_{O^i})<t_{\textnormal{inf}}(R^{(1)}_{O^i})$.
       \end{itemize}
   \item \textbf{(C6):} For each $\{O^i_t\}_{t\in \mathbb{Z}}$ for which $C^{(2)}_{O^i}\neq \emptyset$, there exists an $l^{O^iU^{k_i}}_{w_i}$-edge for some $k_i\in[d_U]_1$ linking the latent variables with $\{O^i_t\}_{t\in \mathbb{Z}}$.
   \begin{itemize}
       \item \textbf{(C6.1):} Write $P:=\bigcup_{i\in [d_O]_1:\;C^{(2)}_{O^i}\neq \emptyset} \textnormal{pa}(C^{(2)}_{O^i}, l^{O^iU^{k_i}}_{w_i})$. Assume that for all $\{U^k_t\}_{t\in\mathbb{Z}}$ for which $P_{U^k}\neq \emptyset$ there exists a lag $l^{U^k}_{j_k}$ (so in particular, $m_{U^k}>0$) such that all $p_k\in P_{U^k}$ are in different $l^{U^k}_{j_k}$-residue classes;
       \item \textbf{(C6.2):} \emph{In addition} to assuming Assumption (C6.1): For a given partition $R=R^{(1)}\dot{\cup} R^{(2)}$, write $Q:=\bigcup_{i\in [d_O]_1:\;C^{(2)}_{O^i}\neq \emptyset} \textnormal{pa}(R^{(2)}_{O^i}, l^{O^iU^{k_i}}_{w_i})$ and further assume that for each $p_k\in P_{U^k}$ there exists exactly one $q_k\in Q_{U^k}$ from the same $l^{U^k}_{j_k}$-residue class.
   \end{itemize}
\end{itemize}
We are now ready to state the first lemma.

\begin{lemma}
\label{lemma_resid_class1}
Assume the assumptions from Theorem \ref{main_theorem} except for the existence of a system of treks $\Psi:R\rightrightarrows C$ with no sided intersection and unique monomial among all other systems of treks $\Pi:R\rightrightarrows C$ with no sided intersection (subpoint 5 in Theorem \ref{main_theorem}). Assume that $R$ can be decomposed into $R=R^{(1)}\dot{\cup}R^{(2)}$ such that (C1)--(C6.2) hold (note that for Assumption (C6.2) an $l^{O^iU^{k_i}}_{w_i}$-edge exists because of the assumed existence of $\Upsilon$ in Theorem \ref{main_theorem} and the fact that each directed path in $\Upsilon$ has as last edge an edge linking a latent variable and $C^{(2)}=F^{\textnormal{obs}}$). Then, subpoint 5 in Theorem \ref{main_theorem} is satisfied. 
\end{lemma}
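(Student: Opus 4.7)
The plan is to construct the required system of treks $\Psi:R\rightrightarrows C$ explicitly, using the decomposition $C=C^{(1)}\dot{\cup} C^{(2)}$ and the matching partition $R=R^{(1)}\dot{\cup} R^{(2)}$ from (C1), and then to argue uniqueness of the monomial from the residue-class distinctness supplied by (C2) and (C6.1).

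First I would treat the targets in $C^{(1)}=\textnormal{pa}^{\textnormal{obs}}(Y_t)\cup\textnormal{pa}^{\textnormal{obs}}(F^{\textnormal{obs}})$. For each observed component $\{O^i_t\}_{t\in\mathbb{Z}}$ and each $c\in C^{(1)}_{O^i}$ I build a trek that lives entirely inside $\{O^i_t\}_{t\in\mathbb{Z}}$: if $t(c)\leq t-\tau_{O^i}-l^{O^i}_{j_i}$ then (C4) forces $c\in R^{(1)}_{O^i}$ and the trivial trek at $c$ suffices; otherwise (C3) supplies a unique $r\in R^{(1)}_{O^i}$ in the same $l^{O^i}_{j_i}$-residue class with $t(r)\leq t(c)$, and I take the directed path of $l^{O^i}_{j_i}$-edges from $r$ to $c$, with top node $r$.

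Next, for each $c\in C^{(2)}_{O^i}=F^{\textnormal{obs}}\cap\{O^i_t\}_{t\in\mathbb{Z}}$, I use the $l^{O^iU^{k_i}}_{w_i}$-edge from (C6) to enter the latent structure. Set $p:=\textnormal{pa}(c,l^{O^iU^{k_i}}_{w_i})\in P_{U^{k_i}}$. By (C6.2) there is a unique $q\in Q_{U^{k_i}}$ in the same $l^{U^{k_i}}_{j_{k_i}}$-residue class as $p$, and this $q$ is the $l^{O^iU^{k_i}}_{w_i}$-parent of a unique $r\in R^{(2)}_{O^i}$. Connecting $q$ and $p$ along the chain of $l^{U^{k_i}}_{j_{k_i}}$-edges, traversed forward or backward according to the sign of $t(p)-t(q)$, yields the trek $r\leftarrow q\,\cdots\,p\to c$, with top node the earlier-in-time of $q$ and $p$. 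Assembling the two constructions produces $\Psi:R\rightrightarrows C$, and the no-sided-intersection property follows from (C5.1)--(C5.2) (which separate $C^{(1)}$ from $C^{(2)}$ and $R^{(1)}$ from $R^{(2)}$ in time within each component), from the residue distinctness in (C2) and (C6.1), and from the bijectivity supplied by (C3) and (C6.2).

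The main obstacle is uniqueness of the monomial, which I expect to be the delicate step. Suppose $\Pi:R\rightrightarrows C$ has no sided intersection and $\Pi(A,\Sigma)=\Psi(A,\Sigma)$; then the multisets of edges and top-node variances used by $\Pi$ and by $\Psi$ coincide. I would argue target-by-target. For each $c\in C^{(1)}_{O^i}$ that is not a trivial-trek endpoint, the multiset forces the terminal edge of the $\Pi$-trek ending at $c$ to be an $l^{O^i}_{j_i}$-edge; iterating along that trek and invoking (C2) together with the uniqueness in (C3) shows that it must coincide with the $\Psi$-path from the same $r$, since any deviation would either leave the $l^{O^i}_{j_i}$-residue class of $c$ (changing the edge-multiset) or consume a source already assigned to another $\Pi$-trek. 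An analogous argument at the $C^{(2)}$-end, combining (C6.1), (C6.2) and the induced bijection $P_{U^k}\leftrightarrow Q_{U^k}$, fixes the right ends, the left ends and the intermediate $l^{U^k}_{j_k}$-chains of all $C^{(2)}$-treks. Together these force $\Pi=\Psi$, which is precisely subpoint 5 of Theorem \ref{main_theorem}.
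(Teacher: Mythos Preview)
Your construction of $\Psi$ matches the paper's. One minor point: you allow the sign of $t(p)-t(q)$ to vary, but the paper deduces from (C5.2) together with (C2)--(C4) that $t_{\sup}(R^{(2)}_{O^i})<t_{\inf}(C^{(2)}_{O^i})$, so always $t(q)<t(p)$ and every $R^{(2)}$-trek has top node $q$; this fixed orientation is also used later in the uniqueness step.

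Your uniqueness argument has a genuine gap. The target-by-target claim that ``the multiset forces the terminal edge of the $\Pi$-trek ending at $c\in C^{(1)}_{O^i}$ to be an $l^{O^i}_{j_i}$-edge'' is not justified: whenever $C^{(2)}_{O^i}\neq\emptyset$, the monomial also contains $l^{O^iU^{k_i}}_{w_i}$-factors, and nothing local prevents the $\Pi$-trek to $c$ from arriving through the latent part via one of those. Your fallback (``consume a source already assigned to another $\Pi$-trek'') presupposes $\Pi$'s source--target matching, which is exactly what you are trying to determine. The paper closes this with a \emph{global} edge-count: $\Psi(A,\Sigma)$ contains exactly $2\lvert R^{(2)}\rvert$ latent--observed factors and none for edges between distinct observed components, so at least $\lvert R^{(1)}\rvert$ of the $\Pi$-treks must stay entirely inside one observed component using only $l^{O^i}_{j_i}$-edges. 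Because the number of $l^{O^i}_{j_i}$-factors equals precisely the combined length of the $\Psi$-paths $R^{(1)}_{O^i}\to C^{(1)}_{O^i}$, the time ordering in (C5.2) then forbids any observed-only trek from starting in $R^{(2)}$ or ending in $C^{(2)}$ (there are simply not enough $l^{O^i}_{j_i}$-factors to bridge the extra gap), and the same length bookkeeping forbids treks from the trivial-source part of $R^{(1)}_{O^i}$ from reaching targets with time index above $t-\tau_{O^i}-l^{O^i}_{j_i}$. Only after this partition does the residue-class matching via (C2)/(C3) pin down the $R^{(1)}\to C^{(1)}$ treks; the remaining $R^{(2)}$-treks, having no $l^{O^i}_{j_i}$-factors left, must immediately enter the latent part, where (C6.1)/(C6.2) and the exact count of $l^{U^k}_{j_k}$-factors fix them as in $\Psi$.
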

\addtocounter{theorem}{-11}
\begin{example}[continued]
Lemma \ref{lemma_resid_class1} also applies to Example \ref{example1} which corresponded to Figure \ref{example_graph_1}: First note that $\tau_{Y}=1$ is valid because every $Y_{t'}$ with $t'\leq t-\tau_Y=t-1$ is not a descendant of $\textnormal{ForbAn}=\{U_t, U_{t+1}, \ldots, U_{t+l^Y_1}, Y_t, Y_{t+l^Y_1+1}\}$. Partitioning $R=\{Y_{t-l^Y_1}, Y_{t-l^Y_1+1}, Y_{t-l^Y_1-1}\}$ into $R^{(1)}:=\{Y_{t-l^Y_1}, Y_{t-l^Y_1+1}\}$ and $R^{(2)}:=\{Y_{t-l^Y_1-1}\}$ and using the fact that $l^{Y}_1=l^Y_{m_Y}>1$ and $C=\{Y_{t-l^Y_1}, Y_{t+l^Y_1+1}, Y_{t+1}\}$ and $C^{(1)}=\{Y_{t-l^Y_1}, Y_{t+1}\}$ and $C^{(2)}=\{Y_{t+l^Y_1+1}\}$, it follows that (C1)--(C5.2) hold. For (C6.2): Fix as the edge linking latent and observed variables an $l^{YU}_1$-edge. As $P=\{U_t\}$ and $Q=\{U_{t-l^Y_1-2}\}$ and as $l^U_1=1$, clearly (C6.2) holds. Thus, Lemma \ref{lemma_resid_class1} applies. \demo
\end{example}
\addtocounter{theorem}{10}
\begin{remark}
Lemma \ref{lemma_resid_class1} also applies to Examples \ref{ex_app_1}, \ref{ex_app_2} and \ref{ex_app_3} in Section \ref{sec_examples} of the Appendix.
\end{remark}
Lemma \ref{lemma_resid_class1} requires an a priori fixed set $R$ satisfying several assumptions. As the following lemma shows, it is in fact possible to a priori make no assumptions on the set $R$ but still yielding a similar result as Lemma \ref{lemma_resid_class1}.
\begin{lemma}
\label{lemma_resid_class_new}
Assume the assumptions from Theorem \ref{main_theorem} except for the existence of a set $R$ with $|R|=|C|$ satisfying subpoint 4 from Theorem \ref{main_theorem} and a system of treks $\Psi:R\rightrightarrows C$ with no sided intersection and unique monomial among all other systems of treks $\Pi:R\rightrightarrows C$ with no sided intersection (subpoint 5 in Theorem \ref{main_theorem}). Furthermore, assume that Assumptions (C2), (C5.1) and (C6.1) hold (note that for Assumption (C6.1) an $l^{O^iU^{k_i}}_{w_i}$-edge exists because of the assumed existence of $\Upsilon$ in Theorem \ref{main_theorem} and the fact that each directed path in $\Upsilon$ has as last edge an edge linking a latent variable and $C^{(2)}=F^{\textnormal{obs}}$).
Then, there exists a set $R$ with $|R|=|C|$ satisfying requirement 4 from Theorem \ref{main_theorem} and  a system of treks $\Psi:R\rightrightarrows C$ as required by subpoint 5 in Theorem \ref{main_theorem}.
\end{lemma}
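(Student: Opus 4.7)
The plan is to explicitly construct a set $R$ with a partition $R=R^{(1)}\dot\cup R^{(2)}$ satisfying all of (C1)--(C6.2), and then invoke Lemma \ref{lemma_resid_class1}. Along the way, the non-descendance requirement (subpoint 4 of Theorem \ref{main_theorem}) will be satisfied by construction, because every $O^i$-vertex with time index at most $t-\tau_{O^i}$ is a non-descendant of $\textnormal{ForbAn}$ by definition of $\tau_{O^i}$, and $\textnormal{ForbAn}$ has a finite minimum time index.

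\textbf{Construction of $R^{(1)}$.} For each observed component time series $\{O^i_t\}_{t\in\mathbb{Z}}$ with $C^{(1)}_{O^i}\neq \emptyset$, I use the lag $l^{O^i}_{j_i}$ provided by (C2) and proceed as follows. For each $c\in C^{(1)}_{O^i}$ whose time index lies in $[t-\tau_{O^i}-(l^{O^i}_{j_i}-1),\infty)$, I include in $R^{(1)}_{O^i}$ the unique vertex in the ``window'' $[t-\tau_{O^i}-(l^{O^i}_{j_i}-1), t-\tau_{O^i}]$ that lies in the same $l^{O^i}_{j_i}$-residue class as $c$. By (C2) these late $c$'s sit in distinct residue classes, so the chosen representatives are pairwise distinct. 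For each $c\in C^{(1)}_{O^i}$ with time index in $(-\infty, t-\tau_{O^i}-l^{O^i}_{j_i}]$, I include $c$ itself in $R^{(1)}_{O^i}$. By construction (C3), (C4) and the partial cardinality match $|R^{(1)}_{O^i}|=|C^{(1)}_{O^i}|$ hold, and every vertex placed into $R^{(1)}_{O^i}$ has time index at most $t-\tau_{O^i}$.

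\textbf{Construction of $R^{(2)}$.} For each observed component time series $\{O^i_t\}_{t\in\mathbb{Z}}$ with $C^{(2)}_{O^i}\neq\emptyset$, fix the edge $l^{O^iU^{k_i}}_{w_i}$ (which exists as noted in the statement). For each $p\in P_{U^k}$ I will choose, using (C6.1), a distinct vertex $q\in Q_{U^k}$ in the same $l^{U^k}_{j_k}$-residue class as $p$, and place the unique $O^i$-child of $q$ via an $l^{O^iU^{k_i}}_{w_i}$-edge into $R^{(2)}_{O^i}$. Since the $l^{U^k}_{j_k}$-residue classes of distinct $p$'s are distinct by (C6.1), I may freely shift every $q_k$ to any sufficiently early time in its residue class without collision; I shift them all so that every resulting $R^{(2)}_{O^i}$-vertex has time index strictly below $t_{\textnormal{inf}}(R^{(1)}_{O^i})$ (if $R^{(1)}_{O^i}\neq \emptyset$) and also below the finite minimum time index of $\textnormal{ForbAn}$. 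This choice gives the remaining cardinality match $|R^{(2)}_{O^i}|=|C^{(2)}_{O^i}|$, (C5.2), (C6.2), and non-descendance of $\textnormal{ForbAn}$ for the $R^{(2)}$-part.

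\textbf{Finishing the argument.} (C1) holds by combining the two cardinality matches. (C2) and (C5.1) are assumed. (C3), (C4) hold by the construction of $R^{(1)}$. (C5.2) and (C6.2) hold by the construction of $R^{(2)}$, and (C6.1) is assumed. Hence $R=R^{(1)}\dot\cup R^{(2)}$ together with the given $F^{\textnormal{obs}}$, $B_U$, $\Upsilon$ satisfy all hypotheses of Lemma \ref{lemma_resid_class1}, which yields the desired system of treks $\Psi:R\rightrightarrows C$. Non-descendance of $\textnormal{ForbAn}$ (subpoint 4 of Theorem \ref{main_theorem}) holds since every element of $R^{(1)}$ has time index at most $t-\tau_{O^i}$ for its component time series, and every element of $R^{(2)}$ has been placed strictly before the minimum time index of $\textnormal{ForbAn}$. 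The main subtlety is choosing the shifts for $R^{(2)}$ correctly: they must be late enough to remain $O^i$-children of vertices in the specified residue classes (automatic, since residue classes are bi-infinite) but early enough to be distinct from $R^{(1)}$ and avoid $\textnormal{ForbAn}$, which is possible because all the constraints that depend on time are one-sided lower bounds and there are only finitely many constraints overall.
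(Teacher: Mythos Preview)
Your proof is correct and follows essentially the same strategy as the paper's: construct $R^{(1)}$ exactly as you do, construct $R^{(2)}$ by pushing vertices far enough into the past while preserving the relevant $l^{U^k}_{j_k}$-residue classes, and then invoke Lemma \ref{lemma_resid_class1}. The only cosmetic difference is that the paper builds $R^{(2)}_{O^i}$ by rigidly time-shifting all of $C^{(2)}_{O^i}$ by one common amount divisible by $l^{U^{k_i}}_{j_{k_i}}$, whereas you phrase the construction vertex-by-vertex via the $p$'s; one small point to tidy is that your sentence ``choose a distinct vertex $q\in Q_{U^k}$'' is formally circular (since $Q$ is defined from $R^{(2)}$), so you should say ``choose a $U^k$-vertex $q$ in the same residue class'' and then observe that the resulting $Q$ is exactly this set of chosen $q$'s.
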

\addtocounter{theorem}{-13}
\begin{example}[continued]
From the exact same justification on why Assumptions (C2), (C5.1) and (C6.1) hold for Lemma \ref{lemma_resid_class1}, it also follows that Lemma \ref{lemma_resid_class_new} applies to Example \ref{example1}. \demo
\end{example}
\addtocounter{theorem}{12}
\begin{remark}
Lemma \ref{lemma_resid_class_new} also applies to Examples \ref{ex_app_1}, \ref{ex_app_2} and \ref{ex_app_3} in Section \ref{sec_examples} of the Appendix.
\end{remark}
For the case where $d_U=1$ and $m_U>0$ and for which there exists some $O^{i_0}$ for which there are edges from $U$ to $O^{i_0}$ (if there is no such $O^{i_0}$, then there would be no latent confounding in the first place), one can further specify choices of $B_U$ and $F^{\textnormal{obs}}$ . Namely, one can, for example, choose $F^{\textnormal{obs}}=\{O^{i_0}_{t+\Delta},\allowbreak O^{i_0}_{t+\Delta+1},\allowbreak\ldots,\allowbreak O^{i_0}_{t+\Delta+(l^{U}_{m_{U}}-1)}\}$ where $\Delta\in \mathbb{Z}$ and
$B_U:=\{U_t:\; t\in [t_0,t_0+(l^U_{m_U}-1)]\}$
where $t_0:=t_{\textnormal{inf}}(\textnormal{pa}^{\textnormal{lat}}(Y_t)\cup \textnormal{pa}^{\textnormal{lat}}(F^{\textnormal{obs}}))$. In particular, for these choices of $F^{obs}$ and $B_U$, subpoints 1 and 2 of Theorem \ref{main_theorem} are satisfied as the following lemma asserts.
\begin{lemma}
\label{lemma_BU_Fobs_example}
    Assume a stable SVAR process satisfying Assumptions \ref{assumption_no_instantaneous_self_edges}, \ref{assumption_acyclicity} and \ref{assumption_latents_have_no_observed_parents} from Section \ref{sec_var_processes}. Furthermore, assume $d_U=1$ and $m_U>0$ and the existence of some $O^{i_0}$ for which there are edges from $U$ to $O^{i_0}$. Let $F^{\textnormal{obs}}=\{O^{i_0}_{t+\Delta},\allowbreak O^{i_0}_{t+\Delta+1},\allowbreak\ldots,\allowbreak O^{i_0}_{t+\Delta+(l^U_{m_U}-1)}\}$ for an arbitrary but fixed $\Delta\in\mathbb{Z}$, and let $B_U:=\{U_t:\; t\in [t_0,t_0+(l^U_{m_U}-1)]\}$ where $t_0:=t_{\textnormal{inf}}(\textnormal{pa}^{\textnormal{lat}}(Y_t)\cup \textnormal{pa}^{\textnormal{lat}}(F^{\textnormal{obs}}))$. Then, subpoints 1 and 2 from Theorem \ref{main_theorem} hold.
\end{lemma}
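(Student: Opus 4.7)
The equality $|F^{\textnormal{obs}}|=|B_U|=l^U_{m_U}$ is immediate from the definitions, so condition 2a is trivial. For condition 1 and the first half of condition 2b, I would use a ``maximum-lag blocking'' argument: since $d_U=1$ and Assumption \ref{assumption_latents_have_no_observed_parents} prohibits edges from observed to latent variables, every directed latent-to-latent path is built from $U\rightarrow U$ edges whose lags all lie in $\{l^U_1,\dots,l^U_{m_U}\}$, and in particular each step advances time by at most $l^U_{m_U}$. Now let $q$ be a latent parent of $Y_t$ or of some element of $F^{\textnormal{obs}}$; by definition of $t_0$ we have $t(q)\geq t_0$. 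If $t(q)\leq t_0+(l^U_{m_U}-1)$, then $q\in B_U$. Otherwise, any directed path from some $U_{t'}$ with $t'<t_0$ to $q$ has consecutive time indices $t'=s_0<s_1<\dots<s_k=t(q)$ with $s_{i+1}-s_i\leq l^U_{m_U}$; the first $i$ with $s_i\geq t_0$ must therefore satisfy $s_i\leq (t_0-1)+l^U_{m_U}=t_0+l^U_{m_U}-1$, so $U_{s_i}\in B_U$. Because $U_{s_i}$ is an interior vertex of a directed path, it is a non-collider, and hence $B_U$ $d$-blocks the path.

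For condition 2c I would construct $\Upsilon$ by only using the largest self-lag of $U$ together with one fixed $U\!\to\!O^{i_0}$-edge. Concretely, pick any lag $l^{O^{i_0}U}_w$ and, for each $j\in\{0,\dots,l^U_{m_U}-1\}$, let the $j$-th path be
\begin{align*}
U_{t_0+j}\rightarrow U_{t_0+j+l^U_{m_U}}\rightarrow\cdots\rightarrow U_{t_0+j+k_j l^U_{m_U}}\rightarrow O^{i_0}_{t_0+j+k_j l^U_{m_U}+l^{O^{i_0}U}_w},
\end{align*}
where $k_j\in\mathbb{N}_{\geq 0}$ is the unique non-negative integer making the endpoint land in $F^{\textnormal{obs}}$. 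The existence of such $k_j$ follows from the fact that, by definition of $t_0$, we have $t+\Delta-t_0\geq l^{O^{i_0}U}_{m_{O^{i_0}U}}\geq l^{O^{i_0}U}_w$, so after fixing the (unique) target residue class mod $l^U_{m_U}$ the required number of $l^U_{m_U}$-edges is non-negative. Distinctness of residue classes mod $l^U_{m_U}$ across different $j$ immediately gives that the $U$-parts of these paths do not intersect, and different endpoints give that their $O^{i_0}$-parts do not intersect either; moreover all non-last vertices are latent by construction.

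The main obstacle, and the only non-routine step, is verifying that $\Upsilon(A)$ is unique among systems $\Pi:B_U\rightarrow F^{\textnormal{obs}}$ satisfying the two bullet constraints. I would argue as follows. The monomial
\begin{align*}
\Upsilon(A)=\bigl(A^{(l^U_{m_U})}_{UU}\bigr)^{\sum_j k_j}\cdot\bigl(A^{(l^{O^{i_0}U}_w)}_{O^{i_0}U}\bigr)^{l^U_{m_U}}
\end{align*}
involves only two distinct edge-parameters. Hence any $\Pi$ with $\Pi(A)=\Upsilon(A)$ must use only $l^U_{m_U}$-edges in the $U$-part (because no other $U\!\to\!U$ coefficient appears in $\Upsilon(A)$) and only $l^{O^{i_0}U}_w$-edges in the final step (because all non-last vertices must be latent, so the last edge is the only $U\!\to\!O^{i_0}$ edge, and no other such coefficient appears). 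But then the endpoint of the path starting at $U_{t_0+j}$ has time index $\equiv t_0+j+l^{O^{i_0}U}_w\pmod{l^U_{m_U}}$, and since the $l^U_{m_U}$ consecutive time indices in $F^{\textnormal{obs}}$ hit each residue class mod $l^U_{m_U}$ exactly once, the endpoint is uniquely determined and agrees with the one in $\Upsilon$. The number of $l^U_{m_U}$-edges in the $j$-th path is then also forced, yielding $\Pi=\Upsilon$ and in particular $\Pi(A)=\Upsilon(A)$ only when $\Pi$ coincides with $\Upsilon$, which establishes the required strict inequality for every other such $\Pi$.
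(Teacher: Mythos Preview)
Your proposal is correct and follows essentially the same approach as the paper. The paper argues conditions 1 and 2b via the observation that $B_U$ occupies $l^U_{m_U}$ consecutive time indices (one per $l^U_{m_U}$-residue class) and hence every directed $U$-to-$U$ path from before $t_0$ into $\textnormal{pa}^{\textnormal{lat}}(Y_t)\cup\textnormal{pa}^{\textnormal{lat}}(F^{\textnormal{obs}})$ must pass through $B_U$; your explicit step-size-at-most-$l^U_{m_U}$ argument makes precisely this implication transparent. For condition 2c the paper likewise fixes a single $l^{O^{i_0}U}$-edge as the last edge of each path and uses only $l^U_{m_U}$-edges before it, then shows uniqueness of the monomial by the same two observations you give (only one observed-to-latent coefficient and only one $U\!\to\!U$ coefficient appear, and the targets in $F^{\textnormal{obs}}$ are in distinct $l^U_{m_U}$-residue classes). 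Your additional justification that $k_j\geq 0$ via $t+\Delta-t_0\geq l^{O^{i_0}U}_{m_{O^{i_0}U}}\geq l^{O^{i_0}U}_w$ is a detail the paper leaves implicit but which is indeed needed and correct.
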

\addtocounter{theorem}{-15}
\begin{example}[continued]
The choice of $F^{\textnormal{obs}}=\{Y_{t+l^Y_1+1}\}$ implies that $t_0=t-1$. Because $B_U=\{U_{t-1}\}$, it follows that $F^{\textnormal{obs}}$ and $B_U$ have the form as required by Lemma \ref{lemma_BU_Fobs_example}, and thus, subpoints 1 and 2 from Theorem  \ref{main_theorem} hold.\demo
\end{example}
\addtocounter{theorem}{14}
\begin{remark}
Lemma \ref{lemma_BU_Fobs_example} also applies to Examples \ref{ex_app_1} and \ref{ex_app_2} but not to Example \ref{ex_app_3} in Section \ref{sec_examples} of the Appendix.
\end{remark}
Combining Lemmas \ref{lemma_resid_class_new} and \ref{lemma_BU_Fobs_example} and using the fact that the $U$-parents of $F^{\textnormal{obs}}$ as specified in Lemma \ref{lemma_BU_Fobs_example} are in different $l^U_{m_U}$-residue classes for any fixed edge linking $O^{i_0}$-variables and $F^{\textnormal{obs}}$, yields the following corollary.

\begin{corollary}
\label{corollary_identifiability}
Assume a stable SVAR process satisfying Assumptions \ref{assumption_no_instantaneous_self_edges}, \ref{assumption_acyclicity} and \ref{assumption_latents_have_no_observed_parents} from Section \ref{sec_var_processes}. Furthermore, assume $d_U=1$ and $m_U>0$ and the existence of at least one $O^{i_0}$ for which there are edges from $U$ to $O^{i_0}$. Let $F^{\textnormal{obs}}:=\{O^{i_0}_{t+\Delta},O^{i_0}_{t+\Delta+1},\ldots, O^{i_0}_{t+\Delta+(l^U_{m_U}-1)}\}$ for some $\Delta\in \mathbb{Z}$ and $B_U:=\{U_t:\; t\in [t_0,t_0+(l^U_{m_U}-1)]\}$ where $t_0:=t_{\textnormal{inf}}(\textnormal{pa}^{\textnormal{lat}}(Y_t)\cup \textnormal{pa}^{\textnormal{lat}}(F^{\textnormal{obs}}))$. Moreover, assume Assumptions (C2) and (C5.1).
If additionally 
\begin{align*}
\bigr(F^{\textnormal{obs}}\cup \textnormal{pa}^{\textnormal{obs}}(F^{\textnormal{obs}})\bigr)\cap \bigr(\{Y_t\}\cup\textnormal{pa}^{\textnormal{obs}}(Y_t)\bigr)=\emptyset,
\end{align*}
then there exists a set $R$ such that all requirements from Theorem \ref{main_theorem} hold.
\end{corollary}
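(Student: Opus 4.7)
The plan is to assemble the conclusion by invoking Lemma \ref{lemma_BU_Fobs_example} for subpoints 1 and 2 of Theorem \ref{main_theorem}, relying on the given disjointness hypothesis for subpoint 3, and then invoking Lemma \ref{lemma_resid_class_new} to produce a set $R$ realizing subpoints 4 and 5. The only nontrivial bookkeeping is verifying that Assumption (C6.1) holds for the specific $F^{\textnormal{obs}}$ and $B_U$ dictated in the statement, so that Lemma \ref{lemma_resid_class_new} is actually applicable; (C2) and (C5.1) are assumed verbatim in the hypothesis.

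First I would observe that the hypotheses of Lemma \ref{lemma_BU_Fobs_example} match those of the corollary (stability, Assumptions \ref{assumption_no_instantaneous_self_edges}--\ref{assumption_latents_have_no_observed_parents}, $d_U=1$, $m_U>0$, existence of $O^{i_0}$ with an edge from $U$, and the prescribed forms of $F^{\textnormal{obs}}$ and $B_U$), so its conclusion directly yields subpoints 1 and 2 of Theorem \ref{main_theorem}. Subpoint 3 is precisely the final assumed identity $\bigl(F^{\textnormal{obs}}\cup \textnormal{pa}^{\textnormal{obs}}(F^{\textnormal{obs}})\bigr)\cap \bigl(\{Y_t\}\cup \textnormal{pa}^{\textnormal{obs}}(Y_t)\bigr)=\emptyset$ in the corollary, so nothing further is required here.

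To invoke Lemma \ref{lemma_resid_class_new}, I must then verify (C6.1). Since $d_U=1$, I fix any one $U$-to-$O^{i_0}$ edge linking the latent series with $F^{\textnormal{obs}}$; write its lag as $l^{O^{i_0}U}_{w}$. The set $P$ from (C6.1) is then $\textnormal{pa}(F^{\textnormal{obs}},\, l^{O^{i_0}U}_{w})$, which by the definition $F^{\textnormal{obs}}=\{O^{i_0}_{t+\Delta},\ldots,O^{i_0}_{t+\Delta+(l^U_{m_U}-1)}\}$ consists of exactly the $l^U_{m_U}$ consecutive vertices $U_{t+\Delta - l^{O^{i_0}U}_{w}}, \ldots, U_{t+\Delta + (l^U_{m_U}-1) - l^{O^{i_0}U}_{w}}$. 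Choosing the lag $l^{U}_{j_1}:=l^U_{m_U}$ (which exists because $m_U>0$), these $l^U_{m_U}$ vertices have pairwise time-index differences strictly smaller than $l^U_{m_U}$, so they lie in pairwise distinct $l^U_{m_U}$-residue classes. Thus (C6.1) is satisfied.

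With (C2), (C5.1) and (C6.1) in hand, and with subpoints 1, 2 and 3 of Theorem \ref{main_theorem} already established, Lemma \ref{lemma_resid_class_new} applies and produces a set $R$ with $|R|=|C|$ fulfilling subpoint 4 and a system of treks $\Psi:R\rightrightarrows C$ fulfilling subpoint 5. All requirements of Theorem \ref{main_theorem} are therefore met for this $R$, completing the proof. The only potential obstacle is the residue-class verification for (C6.1), but this reduces immediately to the observation that $F^{\textnormal{obs}}$ was chosen to consist of exactly $l^U_{m_U}$ consecutive time indices, which is precisely why the specific length in Lemma \ref{lemma_BU_Fobs_example} was stipulated.
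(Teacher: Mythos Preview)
Your proposal is correct and follows essentially the same approach as the paper: invoke Lemma \ref{lemma_BU_Fobs_example} for subpoints 1 and 2, use the assumed disjointness for subpoint 3, verify (C6.1) via the observation that the $U$-parents of the $l^U_{m_U}$ consecutive vertices in $F^{\textnormal{obs}}$ under any fixed $l^{O^{i_0}U}_w$-edge are themselves $l^U_{m_U}$ consecutive $U$-vertices and hence lie in distinct $l^U_{m_U}$-residue classes, and then apply Lemma \ref{lemma_resid_class_new} for subpoints 4 and 5. This is exactly the argument the paper sketches in the sentence preceding the corollary.
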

\addtocounter{theorem}{-17}
\begin{example}[continued]
Corollary \ref{corollary_identifiability} also applies to Example \ref{example1}---this fact follows from the same justifications why Lemmas \ref{lemma_resid_class1} and Lemma \ref{lemma_BU_Fobs_example} apply. Therefore, Theorem \ref{main_theorem} applies to this example.\demo
\end{example}
\addtocounter{theorem}{16}
\begin{remark}
Corollary \ref{corollary_identifiability} also applies to Examples \ref{ex_app_1} and \ref{ex_app_2} but not to Example \ref{ex_app_3} in Section \ref{sec_examples} of the Appendix.
\end{remark}
For the case $d_U=1$ and $m_U>0$ and the existence of at least one $O^{i_0}$ for which there are edges from $U$ to $O^{i_0}$, Corollary \ref{corollary_identifiability} thus gives rise to a simple brute-force algorithm, where one for different $\Delta$'s
checks the requirements from Corollary \ref{corollary_identifiability}. Note that checking for \emph{different} $\Delta$'s might yield more identifiability because the observed parents of $F^{\textnormal{obs}}$ change residue-classes with changing $\Delta$, the observed parents of $Y_t$, however, do not change their residue class. Therefore, Assumption (C2) might be satisfied for some $\Delta$'s while not being satisfied for other $\Delta$'s. Similarly, the empty intersection requirement from Corollary \ref{corollary_identifiability} might also be satisfied for some $\Delta$'s but not for other $\Delta$'s.
\section{Synthetic numerical simulations}
\label{sec_numerical_experiments}
	Besides the numerical validation of Example \ref{example1} in Figure \ref{Figure_num_validation_ex_1} (and the similar numerical validations of Examples \ref{ex_app_1}, \ref{ex_app_2} and \ref{ex_app_3} in Section \ref{sec_examples} of the Appendix), we provide further (synthetic) numerical simulations.
	In particular, in this section, we empirically validate Corollary \ref{corollary_identifiability} by showing convergence of the estimated direct to the true direct causal effects for randomly drawn full time graphs and corresponding SVAR processes for the case $d_U=1$ and $d_O=2$.
	In particular, and also taking some inspiration from the simulations in \citet{thams2022identifying}, we evaluate convergence as follows:

	\textbf{Data generating process:} 
Writing $\{X_t\}_{t\in \mathbb{Z}}=\{O^2_t\}_{t\in\mathbb{Z}}$, we generate stable SVAR processes as follows: We uniformly at random draw each $m_{U}$, $m_{X}$, $m_Y$, $m_{XU}$, $m_{YU}$ from $[5]_1$ and set $m_{YX}=1$ and $m_{XY}=0$. We then draw the lags $l^{S^iS^j}_{k}$ for $i\neq j$ from $[5]_0$ and the lags $l^{S^i}_k$ from $[5]_1$ without replacement such that the number of drawn elements equals $m_{S^iS^j}$. For these drawn lags, we then check whether the resulting full time graph is acyclic. We then also check for different $\Delta\in\mathbb{Z}$'s whether Corollary \ref{corollary_identifiability} yields identifiability. 
	
	If the full time graph induced by the previously drawn lags is identifiable for some $\Delta$ via Corollary \ref{corollary_identifiability}, then we draw $10$ different sets of parameters for SVAR processes corresponding to this full time graph. We hereby draw the non-zero entries of the matrices $A^{(0)},\ldots,A^{(p)}$ uniformly at random from $(-0.9,-0.1)\cup(0.1,0.9)$ and we set the covariance matrix $\Sigma$ to the identity matrix $I_3$.  We then check whether the generated SVAR processes for the previously drawn full time graph are stable with enough margin (that is, the maximal modulus of the eigenvalues of the matrix $\mathbf{B}$ in equation \eqref{big_A} in Section \ref{appendix_proof_of_trek_rule} of the Appendix is less or equal than $0.9$). If a generated SVAR process is not stable with enough margin, then we repeat until it is (if these repetitions for all $10$ drawn SVAR processes need in total to be repeated more than $10^5$-times, we start with a new full time graph). 
	
	For a given stable SVAR process, we then generate time series for the different time series lengths $T\in\{10^2, 10^3, 10^4, 10^5, 10^6\}$ by drawing $T+1000$ noise variables from a multivariate normal distribution with the just mentioned covariance matrix $\Sigma=\textnormal{I}_3$ and then repeatedly applying the SVAR equations starting from the initial noise variables (the first $1000$-drawn values are seen as a burnin and subsequently discarded).

	\textbf{Evaluation of estimators:} We generate $1000$ such full time graphs and for each full time graph $10$ stable SVAR processes. For each of these stable SVAR processes, we generate one time series for each length $T\in\{10^2, 10^3, 10^4, 10^5, 10^6\}$. For each drawn time series we calculate the absolute difference of the estimated $X$-to-$Y$-effect to the true direct $X$-to-$Y$-effect. We then take the median out of these $10$ absolute errors. We do this calculation for all $1000$ drawn full time graphs and present these median values in a boxplot for each respective time series length. Figure \ref{Figure_Boxplots} shows the results.
	
	\textbf{Results:}
	Our code was set up such that a non-invertible $\hat{\Gamma}_{R,C}$ yields an error which, however, has not happened. Moreover, one can see that except for a few outliers the errors are typically rather small and decrease with increasing time series length. Also note that the number of points with a median estimation error strictly larger than $10^0$ is (stated from $10^2$ to $10^6$): $151$, $81$, $62$, $44$, $23$.
	
	\begin{figure}
	\centering
		\includegraphics[scale=0.3]{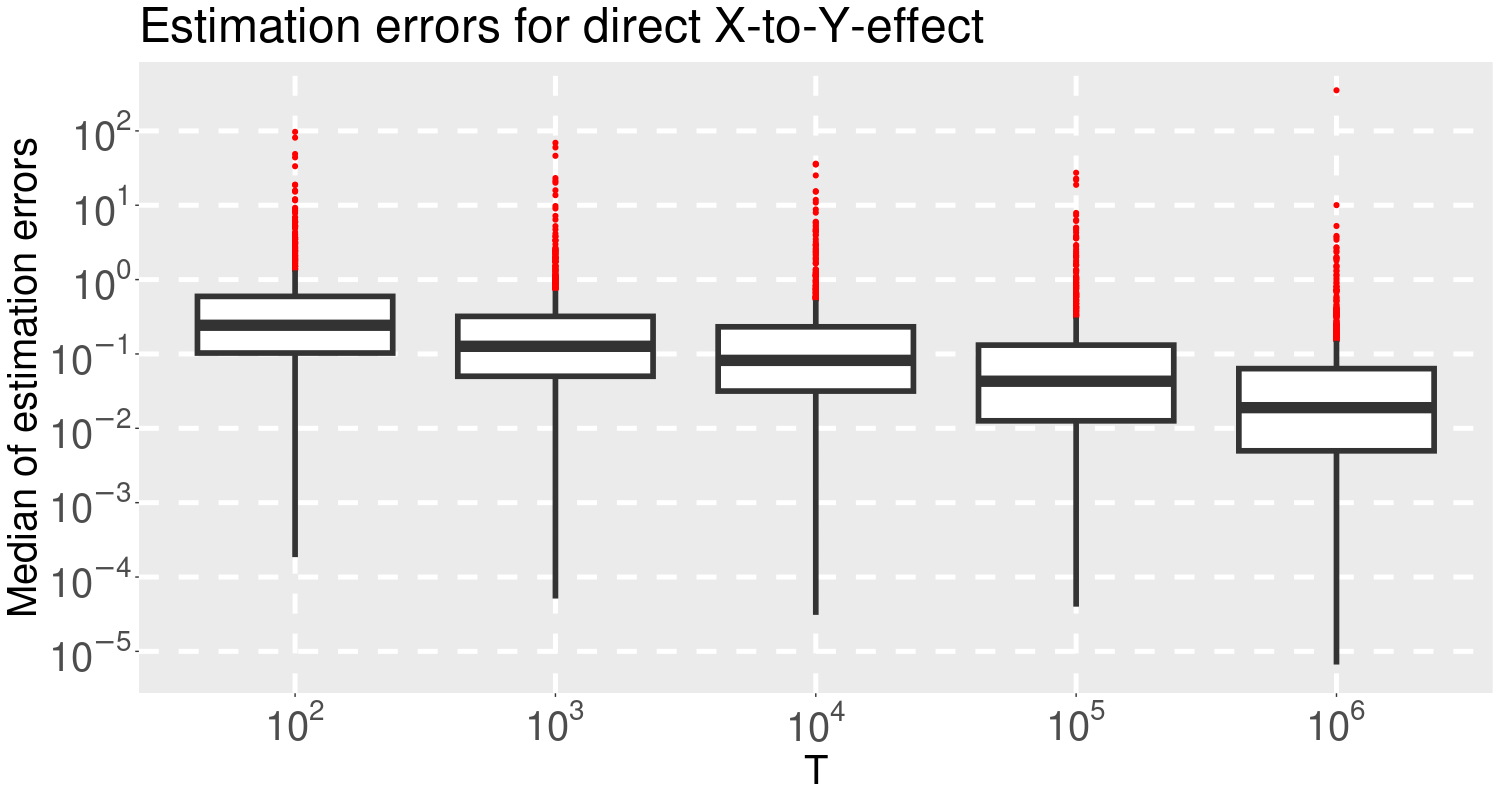}
		\caption{Boxplots for numerical experiments from Section \ref{sec_numerical_experiments}. Remark: In these boxplots, the whisker's outside the boxes correspond to the smallest and largest points within the $1.5$-inner quartile range (calculated on the original scale). Outliers are highlighted in red. The ordinate axis is log10-transformed. The number of points with a median estimation error strictly larger than $10^0$ is (stated from $10^2$ to $10^6$): $151$, $81$, $62$, $44$, $23$.}
		\label{Figure_Boxplots}
	\end{figure}

\section{Real-world example}
\label{real_world_example}
	In this section, we apply Theorem \ref{main_theorem} and Remark \ref{remark_generalization_main_theorem} to the recent electricity market example from \citet{tiedemann2024identifying}. The goal for this example was to identify the instantaneous price elasticity $\beta^P$ in the German-Luxemburgian(-Austrian)\footnote{On October 1st, 2018, the German-Luxemburgian-Austrian market was split into the German-Luxemburgian and Austrian market, and we will from this point, similarly to \citet{tiedemann2024identifying}, just consider the German-Luxemburgian market.} electricity market, that is, the direct causal effect of the electricity price $P$ on the demand of the electricity $D$. This direct causal effect is (assumed to be) latently confounded. To identify $\beta^P$, \citet{tiedemann2024identifying} make use of the instrumental time series approach from \citet{thams2022identifying}. More specifically, \citet{tiedemann2024identifying} use total wind generation $W$ (based on both onshore and offshore wind farms) as an instrumental time series (which implies that total wind generation is observed). We, in contrast, pretend that $W$ is unobserved and only require some qualitative knowledge about $W$ which, however, is rather minimal. 
	
	\textbf{Models:}
	\citet{tiedemann2024identifying} suggest three different models. In all three models, $W$ is assumed to be a (linear) autoregressive (AR) process of some finite order.
	
	For \emph{model 1}, the demand equation and price equation, respectively, are
	\begin{align*}
	D_t &:= D_0 + \beta^PP_t + \beta^{D_1}D_{t-1} + U^D_t,\text{ and}\\
		P_t &:= \frac{S_0-D_0}{\beta^P-\gamma^P} + \frac{\gamma^W}{\beta^P-\gamma^P}W_t-\frac{\beta^{D_1}}{\beta^P-\gamma^P}D_{t-1}+\frac{U^S_t-U^D_t}{\beta^P-\gamma^P}.
	\end{align*}
		To reflect the possibility that only part of the demand is price-sensitive, \citet{tiedemann2024identifying} also suggest \emph{model 2} given by 
	\begin{align*}
	D_t&:=D_0+\beta^P P_t + \beta^{B_1}B_{t-1}+\underbrace{U^A_t + U^B_t}_{=:U^D_t},\text{ and}\\
	P_t &:= \frac{S_0-D_0}{\beta^P-\gamma^P} + \frac{\gamma^W}{\beta^P-\gamma^P}W_t+\frac{\beta^{B_1}}{\beta^P-\gamma^P}B_{t-1}+\frac{U^S_t-U^D_t}{\beta^P-\gamma^P},
	\end{align*}
	where $B_{t}:=B_0+\beta^{B_1}B_{t-1}+U^B_t$.
	
	Finally, to incorporate the possibility that consumers react to price changes by shifting their demand over time, \citet{tiedemann2024identifying} propose \emph{model 3} given by
		\begin{align*}
	D_t &:= D_0 + \beta^PP_t + \beta^{P1}P_{t-1} + \beta^{D_1}D_{t-1} + U^D_t,\text{ and}\\
	P_t &:= \frac{S_0-D_0}{\beta^P-\gamma^P} + \frac{\gamma^W}{\beta^P-\gamma^P}W_t-\frac{\beta^{P1}}{\beta^P-\gamma^P}P_{t-1}+\frac{U^S_t-U^D_t}{\beta^P-\gamma^P}.
	\end{align*}

	In all these models, $D_0$ and $S_0$ are constants and the $U^A_t$-, $U^B_t$-, $U^D_t$- and  $U^S_t$-noise-variables are assumed to be i.i.d.\ over time and independent of each other (except for model 2 and $U^D_t$ which is given by the sum of $U^A_t$ and $U^B_t$).\footnote{We have not explicitly found the independent-of-each-other assumption in \citet{tiedemann2024identifying}. However, they use Theorems 5 and 8 of \citet{thams2022identifying} that require it, and also the semi-synthetic simulations seem to assume it.} Furthermore, it is assumed that $\beta^P-\gamma^P\neq 0$.
	
	In models 1 and 3, latent confounding is due the occurence of $U^D_t$ in both the price equation and demand equation. Thus, as $U^D_t$ is assumed to be i.i.d. over time, in models 1 and 3 the latent confounder does not have autolag. In model 2, the latent confounding is both due to $U^D_t$ and $B_t$. Hence, in model 2  the latent confounding has two components where one component has no autolag and the other has an autolag of $1$.
	
 In Table \ref{table_validity}, we present several estimators and state for which models they are valid and for which they are not. This list is of course not complete and several other not presented estimators are valid as well.

		\begin{table}
    \centering
    \small
    \begin{tabular}{ c c c c c}
    Estimator & \makecell{Valid for\\ Model 1} & \makecell{Valid for\\ Model 2} & \makecell{Valid for\\ Model 3} & \makecell{Further required \\ Assumptions}\\
    \hline
   \makecell{$R=\{P_{t-1},P_{t-2}\}$,\\ $C=\{P_t, D_{t-1}\}$} & \checkmark &  \xmark & \xmark & \makecell{Existence of some lag $l^W_i$}\\
   \hline
   \makecell{$R=\{P_{t-1},P_{t-2}\}$, \\ $C=\{P_{t-1},P_{t}\}$}  &  \xmark &  \xmark &\checkmark & \makecell{Existence of some lag $l^W_i$}\\
   \hline
   \makecell{$R=\{P_{t-1}, P_{t-2},P_{t-k\cdot l^W_i}\}$, \\ $C=\{P_{t-1},P_t, D_{t-1}\}$}  & \checkmark &  \xmark & \checkmark & \makecell{Knowledge and existence\\ of some lag $l^W_i$,\\ $k\cdot l^W_i>2$}\\
    \hline
    \makecell{$R=\{P_{t-1},P_{t-k\cdot l^W_i+1}, P_{t-k\cdot l^W_i}\}$, \\ $C=\{P_{t}, D_{t+1}, P_{t+1}\}$} &  \xmark & \checkmark &  \xmark & \makecell{Knowledge and existence\\ of some lag $l^W_i\geq 2$,\\ $k\cdot l^W_i>2$}\\
   \hline
       \makecell{$R=\{P_{t-1}, P_{t-2},P_{t-3}, P_{t- l^W_i}, P_{t- l^W_i-1}\}$, \\ $C=\{P_{t-1},D_{t-1},P_t,D_{t+l^W_i-2}, P_{t+l^W_i-2}\}$} & \checkmark & \checkmark & \checkmark & \makecell{Knowledge and existence\\ of some lag $l^W_i\geq 4$}\\
  \hline 
  \end{tabular}
  \caption{Overview of different estimators based on Theorem \ref{main_theorem} and Remark \ref{remark_generalization_main_theorem} and their validity for the different models in Section \ref{real_world_example}. Here, $k$ can be any natural number that satisfies the respective constraint in the table. (For the later experiments, we take $l^W_i=1$ and $k = 3$ for the (from the top) third estimator, $l^W_i=2$ and $k=2$ for the fourth estimator, and $l^W_i=4$ for the fifth estimator).}
  \label{table_validity}
  \end{table}
	
	\textbf{Results:}
	We provide results for both semi-synthetic and real data. Our main data source is \citet{Bundesnetzagentur}. The results for the semi-synthetic data are intended to numerically illustrate the validity of the proposed estimators.
	For the semi-synthetic data, we set the true instantaneous price elasticity $\beta^P$ to $-100$ MW/(EUR/MWh). The number of considered time points for both the semi-synthetic and real data is 27072. Further details about the simulation procedure and data sources are deferred to Section \ref{sec_app_real_world_ex} in the Appendix. Note that due to simplicity and accessibility related issues, our data set moderately differs from those of \citet{tiedemann2024identifying}. For better comparison, we thus also present the results for the estimators \#1,\ldots,\#8 from \citet{tiedemann2024identifying} (for the semi-synthetic data only for models for which these estimators were shown to be valid). All results can be found in Table \ref{tabel_real_world}.

	One can see that all estimators except the one from us valid for all three models (and there only for models 1 and 2) yield estimates very close to the true semi-synthetic value of $-100$ MW/(EUR/MWh). The estimator valid for all models applied to models 1 and 2 yields significantly larger confidence intervals for the semi-synthetic data, however, these intervals still look reasonable and somewhat symmetric around the true value of $-100$ MW/(EUR/MWh). It also seems that the larger the cardinalities of $R$ and $C$, the larger the confidence intervals.
	
	For the real-data experiments, 
	our results seem only roughly consistent with the ones from \citet{tiedemann2024identifying}. In particular, the orders of magnitude of the effects ($\approx 10^1$--$10^3$) are typically the same. However, several of our estimators yield positive price elasticities $\hat{\beta}^{P}$, which the estimators from \citet{tiedemann2024identifying} applied to our dataset do not yield. Moreover, our results seem to be a little bit more heavy-tailed than the estimators from \citet{tiedemann2024identifying}. (In particular, the real-data estimate of our estimator valid for just model 2 is 712.99 and of our estimator valid for all three models is 1875.08 and thus both fall out of the range shown in Table \ref{tabel_real_world}).
	
	For our estimators (and to some degree also for the ones from \citet{tiedemann2024identifying}), we observe that estimates differ across different models and that estimators valid for the same model also show rather different results. These facts together with the good semi-synthetic results for both our estimators (and the ones from \citet{tiedemann2024identifying}) suggest  that the underlying models are not fitting perfectly to the data in the first place and can be improved upon.
	
	\begin{table}
    \small
    \begin{tabular}{ c c c c c}
    Estimator & Model  & $\hat{\beta}^P$ (semi-synthetic)  & $\hat{\beta}^P$ (real)  \\
    \hline
   \makecell{\color{white}...\color{black}\\$R=\{P_{t-1},P_{t-2}\}$,\\ $C=\{P_t, D_{t-1}\}$} & \makecell{\color{white}...\color{black}\\1} & \makecell{\color{white}...\color{black}\\\color{white}...\color{black}\\$[-100.19, -99.80]$\\\color{white}...\color{black}} & \multirow{13}{*}{\includegraphics[scale=0.3]{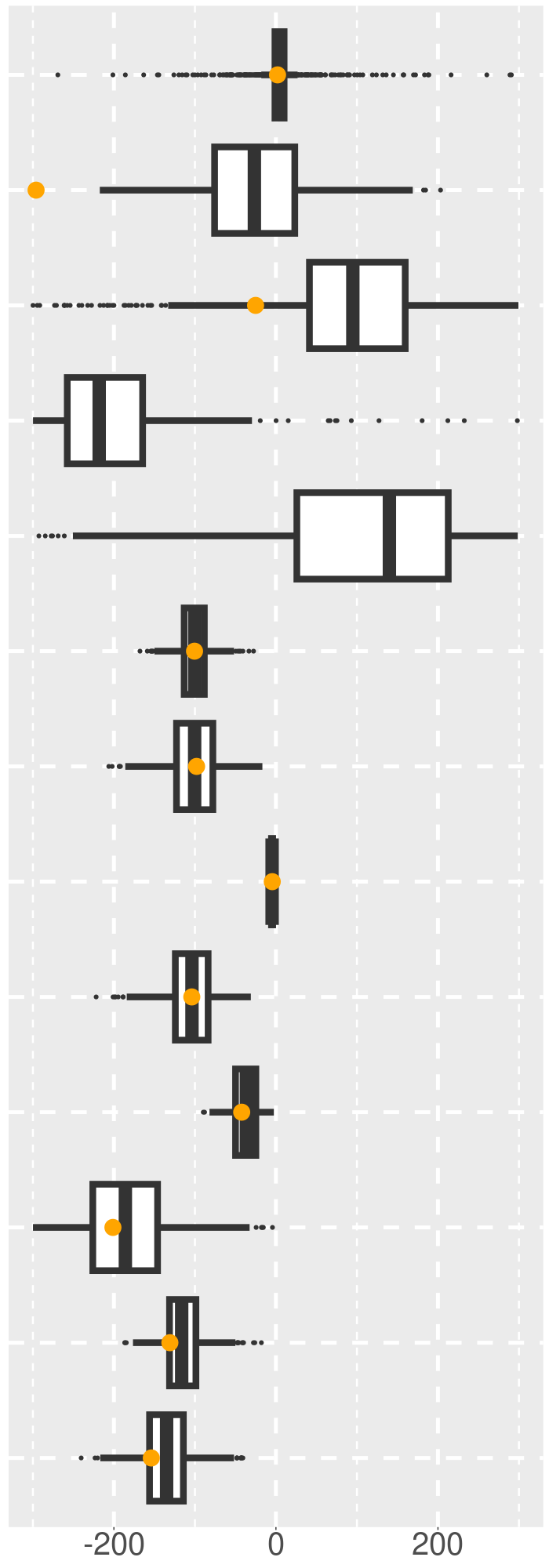}}\\\\[0.1cm]
   \cline{1-3} 
   \makecell{$R=\{P_{t-1},P_{t-2}\}$, \\ $C=\{P_{t-1},P_{t}\}$} & 3 & \makecell{\color{white}...\color{black}\\$[-100.00,-100.00]$\\\color{white}...\color{black}} &  \\
   \cline{1-3} \\[-0.2cm]
   \makecell{$R=\{P_{t-1}, P_{t-2},P_{t-3}\}$, \\ $C=\{P_{t-1},P_t, D_{t-1}\}$} & \makecell{1\\ 3} & \makecell{$[-101.24,-98.84]$\\ $[-100.00,-100.00]$}  &  \\\\[-0.2cm]
    \cline{1-3} 
    \makecell{$R=\{P_{t-1},P_{t-3},P_{t-4}\}$, \\ $C=\{P_{t}, D_{t+1}, P_{t+1}\}$} & 2 & \makecell{\color{white}...\color{black}\\$[-105.73,-94.72]$\\\color{white}...\color{black}} &  \\
   \cline{1-3}
       \makecell{$R=\{P_{t-1}, P_{t-2},P_{t-3},$\\ $P_{t- 4}, P_{t- 5}\}$, $C=\{P_{t-1},D_{t-1}$,\\ $P_t,D_{t+2}, P_{t+2}\}$} & \makecell{1\\ 2 \\ 3} & \makecell{$[-266.42,69.93]$\\ $[-236.99,58.81]$\\ $[-100.00,-100.00]$} & \\
   \cline{1-3} 
       \makecell{\citet{tiedemann2024identifying} \#1} & \makecell{2} & \makecell{\color{white}...\color{black}\\$[-100.23,-99.78]$\\\color{white}...\color{black}} & \\
   \cline{1-3} 
   \makecell{\citet{tiedemann2024identifying} \#2} & \makecell{1\\2\\3} & \makecell{$[-100.47,-99.55]$\\ $[-100.65,-99.31]$\\ $[-100.03,-99.97]$} & \\
   \cline{1-3} 
   \makecell{\citet{tiedemann2024identifying} \#3} & \makecell{1} & \makecell{\color{white}...\color{black}\\$[-100.15,-99.84]$\\\color{white}...\color{black}} & \\
   \cline{1-3}   
   \makecell{\citet{tiedemann2024identifying} \#4} & \makecell{1\\2\\3} & \makecell{$[-100.39,-99.62]$\\ $[-100.44,-99.58]$\\ $[-100.00,100.00]$} & \\
   \cline{1-3}\\[-0.2cm]  
\makecell{\citet{tiedemann2024identifying} \#5} & \makecell{1\\2} & \makecell{$[-101.16,-99.86]$\\ $[-100.90,-99.04]$} & \\\\[-0.2cm]
   \cline{1-3}\\[-0.2cm]
\makecell{\citet{tiedemann2024identifying} \#6} & \makecell{2\\3} & \makecell{ $[-101.27,-98.64]$\\ $[-100.06,-99.96]$} & \\\\[-0.2cm]
   \cline{1-3}\\[-0.2cm]    
\makecell{\citet{tiedemann2024identifying} \#7} & \makecell{1\\3} & \makecell{$[-100.38,-99.61]$\\ $[-100.00,-100.00]$} & \\\\[-0.2cm]
  \cline{1-3}  
\makecell{\citet{tiedemann2024identifying} \#8} & \makecell{1\\2\\3} & \makecell{$[-100.44,-99.56]$\\ $[-100.36,-99.63]$\\ $[-100.00,-100.00]$}&\\ \\\\
   \hline
   \end{tabular}
   \normalsize
    \caption{Simulation results for Section \ref{real_world_example}. Here,  $\hat{\beta}^P$ stands for the estimate of $\beta^P$. For the semi-synthetic experiments, we report $2.5\%$- and $97.5\%$-quantiles based on 1000 repetitions rounded to hundreths. For the real data sets we provide boxplots for $\hat{\beta}^P$ based on the block-bootstrap \citep{kuensch1989} with  $1000$ replicates and fixed block-lengths of $500$. The estimates on the entire real dataset are presented in orange (if they fall within shown interval). The column ``Model" indicates the underlying semi-synthetic data-generating model and is only relevant for the semi-synthetic estimates.}
    \label{tabel_real_world}
    \end{table}

\section{Proof of the main result}
\label{sec_proof_of_main_result}

Before proving the main result  Theorem \ref{main_theorem}, we need to introduce further trek-based lemmas. 
All these lemmas already exist more or less in analogous forms for LSEMs, see for example \citet{foygel2012half}. However, 
because SVAR processes with underlying time index set $\mathbb{Z}$ induce full time graphs with \emph{infinitely} many vertices, the original LSEM lemmas do not directly apply to our setting. Extending these LSEM lemmas to SVAR processes  with underlying time index set $\mathbb{Z}$ is non-trivial and requires some tedious calculations---there are also some subtleties regarding absolute summability and the order of summation which for LSEMs do not occur. However, it is also not surprising that such extensions are possible under suitable assumptions such as stability given that SVAR processes with underlying time index set $\mathbb{Z}$ look like ``infinite LSEMs". Because of this similarity to existing results, we defer the proofs of these lemmas to Section \ref{sec_trek_proofs} of the Appendix.

\begin{lemma} [Trek rule for stable SVAR processes and their corresponding full time graphs]
	\label{lemma_trek_rule_var_processes}
Assume a stable SVAR process $\{S_t\}_{t\in\mathbb{Z}}$ satisfying Assumptions \ref{assumption_no_instantaneous_self_edges} and \ref{assumption_acyclicity}. Let $S^{i}_{t_1}$ and $S^{j}_{t_2}$ with $t_1,t_2\in\mathbb{Z}$ be any two (potentially equal) variables in $\{S_t\}_{t\in\mathbb{Z}}$. Then,
	\begin{align}
	\label{eq_trek_rule_lemma}
\Gamma_{S^{i}_{t_1}S^{j}_{t_2}}& = \sum_{v\in V:\;t(v)\leq \min(t_1,t_2)}\;\;\sum_{\pi \in \mathcal{T}(S^{i}_{t_1},S^{j}_{t_2},v)}\pi(A, \Sigma).
\end{align}
Here, the infinite sum on the right-hand side of equation \eqref{eq_trek_rule_lemma}  has typically infinitely many summands $\sum_{\pi \in \mathcal{T}(S^{i}_{t_1},S^{j}_{t_2},v)}\pi(A, \Sigma)$ and is meant as a power series with respect to the diagonal elements of $\Sigma$ and the non-zero entries of $A^{(0)},\ldots,A^{(p)}$. Also note that each $\sum_{\pi \in \mathcal{T}(S^{i}_{t_1},S^{j}_{t_2},v)}\pi(A, \Sigma)$ is a finite sum as there are just finitely many treks with a given top node due to Remark \ref{remark_collider} and because in a trek a non-top node never has a strictly smaller time index than a top node. Furthermore, it holds that 
\begin{align}
\label{eq_trek_rule_absolute_convergence}
\sum_{v\in V:\;t(v)\leq \min(t_1,t_2)}\;\;\left|\sum_{\pi \in \mathcal{T}(S^{i}_{t_1},S^{j}_{t_2},v)}\pi(A, \Sigma)\right| < \infty,
\end{align}
and, thus, the order of the summands $\sum_{\pi \in \mathcal{T}(S^{i}_{t_1},S^{j}_{t_2},v)}\pi(A, \Sigma)$ with respect to the infinite outer sum in equations \eqref{eq_trek_rule_lemma} and \eqref{eq_trek_rule_absolute_convergence} does not matter.
\end{lemma}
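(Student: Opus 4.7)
The plan is to reduce the trek rule to the moving average representation of the SVAR process and then to a combinatorial identity for the coefficients $\Phi_k$. The starting point is that, by Assumptions \ref{assumption_no_instantaneous_self_edges} and \ref{assumption_acyclicity}, $A^{(0)}$ is permutation-similar to a strictly lower triangular matrix, so $(I_d-A^{(0)})^{-1}=\sum_{k\ge 0}(A^{(0)})^k$ is a \emph{finite} sum, and its $(i,j)$ entry is exactly the sum of path monomials over the (finitely many) directed paths in the full time graph consisting solely of instantaneous edges from $S^j_t$ to $S^i_t$. Substituting into the SVAR equation gives the VAR form \eqref{var_process_2} and, by stability, the moving average representation $S_t=\sum_{k\ge 0}\Phi_k\epsilon_{t-k}$.

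The first main step is a combinatorial identity: for every $k\ge 0$ and every $i,j\in[d]_1$,
\begin{align*}
(\Phi_k)_{ij} \;=\;\sum_{\pi\in \mathcal{P}^{(k)}_{S^j_{t-k}\to S^i_t}}\pi(A),
\end{align*}
where $\mathcal{P}^{(k)}_{S^j_{t-k}\to S^i_t}$ denotes the set of directed paths from $S^j_{t-k}$ to $S^i_t$ in the full time graph (which necessarily have total lag $k$). I would prove this by induction on $k$: the base case $k=0$ is exactly the observation about $(I_d-A^{(0)})^{-1}$ above, and for $k\ge 1$ one expands $\Phi_k=\sum_{h=1}^{p}(I_d-A^{(0)})^{-1}A^{(h)}\Phi_{k-h}$ (with the convention $\Phi_m=0$ for $m<0$), where the factor $(I_d-A^{(0)})^{-1}$ contributes the terminal instantaneous segment, $A^{(h)}$ contributes the penultimate lagged edge, and $\Phi_{k-h}$ is handled by the induction hypothesis; uniqueness of the decomposition of a directed path into its last lagged edge followed by an instantaneous segment ensures that each path is counted exactly once.

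The second main step is the covariance computation. Using $\mathbb{E}[\epsilon_{t_1-k}\epsilon_{t_2-\ell}^T]=\Sigma\cdot \mathbf{1}\{t_1-k=t_2-\ell\}$ and the diagonality of $\Sigma$, one gets
\begin{align*}
\Gamma_{S^i_{t_1}S^j_{t_2}}
=\sum_{v=S^n_{t_v}:\,t_v\le\min(t_1,t_2)}
(\Phi_{t_1-t_v})_{in}\,\Sigma_{nn}\,(\Phi_{t_2-t_v})_{jn}.
\end{align*}
Plugging in the combinatorial identity of Step 1 and distributing the product over left- and right-side path sums reorganizes the $(v)$-summand as $\sum_{\pi\in\mathcal{T}(S^i_{t_1},S^j_{t_2},v)}\pi(A,\Sigma)$, because a trek with top node $v=S^n_{t_v}$ is exactly a pair consisting of a directed path from $v$ to $S^i_{t_1}$ and a directed path from $v$ to $S^j_{t_2}$, weighted by $\Sigma_{nn}$. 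This yields identity \eqref{eq_trek_rule_lemma}.

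The last step, and the one I expect to be the most delicate, is the absolute convergence claim \eqref{eq_trek_rule_absolute_convergence} together with the justification of the rearrangements used above. Here I would exploit that stability plus finite noise variance gives componentwise absolute summability of $\{\Phi_k\}_{k\ge 0}$. Applying Step 1 with every parameter replaced by its absolute value shows that the trek series with absolute values is dominated entrywise by expressions of the form $\sum_{k}|\Phi_k|_{i,n}\Sigma_{nn}|\Phi_{k+t_2-t_1}|_{j,n}$, which is finite; this simultaneously establishes \eqref{eq_trek_rule_absolute_convergence}, legitimizes interchanging the sum over $v$ with the sums hidden inside $(\Phi_k)_{in}$ and $(\Phi_\ell)_{jn}$, and shows that the order of summation in \eqref{eq_trek_rule_lemma} is irrelevant. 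The subtlety is that on the full-time-graph side the outer sum over $v$ is genuinely infinite, whereas the LSEM proofs rearrange only finite sums; controlling the interchange rigorously through the $\sum_k\|\Phi_k\|<\infty$ bound is the main technical work that Section~\ref{sec_trek_proofs} of the Appendix would carry out.
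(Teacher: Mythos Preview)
Your approach is correct but follows a genuinely different route than the paper. The paper works from the closed-form identity $\text{vec}(\Gamma_S(0)) = (I_{d^2} - B^{(1)} \otimes B^{(1)})^{-1} \text{vec}\bigl(B^{(0)}\Sigma (B^{(0)})^T\bigr)$ (Lütkepohl), expands the inverse as a von Neumann series $\sum_m (B^{(1)} \otimes B^{(1)})^m$, and reads off the trek interpretation from Kronecker-product algebra; general lags are then obtained via the Yule-Walker equations, and SVAR($p$) is reduced to the order-$1$ case through the companion matrix $\mathbf{B}$, with absolute convergence argued via square-summability of the entries of $(B^{(1)})^m$. Your argument instead establishes the path-monomial identity for the moving-average coefficients $(\Phi_k)_{ij}$ by induction on $k$ and reads the trek rule directly off the covariance formula $\sum_v (\Phi_{t_1-t_v})_{in}\,\Sigma_{nn}\,(\Phi_{t_2-t_v})_{jn}$. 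This is more streamlined: it treats all orders $p$ and all lags uniformly and avoids the vec/Kronecker machinery and the separate Yule-Walker step. One small remark: in Step~3 you do not actually need to ``replace every parameter by its absolute value''; taking absolute values of the $(\Phi_k)_{in}$ entries themselves already gives $\bigl|\sum_{\pi}\pi(A,\Sigma)\bigr| = |(\Phi_{t_1-t_v})_{in}|\,\Sigma_{nn}\,|(\Phi_{t_2-t_v})_{jn}|$, whose sum over $v$ is finite by the componentwise absolute summability of $\{\Phi_k\}$.
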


\begin{remark}
	\label{remark_counterexample_trek_rule}
	An assumption like stability is generally required to ensure that a trek rule for SVAR processes holds. For a counterexample, see Section \ref{counter_ex_trek_rule} in the Appendix. 
\end{remark}	

\begin{lemma}[Parent decomposition of the covariances]
	\label{lemma_parent_decomp}
	Assume a stable SVAR process satisfying Assumptions \ref{assumption_no_instantaneous_self_edges} and \ref{assumption_acyclicity}. Let $S^{i}_{t_1}$ and $S^{j}_{t_2}$  be any two vertices in the full time graph such that $S^{i}_{t_1}$ is not a descendant of $S^{j}_{t_2}$.
For each $q\in \textnormal{pa}(S^{j}_{t_2})$, let $k_q:=\textnormal{lag}(q \rightarrow S^{j}_{t_2})$.
	Then, it holds that 	
	\begin{align}
	\label{eq_parent_decomp_new}
	\Gamma_{S^{i}_{t_1}S^{j}_{t_2}}=\sum_{q\in \textnormal{pa}(S^{j}_{t_2})} A^{(k_q)}_{S^{j}_{t_2}q}\Gamma_{S^{i}_{t_1}q}.
	\end{align}
\end{lemma}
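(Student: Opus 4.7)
The plan is to combine the SVAR structural equation for the target vertex with a vanishing-noise argument. First I would specialize Equation (1) to the $j$-th coordinate at time $t_2$, rewriting it as
\begin{align*}
S^{j}_{t_2} \;=\; \sum_{q\in \textnormal{pa}(S^{j}_{t_2})} A^{(k_q)}_{S^{j}_{t_2}q}\, q \;+\; \epsilon^{j}_{t_2},
\end{align*}
which just repackages the entries of $A^{(0)},\ldots,A^{(p)}$ into a sum over the parents of $S^{j}_{t_2}$ (using that by Assumption \ref{assumption_no_instantaneous_self_edges} the target vertex itself does not appear on the right). I would then multiply both sides by $S^{i}_{t_1}$, take expectations, and use linearity together with the mean-zero assumption on the noise to obtain
\begin{align*}
\Gamma_{S^{i}_{t_1}S^{j}_{t_2}} \;=\; \sum_{q\in \textnormal{pa}(S^{j}_{t_2})} A^{(k_q)}_{S^{j}_{t_2}q}\, \Gamma_{S^{i}_{t_1}q} \;+\; \mathbb{E}\bigl[S^{i}_{t_1}\,\epsilon^{j}_{t_2}\bigr].
\end{align*}
The whole lemma therefore reduces to showing that the residual noise cross-term vanishes under the non-descendance hypothesis.

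To handle the noise cross-term, I would invoke the moving average representation $S_t = \sum_{r=0}^{\infty} \Phi_r \epsilon_{t-r}$ guaranteed by stability, so that
\begin{align*}
\mathbb{E}\bigl[S^{i}_{t_1}\,\epsilon^{j}_{t_2}\bigr] \;=\; \sum_{r=0}^{\infty}\sum_{k=1}^{d} (\Phi_r)_{ik}\,\mathbb{E}[\epsilon^{k}_{t_1-r}\epsilon^{j}_{t_2}] \;=\; \mathbf{1}\{t_2\le t_1\}\cdot \Sigma_{jj}\cdot (\Phi_{t_1-t_2})_{ij},
\end{align*}
where the last equality uses that the noise is i.i.d.\ with diagonal covariance $\Sigma$. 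It then suffices to show $(\Phi_{t_1-t_2})_{ij}=0$ whenever $S^{i}_{t_1}$ is not a descendant of $S^{j}_{t_2}$. I would establish this by expanding $(I_d-A^{(0)})^{-1}$ as $\sum_{n\ge 0}(A^{(0)})^{n}$ (valid because Assumptions \ref{assumption_no_instantaneous_self_edges} and \ref{assumption_acyclicity} make $A^{(0)}$ nilpotent up to permutation, as already noted in Section \ref{sec_var_processes}) and substituting into \eqref{var_process_2}; iteratively expanding gives a representation of $(\Phi_r)_{ik}$ as a sum of path monomials $\pi(A)$ ranging over directed paths from $S^{k}_{t_1-r}$ to $S^{i}_{t_1}$ in the full time graph. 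Under the acyclicity assumption this index set is empty precisely when $S^{i}_{t_1}$ is not a descendant of $S^{k}_{t_1-r}$, giving the desired vanishing.

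The main obstacle I anticipate is the directed-path expansion of $(\Phi_r)_{ik}$: one has to be careful interleaving the geometric series for the instantaneous part with the lagged matrices $A^{(1)},\ldots,A^{(p)}$, and to verify that the resulting rearrangements are absolutely convergent so that they legitimately collect into a sum indexed by directed paths. A cleaner alternative, which I would also flag, is to bypass the MA expansion entirely by augmenting the full time graph with an artificial source node $\nu$ that is a parent only of $S^{j}_{t_2}$ and carries $\epsilon^{j}_{t_2}$; then $\Gamma_{S^{i}_{t_1}\nu}$ is given by the trek rule (Lemma \ref{lemma_trek_rule_var_processes}), and since any trek from $\nu$ to $S^{i}_{t_1}$ would force a directed subpath from $S^{j}_{t_2}$ to $S^{i}_{t_1}$, the assumed non-descendance kills every summand and yields $\mathbb{E}[S^{i}_{t_1}\epsilon^{j}_{t_2}]=0$ directly.
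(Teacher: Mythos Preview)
Your argument is correct but follows a genuinely different route from the paper. The paper applies the trek rule (Lemma~\ref{lemma_trek_rule_var_processes}) to both $\Gamma_{S^{i}_{t_1}S^{j}_{t_2}}$ and each $\Gamma_{S^{i}_{t_1}q}$, then uses non-descendance to conclude that every trek from $S^{i}_{t_1}$ to $S^{j}_{t_2}$ must terminate via some edge $q\to S^{j}_{t_2}$; this sets up a bijection between the trek sums on the two sides, and the absolute-summability part of Lemma~\ref{lemma_trek_rule_var_processes} justifies the required rearrangements. You instead bypass the trek bijection entirely, working directly with the structural equation and the MA representation and reducing the claim to the vanishing of $(\Phi_{t_1-t_2})_{ij}$. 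That step is correct and arguably more elementary---it only needs the directed-path expansion of the MA coefficients, not the full two-sided trek machinery---but be aware that this expansion is exactly what the paper derives as an intermediate step in the proof of Lemma~\ref{lemma_trek_rule_var_processes} (the calculation around equation~\eqref{vec_3}), so you are tacitly importing that portion of the trek-rule proof. The paper's route has the advantage of staying inside the trek formalism that the rest of the article is built on; yours is closer to the ``multiply and take expectations'' arguments familiar from standard time-series texts, and avoids having to match up two infinite sums trek-by-trek. Your artificial-source-node alternative is heuristically sound but would require re-establishing the trek rule on an augmented graph that is no longer a full time graph (no repeating-edges structure), so it is less clean than your main line.
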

\noindent\textit{Proof idea.}
The basic idea behind this proof is to apply the trek rule (Lemma \ref{lemma_trek_rule_var_processes}) both to $\Gamma_{S^{i}_{t_1}S^{j}_{t_2}}$ and every $\Gamma_{S^{i}_{t_1}q}$ and then to realize that every trek from $S^{i}_{t_1}$ to $S^{j}_{t_2}$ has as last edge an edge from $q\in \textnormal{pa}(S^{j}_{t_2})$ to $S^{j}_{t_2}$ because $S^{i}_{t_1}$ is not a descendant of $S^{j}_{t_2}$.
\qedsymbol

\addtocounter{theorem}{-20}
\begin{example}
	Consider the SVAR($3$) process and the full time graph in Figure \ref{example_graph_1}. There, for example, the parent decomposition yields
	\begin{align*}
	\Gamma_{Y_{t-1}Y_t}= A^{(3)}_{YY}\Gamma_{Y_{t-1}Y_{t-3}} + A^{(1)}_{YU}\Gamma_{Y_{t-1}U_{t-1}}.
	\end{align*}
	\demo
\end{example}
\addtocounter{theorem}{19}
\begin{remark}
\label{remark_superset_parent_decomp}
Note that with artificially introducing further coefficient matrices $A^{(k)}=0$ for $k>p$ and $k<0$,  the set $\textnormal{pa}(S^{j}_{t_2})$ in Lemma \ref{lemma_parent_decomp} can be replaced by any finite superset $Q\supseteq \textnormal{pa}(S^{j}_{t_2})$. This fact follows because $A^{(k_q)}_{S^j_{t_2}q}=0$ for all $q\in Q$ which are not parents of $S^j_{t_2}$.
\end{remark}

Now to the proof of Theorem \ref{main_theorem} which we directly do for the generalization mentioned in Remark \ref{remark_generalization_main_theorem}: Consider an arbitrary but fixed element $r_1\in R$. By condition 4 from Theorem \ref{main_theorem}, $r_1$ is not a descendant of $Y_t$ and, thus, we can apply the parent-decomposition (that is, Lemma \ref{lemma_parent_decomp}) to $\Gamma_{r_1 Y_t}$ with respect to the parents of $Y_t$. Writing $P_{O}\supseteq  \textnormal{pa}^{\textnormal{obs}}(Y_t)$ and $Q_L\supseteq \textnormal{pa}^{\textnormal{lat}}(Y_t)$ for arbitrary but fixed supersets just containing observed respectively unobserved variables, and additionally using Remark \ref{remark_superset_parent_decomp}, this yields
\begin{align*}
\label{eq_pa_dec_1}
    \Gamma_{r_1Y_t} &= \underbrace{\sum_{p_i \in P_O}\Gamma_{r_1 p_i} A^{(\textnormal{lag}(p_i\rightarrow Y_t))}_{Y_t p_i}}_{\textnormal{Term for observed parents of $Y_t$}}\\
    &+ \underbrace{\sum_{q_i \in Q_L}\Gamma_{r_1 q_i} A^{(\textnormal{lag}(q_i\rightarrow Y_t))}_{Y_t q_i}}_{\textnormal{Term for latent parents of $Y_t$}}.\numberthis
\end{align*}
As $r_1\in R$ was arbitrary, we get with $\Tilde{C}:=P_O \cup Q_L$ that
\begin{align}
\label{eq_C_Tilde}
    \Gamma_{RY_t} = \Gamma_{R\Tilde{C}} \begin{pmatrix}
    A^{(\textnormal{lag}(p_1\rightarrow Y_t))}_{Y_t p_1}\\
    \vdots \\
    A^{(\textnormal{lag}(q_1\rightarrow Y_t))}_{Y_t q_1}\\
    \vdots 
    \end{pmatrix},
\end{align}
where the elements of $\Tilde{C}$ are labelled such that $\Tilde{c}_1=p_1$, $\Tilde{c}_2=p_2$ and so on, and the $|P_O|+1$-th element of $\Tilde{C}$ equals $q_1$ and so on.
If all variables in equation \eqref{eq_C_Tilde} were observed, then we could directly solve that linear equation system (assuming a unique solution). However, some covariances are with respect to the latent parents, so we cannot directly calculate these covariances.

Because of this issue, we further parent-decompose the covariances $\Gamma_{r_1f_1}$ where $r_1\in R$ and $f_1\in F^{\textnormal{obs}}$ are both arbitrary. By condition 4 from Theorem \ref{main_theorem}, $r_1$ is not a descendant of $f_1$, and thus, we can parent-decompose $\Gamma_{r_1f_1}$ with respect to the parents of $f_1$. Writing $P^f_{O}\supseteq  \textnormal{pa}^{\textnormal{obs}}(F^{\textnormal{obs}})$ and $Q^f_L\supseteq \textnormal{pa}^{\textnormal{lat}}(F^{\textnormal{obs}})$ for arbitrary but fixed supersets just containing observed respectively unobserved variables yields
\begin{align*}
    \Gamma_{r_1f_1} &= \underbrace{\sum_{p^f_i \in P^f_O}\Gamma_{r_1 p^f_i} A^{(\textnormal{lag}(p^f_i\rightarrow f_1))}_{f_1 p^f_i}}_{\textnormal{Term for observed parents of $f_1$}}\\
    &+ \underbrace{\sum_{q^f_i \in Q^f_L}\Gamma_{r_1 q^f_i} A^{(\textnormal{lag}(q^f_i\rightarrow f_1))}_{f_1 q^f_i}}_{\textnormal{Term for latent parents of $f_1$}}.
\end{align*}
Rearranging this equation by putting all covariances between observed variables on one side of the equation yields
\begin{align}
\label{eq_rearrangend_further}
    \Gamma_{r_1f_1} - \underbrace{\sum_{p^f_i \in P^f_O}\Gamma_{r_1 p^f_i} A^{(\textnormal{lag}(p^f_i\rightarrow f_1))}_{f_1 p^f_i}}_{\textnormal{Term for observed parents of $f_1$}}
    &= \underbrace{\sum_{q^f_i \in Q^f_L}\Gamma_{r_1 q^f_i} A^{(\textnormal{lag}(q^f_i\rightarrow f_1))}_{f_1 q^f_i}}_{\textnormal{Term for latent parents of $f_1$}}.
\end{align}

Note that $r_1$ is by assumption not a descendant of $\textnormal{ForbAn}$ and thus, by definition, not of $\textnormal{an}^{\textnormal{lat}}(F^{\textnormal{obs}}\cup \{Y_t\})\setminus \textnormal{an}^{\textnormal{lat}}(B_U)$. Therefore, and because $B_U$ $d$-blocks all directed paths from latent vertices with strictly smaller time index than $t_{\textnormal{inf}}(B_U)$ to $Q^f_L\setminus B_U$ by assumption (subcondition 2b in Theorem \ref{main_theorem}), 
one can now repeatedly apply the parent decomposition to the $\Gamma_{r_1q^f_i}$ (and the components of $\Gamma_{r_1\textnormal{an}^{\textnormal{lat}}(q^f_i)}$) until either all occuring covariances are between $r_1$ and elements of $B_U$ or between $r_1$ and some latent vertex $w$ that does not have further (latent) ancestors except for itself (recall here that latent variables never have observed parents by Assumption \ref{assumption_latents_have_no_observed_parents}).
As in the latter case where $w$ does not have further (latent) ancestors except for itself it also holds by assumption (subcondition 4 in Theorem \ref{main_theorem}) that $r_1$ is not a descendant of $w$, it holds that $r_1$ and $w$ then do not have a common ancestor at all, and thus, there is no trek between $r_1$ and $w$. From the trek rule, it then follows that the covariance between $r_1$ and $w$ is zero and hence the corresponding covariance-term can be dropped from the repeatedly applied parent-decompositions. Therefore, by repeatedly applying the parent-decomposition, one can write each $\Gamma_{r_1 q^f_i}$ with $q^f_i \in Q^f_L$ as a linear combination of the covariances $\Gamma_{r_1b_j}$ where $b_j\in B_U$, that is,
\begin{align*}
    \Gamma_{r_1 q^f_i} = \sum_{b_j\in B_U}\Gamma_{r_1b_j}\lambda_{q^f_i b_j}
\end{align*}
where $\lambda_{q^f_i b_j}$ is the sum of all path monomials of all directed paths from $b_j$ to $q^f_i$ that do not go through any other $b_l$ with $j\neq l$.

Plugging this ``basis-covariance-representation" back into equation \eqref{eq_rearrangend_further}, one gets
\begin{align*}
    \Gamma_{r_1f_1} - \sum_{p^f_i \in P^f_O}\Gamma_{r_1 p^f_i} A^{(\textnormal{lag}(p^f_i\rightarrow f_1))}_{f_1 p^f_i}
    &= \sum_{q^f_i \in Q^f_L} A^{(\textnormal{lag}(q^f_i\rightarrow f_1))}_{f_1q^f_i}\sum_{b_j\in B_U}\Gamma_{r_1b_j}\lambda_{q^f_i b_j}\\
    &=:\sum_{b_j\in B_U}\Gamma_{r_1b_j}\Tilde{\lambda^1}_{b_j}
\end{align*}
where $\Tilde{\lambda^1}_{b_j}$ is the sum of all path monomials of all directed paths from $b_j$ to $f_1$ not going through any other $b_l$ with $j\neq l$ and that only use latent vertices except for the last vertex $f_1$. 

As $f_1$ was arbitrary, one gets in matrix notation 
\begin{align}
\label{eq_matrix_not}
\begin{pmatrix}
 \Gamma_{r_1f_1} - \sum_{p^f_i \in P^f_O}\Gamma_{r_1 p^f_i} A^{(\textnormal{lag}(p^f_i\rightarrow f_1))}_{f_1 p^f_i}\\
  \Gamma_{r_1f_2} - \sum_{p^f_i \in Q^f_L}\Gamma_{r_1 p^f_i} A^{(\textnormal{lag}(p^f_i\rightarrow f_2))}_{f_2 p^f_i}\\
  \vdots
\end{pmatrix}
=
   \Tilde{\Lambda}\cdot 
   \begin{pmatrix}
   \Gamma_{r_1b_1}\\
   \Gamma_{r_1b_2}\\
   \vdots 
   \end{pmatrix} 
\end{align}
where $\Tilde{\Lambda}$ is the matrix with entries defined by $\Tilde{\Lambda}_{kj}=\Tilde{\lambda^k}_{b_j}$ (where the $k$ refers to $f_k$).
Note that $\Tilde{\Lambda}$ is the same for each choice of $r_1$.

From condition 2c in Theorem \ref{main_theorem} together with Lemma \ref{lemma:invertible_lambda_new} in Section \ref{sec_invertible_lambda_new} of the Appendix follows the invertibility of $\Tilde{\Lambda}$ in generic settings and, thus, one obtains the $\Gamma_{r_1b_j}$ in generic settings by solving equation \eqref{eq_matrix_not} for the $\Gamma_{r_1b_j}$.

Next, again note that $r_1$ is by assumption not a descendant of $\textnormal{ForbAn}$ and thus, by definition, not of $\textnormal{an}^{\textnormal{lat}}(F^{\textnormal{obs}}\cup \{Y_t\})\setminus \textnormal{an}^{\textnormal{lat}}(B_U)$. From the exact same argument as for $q^f_i$ (except for using subcondition 1 instead of subcondition 2), one can therefore by repeatedly applying the parent-decomposition write each $\Gamma_{r_1 q_i}$ with $q_i \in Q_L$ as a linear combination of the covariances $\Gamma_{r_1b_j}$ where $b_j\in B_U$, that is,
\begin{align}
\label{eq_parent_decomp_q_i}
    \Gamma_{r_1 q_i} = \sum_{b_j\in B_U}\Gamma_{r_1b_j}\lambda_{q_i b_j}
\end{align}
where $\lambda_{q_i b_j}$ is the sum of all path monomials of directed paths from $b_j$ to $q_i$ that do not go through any other $b_l$ with $j\neq l$.

As one has just solved for the $\Gamma_{r_1b_j}$, one can plug these into equation \eqref{eq_parent_decomp_q_i} and get each $\Gamma_{r_1 q_i}$ as a potentially non-trivial function of the $\Gamma_{r_1f_k}$'s and $\Gamma_{r_1p_i^f}$'s. Thus, one can express the $\Gamma_{r_1 q_i}$'s just in terms of observed covariances, and hence, plugging these expressions of the $\Gamma_{r_1 q_i}$'s back into equation \eqref{eq_pa_dec_1} yields a linear equation just in terms of observed covariances.

Collecting all the variables in that rewritten equation for which one took covariances of $r_1$ with, exactly yields the set $C$ from the theorem. As neither $C$ nor the functions in the linear equation system in front of the respective covariances depend on the choice of $r_1$ as $\Tilde{\Lambda}$ does not depend on $r_1$ as explained previously, one gets the linear equation system from equation \eqref{main_lin_eq}. In particular, the coefficients in front of the covariances that were with respect $P_O$ are the direct causal effects of the respective parent of $Y_t$ onto $Y_t$ (or $0$ if the respective element of $P_O$ is not a parent): This fact follows from the fact that the covariances with respect to $P_O$ are not introduced except in equation \eqref{eq_pa_dec_1} because 
\begin{align*}
  \bigr(F^{\textnormal{obs}}\cup P^f_O\bigr)\cap P_O=\emptyset
\end{align*} 
by requirement 3 in Theorem \ref{main_theorem} and because in equation \eqref{eq_pa_dec_1} the parameter in front of the respective covariance is the direct causal effect of interest.
Finally, using subpoint 5 from the assumptions in Theorem \ref{main_theorem} and Lemma \ref{lemma_unique_solvability_new} in Section \ref{sec_unique_solvability_new} of the Appendix, one gets unique solvability. (The other part of requirement 3 which stated that 
$(F^{\textnormal{obs}}\cup P^f_O)\cap \{Y_t\}=\emptyset$ ensures that a certain trivial solution is excluded---strictly speaking, subpoint 3 is already contained in subpoint 5 as Lemma \ref{lemma_unique_solvability_new} only requires subpoint 5 and not subpoint 3).
\qedsymbol

\section{Conclusion and Outlook}
\label{sec_conclusion}
	In this paper, we have shown that the time structure of an SVAR process can enable the identification of direct causal effects in the presence of latent confounding without making use of additional auxiliary variables or time series such as negative controls \citep{hu2023using}, instruments \citep{mogensen2023instrumental, michael2024instrumental, thams2022identifying} or time series that satisfy the backdoor-criterion in certain graphs \citep{eichler2007causal,eichler2010granger,eichler2012causal}.
	While similar ideas in the frequency domain have existed before \citep{mahecha2010global, schur2024decor}, an approach in the time domain and in particular for SVAR processes such as ours so far only seems to exist for relatively simple full time graphs [\citealp{malinvaud1961estimation}, particularly Section 6 in \citealp{bercu2013sharp,proia2013further}]. Furthermore, we have both derived sufficient graphical and lag-based results and illustrated the validity and applicability of our theory on several synthetic examples and a real-world example.
	
	In future research, it might be interesting to study the statistical performance of these estimators in dependence of the different parameters and lags. It might also be interesting to compare different estimators valid for the same identification problem (both including estimators based on different valid choices of $R$ and $C$ as well as other estimators for example based on instrumental time series) in terms of their statistical performance and to discuss some notion of optimal estimation. Furthermore, it might be interesting to study the robustness of our proposed estimators against model violations or to relax the knowledge of the full time graph that is required---these two points have already been hinted at in Remark \ref{remark_generalization_main_theorem}. Another future avenue might be to derive results that directly enable identification of total causal effects (and not, which is already possible so far, by combining different estimates for direct causal effects via Wright's path rule \citep{wright1934method}). 

\section{Acknowledgements}
We thank Nicolas-Domenic Reiter for helpful discussions and Alexandrine Lanson and Urmi Ninad for pointing us to related literature. J.R.\ received funding from the European Research Council (ERC) Starting Grant CausalEarth under the European Union’s Horizon 2020 research and
innovation program (Grant Agreement No.\ 948112).

	\appendix 
	\section{Glossary}
	\label{sec_glossary}
	\begin{table}[H]
		\begin{center}
			\begin{tabular}{ |c|c| } 
				\hline
				\textbf{Symbol(s)} & \textbf{Meaning}\\
				\hline
				$[k]_0$ & \makecell{$\{0,\ldots,k\}$}\\
				\hline
				$[k]_1$ & \makecell{$\{1,\ldots,k\}$}\\
				\hline
				$\mathbb{N}_{\geq k}$ & \makecell{$\{k, k+1,k+2,\ldots\}$}\\
				\hline 
				$t_{\textnormal{inf}}(B)$ and $t_{\textnormal{sup}}(B)$ & infimum and supremum time index of a set of vertices $B$, resp.\ \\
				\hline
				$\textnormal{pa}^{\textnormal{obs}}(B)$ and $\textnormal{pa}^{\textnormal{lat}}(B)$ & observed and latent parents of a set $B$ of vertices, resp.\ \\
				\hline
				$\mathbb{Z}$ & \makecell{$\{0,1,-1,2,-2,\ldots\}$}\\
				\hline
				$p$ & \makecell{Order of the SVAR-process}\\
				\hline
			 \makecell{$\{S_t\}_{t\in\mathbb{Z}}=\allowbreak\{(S^1_t,\allowbreak\ldots,\allowbreak S^{d}_t)\}_{t\in\mathbb{Z}}$\\$=\allowbreak\{(U^1_t,\allowbreak\ldots,\allowbreak U^{d_U}_t,\allowbreak O^1_t,\allowbreak\ldots,\allowbreak O^{d_O}_t)\}_{t\in\mathbb{Z}}$}  & \makecell{SVAR process of order $p\in\mathbb{N}_{\geq 0}$ with latent component time series\\ $\{(U^1_t,\ldots,U^{d_U}_t)\}_{t\in\mathbb{Z}}$ and observed component time series\\ $\{(O^1_t,\ldots,O^{d_O}_t)\}_{t\in\mathbb{Z}}$ where $\{O^1_t\}_{t\in\mathbb{Z}} = \{Y_t\}_{t\in\mathbb{Z}}$}\\
				\hline
				$A^{(0)},\ldots, A^{(p)}$ & \makecell{Coefficient matrices of the SVAR-process}\\
				\hline
				\makecell{$A^{(h)}_{jk}$ or $A^{(h)}_{S^jS^k}$ or $A^{(h)}_{S^j_tS^k_{t-h}}$}& \makecell{$(j,k)$-th entry of the matrix $A^{(h)}$}\\
				\hline 
				$\Gamma_S(h)$ & Autocovariance function of $\{S_t\}_{t\in\mathbb{Z}}$ at lag $h\in\mathbb{Z}$\\
				\hline
				$\Gamma^{\textnormal{obs}}_S(h)$ & Autocovariance function of  $\{(O^1_t,\ldots,O^{d_O}_t)\}_{t\in\mathbb{Z}}$ at lag $h\in\mathbb{Z}$\\
				\hline
				$\Gamma_{S^i_tS^j_{\Tilde{t}}}$ & Covariance between $S^i_t$ and $S^j_{\Tilde{t}}$ \\
				\hline
				$m_{S^iS^j}$ & \makecell{Number of $h\in[p]_0$ for which $A^{(h)}_{S^iS^j}\neq 0$ }	\\
				\hline
				$m_{S^i}$ & \makecell{Number of $h\in[p]_1$ for which $A^{(h)}_{S^iS^i}\neq 0$ }	\\
				\hline
				$l^{S^iS^j}_{1}, \ldots, l^{S^iS^j}_{m_{S^iS^j}}$ & \makecell{Lags of the causal relationships from the $S^j$- to the $S^i$-variables,\\ that is, the indices $h\in[p]_0$ for which $A^{(h)}_{S^iS^j}\neq 0$} \\
				\hline
				$l^{S^i}_{1}, \ldots, l^{S^i}_{m_{S^i}}$ & \makecell{Lags of the causal relationships from the $S^i$-variables onto itself,\\ that is, the indices $h\in[p]_1$ for which $A^{(h)}_{S^iS^i}\neq 0$} \\
				\hline
			\end{tabular}
		\end{center}
		\caption{Overview of frequently used notation}
	\end{table}
\section{Proof of Proposition \ref{proposition_consistency} from Section \ref{sec_main_identifiability_result}}
\label{sec_consistency}
For $T_0\in\mathbb{Z}$ and $T\in \mathbb{N}_{\geq 1}$, consider $S_{T_0+1},\ldots,S_{t_0+T}$. Then (see Section 10.2 in \citet{hamilton2020time}), for every $t\in \mathbb{Z}$ and $T\rightarrow \infty$,
\begin{align*}
\hat{\mathbb{E}}[S_t]:=\frac{1}{T}\sum_{j=1}^T S_{T_0+j}\overset{P}{\longrightarrow} \mathbb{E}[S_t]
\end{align*}
and 
\begin{align}
\label{eq_cov_est}
\hat{\Gamma}_{S_{t-h}S_t}:=\frac{1}{T-h}\sum_{j=h+1 }^TS_{T_0+j-h}S_{T_0+j}^T \overset{P}{\longrightarrow} \Gamma_{S_{t-h}S_t}.
\end{align}
(Here, $\overset{P}{\longrightarrow}$ denotes convergence in probability). If Theorem \ref{main_theorem} applies, then $\Gamma_{RC}$ is invertible in generic settings and the unique solution of the linear equation system in equation \eqref{main_lin_eq} is given by 
\begin{align*}
v=(\Gamma_{RC})^{-1}\cdot \Gamma_{RY_t}.
\end{align*}

Now, from the convergence in probability result in equation \eqref{eq_cov_est}, it follows that every entry of $\hat{\Gamma}_{RY_t}$ and every entry of $\hat{\Gamma}_{RC}$ converges in probability to the corresponding  entry of $\Gamma_{RY_t}$ and $\Gamma_{RC}$, respectively. As convergence in probability of every entry of a vector or matrix implies that the entire vector or matrix converges in probability to the vector or matrix made up by the individual component-limits, it follows that $\hat{\Gamma}_{RY_t}\overset{P}{\longrightarrow} \Gamma_{RY_t}$ and $\hat{\Gamma}_{RC}\overset{P}{\longrightarrow}\Gamma_{RC}$. As the map (where $\textnormal{GL}_k(\mathbb{R})$ denotes the general linear group of degree $k\in\mathbb{N}_{\geq 1}$ over the field $\mathbb{R}$)
\begin{align*}
f:\; \textnormal{GL}_k (\mathbb{R}) \rightarrow \textnormal{GL}_k(\mathbb{R}),\; B\mapsto B^{-1}
\end{align*}
is continuous for all $k\in\mathbb{N}_{\geq 1}$, and because matrix multiplication is continuous as well, it follows from the continuous mapping theorem \citep{mann1943} that 
\begin{align*}
(\hat{\Gamma}_{RC})^{-1}\cdot \hat{\Gamma}_{RY_t}\overset{P}{\longrightarrow }(\Gamma_{RC})^{-1}\cdot \Gamma_{RY_t}=v.
\end{align*}

\section{Proofs for lag-based lemmas from Section \ref{sec_sufficient_lag_based_criteria}}
\subsection{Proof of Lemma \ref{lemma_resid_class1}}

\begin{proof}
We construct a system of treks $\Psi:R\rightrightarrows C$ as follows: For every observed component time series $\{O^i_t\}_{t\in \mathbb{Z}}$ if $C^{(1)}_{O^i}\neq \emptyset$, then the treks starting in $R^{(1)}_{O^i}$ go to $C^{(1)}_{O^i}$; and if $C^{(2)}_{O^i}\neq \emptyset$, then the treks starting in $R^{(2)}_{O^i}$ go to $F^{\textnormal{obs}}_{O^i}=C^{(2)}_{O^i}$ (this combination of sources and targets is possible because $|R^{(2)}_{O^i}|=|C^{(2)}_{O^i}|$ and $|R^{(1)}_{O^i}|=|C^{(1)}_{O^i}|$ by Assumption (C1)). By Assumptions (C2) and (C3), for each observed component time series $\{O^i_t\}_{t\in \mathbb{Z}}$ for which $C^{(1)}_{O^i}\neq \emptyset$, there exists a lag $l^{O^i}_{j_i}$ such that all vertices in $C^{(1)}_{O^i}$ with time index in $[t-\tau_{O^i}-(l^{O^i}_{j_i}-1),\infty)$ are in different $l^{O^i}_{j_i}$-residue classes and such that for each $c_i\in C^{(1)}_{O^i}$ there exists exactly one $r_i \in R^{(1)}_{O^i}$ with time index in $[t-\tau_{O^i}-(l^{O^i}_{j_i}-1),t-\tau_{O^i}]$ from the same $l^{O^i}_{j_i}$-residue class with equal or strictly smaller time index than $c_i$ (this time index fact holds because the interval $[t-\tau_{O^i}-(l^{O^i}_{j_i}-1),t-\tau_{O^i}]$ contains each $l^{O^i}_{j_i}$-residue class exactly once). Therefore, for the treks starting in $R^{(1)}_{O^i}$ with time index in $[t-\tau_{O^i}-(l^{O^i}_{j_i}-1),t-\tau_{O^i}]$ we can just use treks that are directed paths using $l^{O^i}_{j_i}$-edges and going to their respective targets in $C^{(1)}_{O^i}$ with time index in $[t-\tau_{O^i}-(l^{O^i}_{j_i}-1),\infty)$ from the same $l^{O^i}_{j_i}$-residue class (these directed paths point forward in time by the fact that the time index of $r_i$ is equal or strictly smaller than the time index of $c_i$). For the $c_i\in C^{(1)}_{O^i}$ with time index in $(-\infty, t-\tau_{O^i}-l^{O^i}_{j_i}]$, just choose the trivial treks, which by Assumption (C4) is possible because for such $c_i$ it holds that $c_i\in R^{(1)}_{O^i}$.
Note that all the treks going from $R^{(1)}_{O^i}$ to $C^{(1)}_{O^i}$ clearly do not intersect because the treks starting in the time interval $(-\infty, t-\tau_{O^i}-l^{O^i}_{j_i}]$ are all trivial and never go to vertices with time index strictly larger than $t-\tau_{O^i}-l^{O^i}_{j_i}$, and the treks starting in $[t-\tau_{O^i}-(l^{O^i}_{j_i}-1),t-\tau_{O^i}]$ are either trivial or go to vertices with strictly larger time indices and do so 
by just using vertices from different $l^{O^i}_{j_i}$-residue classes and $l^{O^i}_{j_i}$-edges. Also clearly, the treks going from $R^{(1)}$ to $C^{(1)}$ do not intersect because for each $\{O^i_t\}_{t\in \mathbb{Z}}$ for which $C^{(1)}_{O^i}\neq \emptyset$ the treks going from $R^{(1)}_{O^i}$ to $C^{(1)}_{O^i}$ do not intersect and because each trek from $R^{(1)}$ to $C^{(1)}$ just uses edges and vertices from one particular component time series.

Regarding $R^{(2)}$ and $C^{(2)}$: For each $\{O^i_t\}_{t\in \mathbb{Z}}$ for which $C^{(2)}_{O^i}\neq \emptyset$, the last edge in each trek starting in $R^{(2)}_{O^i}$ and ending in $C^{(2)}_{O^i}$ is the chosen $l^{O^iU^{k_i}}_{w_i}$-edge; the second last vertex for each of these treks is the respective element of $\textnormal{pa}(F^{\textnormal{obs}}_{O^i}, l^{O^iU^{k_i}}_{w_i})$ that is connected via the $l^{O^iU^{k_i}}_{w_i}$-edge to the respective element of $C^{(2)}_{O^i}$. By Assumption (C6.1), for each $\{U^k_t\}_{t\in\mathbb{Z}}$ for which $P_{U^k}\neq \emptyset$, there exists a lag $l^{U^k}_{j_k}$ (so in particular, $m_{U^k}>0$) such that all $p_k\in P_{U^k}$ are in different $l^{U^k}_{j_k}$-residue classes. Moreover, by Assumption (C6.2), for each $p_k\in P_{U^k}$ there exists exactly one $q_k\in Q_{U^k}$ from the same $l^{U^k}_{j_k}$-residue class. By Assumption (C5.2) together with the fact that $t_{\textnormal{inf}}(R^{(1)}_{O^i})\leq t_{\textnormal{inf}}(C^{(1)}_{O^i})$ by Assumptions (C2)--(C4), it follows that $t_{\textnormal{sup}}(R^{(2)}_{O^i})\leq t_{\textnormal{inf}}(C^{(2)}_{O^i})$ and thus that each $q_k$ has strictly smaller time index than its corresponding $p_k$. Thus, it is possible to construct for each $\{U^k_t\}_{t\in\mathbb{Z}}$ and for each $q_k\in Q_{U^k}$ a directed path just using $l^{U^k}_{j_k}$-edges from $q_k$ to the corresponding $p_k\in P_{U^k}$ from the same $l^{U^k}_{j_k}$-residue class. Finally, each $q_k$ can be connected to the respective element from $R^{(2)}_{O^i}$ by the chosen $l^{O^iU^k}_{j_k}$-edge. By construction, the treks from $R^{(2)}$ to $C^{(2)}$ do not intersect as each trek is using different $l^{U^k}_{j_k}$-residue classes and just edges linking latent and observed variables. Furthermore, by Assumption (C5.2), the fact that $t_{\textnormal{inf}}(R^{(1)}_{O^i})\leq t_{\textnormal{inf}}(C^{(1)}_{O^i})$ by Assumptions (C2)--(C4) and by construction, the treks from $R^{(1)}$ to $C^{(1)}$ and the treks from $R^{(2)}$ to $C^{(2)}$ do not intersect. Thus, in total, the treks from $\Psi$ have no-sided intersection.

Now suppose there is another system of treks $\Pi:R\rightrightarrows C$ with no sided intersection and such that $\Pi(A,\Sigma)=\Psi(A,\Sigma)$. Note that in $\Pi(A, \Sigma)=\Psi(A,\Sigma)$ there are exactly as many terms for $l^{O^i}_{j_i}$-edges as the directed paths starting in $R^{(1)}_{O^i}$ and ending in $C^{(1)}_{O^i}$ are in total long. Furthermore, note that there are exactly $2\cdot |R^{(2)}|$-terms in $\Pi(A, \Sigma)=\Psi(A,\Sigma)$ for edges linking unobserved and observed variables as every trek in $\Psi$ from $R^{(2)}$ to $C^{(2)}$ has exactly two such edges and the treks in $\Psi$ from $R^{(1)}$ to $C^{(1)}$ have no such edges. Therefore, and because every trek in $\Pi$ starts and ends in observed variables and because there are no further terms in $\Pi(A, \Sigma)=\Psi(A,\Sigma)$ for edges linking different observed component time series, at least $|R^{(1)}|$-many treks stay in the observed component time series in which they started. From the time index requirement in Assumption (C5.2), we conclude that none of the treks just using $l^{O^i}_{j_i}$-edges can start in $R^{(2)}$ or end in $C^{(2)}$ as simply not enough terms for $l^{O^i}_{j_i}$-edges are available in $\Pi(A, \Sigma)=\Psi(A,\Sigma)$.
Similarly,  by the fact that treks in $\Psi$ starting at vertices in $R^{(1)}_{O^i}$ with time index strictly less than or equal to $t-\tau_{O^i}-l^{O^i}_{j_i}$ are trivial and hence never go to larger time indices, it follows that also the treks in $\Pi$ starting at vertices in $R^{(1)}_{O^i}$ with time index strictly less than or equal to $t-\tau_{O^i}-l^{O^i}_{j_i}$ cannot go to a target from $C^{(1)}_{O^i}$ with time index strictly larger than or equal to $t-\tau_{O^i}-(l^{O^i}_{j_i}-1)$, as for doing so also simply not enough terms for $l^{O^i}_{j_i}$-edges are available in $\Pi(A, \Sigma)=\Psi(A,\Sigma)$.
Thus and because all vertices in $C^{(1)}_{O^i}$ with time index strictly larger than or equal to $\tau_{O^i}-(l^{O^i}_{j_i}-1)$ are in different $l^{O^i}_{j_i}$-residue classes and because the only terms for linking different $O^i$-vertices in $\Pi(A, \Sigma)=\Psi(A,\Sigma)$ are terms for $l^{O^i}_{j_i}$-edges, we conclude that all treks in $\Pi$ starting in $R^{(1)}$ are as in $\Psi$.

Regarding the treks in $\Pi$ starting in $R^{(2)}$: As all terms in $\Pi(A,\Sigma)=\Psi(A,\Sigma)$ for $l^{O^i}_{j_i}$-edges are already required for the treks starting in $R^{(1)}$, the treks starting in $R^{(2)}$ need to go directly to unobserved variables, similarly the last edge of each such trek in $\Pi$ must be from an unobserved to an observed variable. Also note that because there are exactly $2\cdot |R^{(2)}|$-terms in $\Pi(A, \Sigma)=\Psi(A,\Sigma)$ for edges linking unobserved and observed variables and no terms for edges linking different latent component time series, each trek starting in $R^{(2)}$ can just use $l^{U^k}_{j_k}$-edges (corresponding to the $\{U^k_t\}_{t\in\mathbb{Z}}$ they have reached after their first edge) except for their first and last edge. Also note that, because just \emph{one} particular edge type, namely $l^{O^iU^{k_i}}_{w_i}$, for linking each observed component time series $\{O^i_t\}_{t\in\mathbb{Z}}$ for which $C^{(2)}_{O^i}\neq \emptyset$ to the latent variables has been used in $\Psi$, each trek in $\Pi$ starting in $R^{(2)}$ reaches the exact same latent $l^{U^k}_{j_k}$-residue class that the corresponding trek in $\Psi$ starting from the same source reached. The fact that for linking each $q\in Q$ to its correspondent $p\in P$ from the same $l^{U^k}_{j_k}$-residue class only terms in $\Pi(A,\Sigma)=\Psi(A,\Sigma)$ for $l^{U^k}_{j_k}$-edges are left and the number of these terms equals exactly the number of minimal required $l^{U^k}_{j_k}$-edges to get from $q$ to $p$, then implies that the treks in $\Pi$ starting in $R^{(2)}$ are as the treks in $\Psi$ starting in $R^{(2)}$. Therefore, we conclude that $\Pi=\Psi$.
\end{proof}

\subsection{Proof of Lemma \ref{lemma_resid_class_new}}
\begin{proof}
The goal is to apply Lemma \ref{lemma_resid_class1}. For that, we first need to define a set of sources $R$.
We define $R=R^{(1)}\dot{\cup} R^{(2)}$ by defining $R^{(1)}$ and $R^{(2)}$ individually. To begin with, we define $R^{(1)}$ by defining its individual intersections with observed time series $R^{(1)}_{O^i}$: For $R^{(1)}_{O^i}$, take for every $\{O^i_t\}_{t\in \mathbb{Z}}$ for which $C^{(1)}_{O^i}\neq \emptyset$ the set such that for every $c_i\in C^{(1)}_{O^i}$ with time index in $[t-\tau_{O^i}-(l^{O^i}_{j_i}-1),\infty)$ there exactly exists one $r_i\in R^{(1)}_{O^i}$ from the same $l^{O^i}_{j_i}$-residue class with time index in $[t-\tau_{O^i}-(l^{O^i}_{j_i}-1),t-\tau_{O^i}]$ and such that every $c_i\in C^{(1)}_{O^i}$ with time index in $(-\infty,\tau_{O^i}-l^{O^i}_{j_i}]$ is an element of $R^{(1)}_{O^i}$. Doing so is possible because the set of $O^i$-vertices with time index in $[t-\tau_{O^i}-(l^{O^i}_{j_i}-1),t-\tau_{O^i}]$ contains vertices from each $l^{O^i}_{j_i}$-residue class exactly once and because of Assumption (C2). For every $\{O^i_t\}_{t\in \mathbb{Z}}$ for which $C^{(1)}_{O^i}= \emptyset$, let $R^{(1)}_{O^i}=\emptyset$. 
Therefore by construction, conditions (C3) and (C4) are satisfied.

We also define $R^{(2)}$ by its individual intersections with observed time series $R^{(2)}_{O^i}$: If $C^{(2)}_{O^i}=\emptyset$, then let $R^{(2)}_{O^i}=\emptyset$. Otherwise, define each $R^{(2)}_{O^i}$ by time-shifting every element of $C^{(2)}_{O^i}$ by the same amount, which is required to be divisible by $l^{U^k}_{j_k}$, to the past such that $t_{\textnormal{sup}}(R^{(2)}_{O^i})<t_{\textnormal{inf}}(R^{(1)}_{O^i})$. Write $Q:=\bigcup_{i\in [d_O]_1:\;C^{(2)}_{O^i}\neq \emptyset} \textnormal{pa}(R^{(2)}_{O^i}, l^{O^iU^{k_i}}_{w_i})$.
    Note that for each $\{U^k_t\}_{t\in\mathbb{Z}}$ for which $P_{U^k}\neq \emptyset$, 
    the vertices in $Q_{U_k}$ (which is by the repeating edges property then also not empty) cover the exact same $l^{U^k}_{j_k}$-residue classes because the time-shift is divisible by $l^{U_k}_{j_k}$---this fact also follows from the repeating edges property.  In addition, all variables in $P$ are in different $l^{U^k}_{j_k}$-residue classes by Assumption (C6.1). Thus, (C6.2) holds.
    
    Moreover, by construction, Assumption (C1) holds. Similarly by construction and because Assumption (C5.1) holds, Assumption (C5.2) holds.
    
    Besides, every element in each $R^{(1)}_{O^i}$ and $R^{(2)}_{O^i}$ has a time index smaller than or equal to $t-\tau_{O^i}$ by construction, and thus the non-descendance requirement on $R$---so condition 4---from Theorem \ref{main_theorem} is satisfied.
    Thus,  all requirements for Lemma \ref{lemma_resid_class1} are satisfied from which then the result follows.
\end{proof}    

\subsection{Proof of Lemma \ref{lemma_BU_Fobs_example}}

\begin{proof}
    By just counting elements, requirement 2a from Theorem \ref{main_theorem} immediately follows. Next, note that by construction, $B_U$ contains the $U$-vertex from $\textnormal{pa}^{\textnormal{lat}}(Y_t)\cup \textnormal{pa}^{\textnormal{lat}}(F^{\textnormal{obs}})$ with minimal time index and all $l^U_{m_U}$-many subsequent $U$-vertices. Thus, as any set of $l^U_{m_U}$-many subsequent $U$-vertices contains exactly one $U$-vertex from each $l^{U}_{m_U}$-residue class, and hence, every directed path from a $U$-vertex with a time index strictly smaller than $t_0$ to $\textnormal{pa}^{\textnormal{lat}}(Y_t)\cup \textnormal{pa}^{\textnormal{lat}}(F^{\textnormal{obs}})$ goes through $B_U$, it follows that requirements 1 an 2b from Theorem \ref{main_theorem} hold.

Regarding 2c:    By assumption, there exists some $l^{O^{i_0}U}_j$-edge linking $U$ and $O^{i_0}$. Take such an $l^{O^{i_0}U}_j$-edge as the last edge of each directed path in $\Upsilon$. Then, the second-last vertex of each directed path in $\Upsilon$ is latent and from the fact that $F^{\textnormal{obs}}$ consists of $l^{U}_{m_U}$-many subsequent vertices, it follows that all these second-last vertices are in different $l^U_{m_U}$-residue classes. Also note that the vertices of $B_U$ by construction occupy all $l^U_{m_U}$-residue classes. Thus, one can connect the respective elements of $B_U$ with the respective second-last vertices by just using $l^U_{m_U}$-edges. As all these directed paths in $\Upsilon$ use vertices from different $l^U_{m_U}$-residue-classes, no intersections occur.
    
    Now, suppose there exists another system of directed paths $\Pi:B_U\rightarrow F^{\textnormal{obs}}$ with no intersections and such that $\Pi(A)=\Upsilon(A)$. Note that the last edge of each directed path in $\Pi$ needs to be as in $\Upsilon$ by the imposed requirement on $\Pi$ that the last edge in each directed path is from a latent to an observed vertex and the fact that only one edge type linking latent and observed vertices has been used by the directed paths in $\Upsilon$. The previous edges in $\Pi$ then can just be $l^U_{m_U}$-edges, as only terms for these edges are left in $\Pi(A)=\Sigma(A)$ after excluding the $l^{U}_{m_U}$-many terms for edges linking latent and observed vertices. Then, as all second-last-vertices are in different $l^U_{m_U}$-residue classes, each directed path in $\Pi$ needs to be as in $\Upsilon$. Thus, we conclude $\Pi=\Upsilon$ and hence, requirement 2c from Theorem \ref{main_theorem} is satisfied.
\end{proof}

\section{Proofs for Lemmas from Section \ref{sec_proof_of_main_result}}
\label{sec_trek_proofs}
\subsection{Proof of Lemma \ref{lemma_trek_rule_var_processes}}
\label{appendix_proof_of_trek_rule}
\begin{proof}
We start with the proof for SVAR(1) processes.

First, recall that $A^{(0)}$ can be transformed to a strictly lower triangular matrix by permuting rows and columns (due to Assumptions \ref{assumption_no_instantaneous_self_edges} and \ref{assumption_acyclicity}) and, hence, $A^{(0)}$ is in fact nilpotent and its eigenvalues are all $0$. Therefore, we can write $B^{(0)}:=(I-A^{(0)})^{-1}=\sum_{m=0}^\infty (A^{(0)})^m = \sum_{m=0}^M (A^{(0)})^m$ for some $M\in\mathbb{N}_{\geq 0}$. Besides, note that for some $k,l\in[d]_1$ the $(k,l)$-th entry of $A^{(1)}$ equals the edge coefficient for the lag $1$-edge $S^l_{t-1}\rightarrow S^k_t$, the $(k,l)$-th entry of $A^{(0)}A^{(1)}$ equals the sum over all path monomials of directed paths from $S^l_{t-1}$ to $S^k_t$ whose first edge is a lag $1$-edge and who then use exactly one instantaneous edge; and more generally, the $(k,l)$-th entry of $(A^{(0)})^{m}A^{(1)}$ equals the sum over all path monomials of directed paths from $S^l_{t-1}$ to $S^k_t$ whose first edge is a lag $1$-edge and who then use exactly $m$ instantaneous edges. Thus, the $(k,l)$-th entry of $B^{(1)}:=(I-A^{(0)})^{-1}A^{(1)}$ equals the sum of all path monomials of directed paths from $S^l_{t-1}$ to $S^k_t$ whose first edge is a lag $1$-edge and who then use arbitrarily yet finitely many instantaneous edges. 
Therefore, the $(k,l)$-th entry of $(B^{(1)})^mB^{(0)}$ equals the sum of all path monomials of directed paths from $S^{l}_{t-m}$ to $S^k_{t}$.

Now, recall that one can write a stable SVAR process satisfying Assumptions \ref{assumption_no_instantaneous_self_edges} and \ref{assumption_acyclicity} as a stable VAR process---see equation \eqref{var_process_2} in Section \ref{sec_var_processes}---with noise covariance matrix $B^{(0)}\Sigma (B^{(0)})^T$. This stable VAR representation together with equation (2.1.32) in \citet{lutkepohl2005new} which holds for stable VAR processes implies that

 \begin{align}
 \label{vec_1}
    \text{vec}(\Gamma_S(0))=\left(I_{d^2}-B^{(1)}\otimes B^{(1)}\right)^{-1}\text{vec}\left(B^{(0)}\Sigma (B^{(0)})^T\right),
 \end{align}
 where $\text{vec}$ stands for the vector that is constructed by stacking all the columns from left to right of a matrix into a vector starting at the top and ending at the bottom, and $\otimes$ denotes the Kronecker product.
 Because $\{S_t\}_{t\in\mathbb{Z}}$ is stable, it holds (by definition)  that all eigenvalues of $B^{(1)}$ have modulus strictly less than $1$ \citep[Section 2.1.1]{lutkepohl2005new}. From standard rules on the Kronecker product it thus follows that also all eigenvalues of $B^{(1)}\otimes B^{(1)}$ have modulus less than one \citep[Sections 2.1.4 and A.11]{lutkepohl2005new}, which implies that one can write equation \eqref{vec_1} using a von Neumann series, that is, as
 \begin{align*}
 \label{vec_geo}
     \text{vec}(\Gamma_S(0))&=\left(\sum_{m=0}^\infty\left(B^{(1)}\otimes B^{(1)}\right)^m\right)\text{vec}\left(B^{(0)}\Sigma (B^{(0)})^T\right)\\
     &=\sum_{m=0}^\infty\left(B^{(1)}\otimes B^{(1)}\right)^m\text{vec}\left(B^{(0)}\Sigma (B^{(0)})^T\right)\\
     &=\sum_{m=0}^\infty\left((B^{(1)})^m\otimes (B^{(1)})^m\right)\text{vec}\left(B^{(0)}\Sigma (B^{(0)})^T\right)\\
     &=\sum_{m=0}^\infty\left((B^{(1)})^m\otimes (B^{(1)})^m\right)\cdot \left(B^{(0)}\otimes B^{(0)}\right)\text{vec}(\Sigma)\\
     &=\sum_{m=0}^\infty\left((B^{(1)})^mB^{(0)}\otimes (B^{(1)})^mB^{(0)}\right)\text{vec}(\Sigma).\numberthis
 \end{align*}
 Here, the second equality follows from standard rules about the sum of converging infinite sums and the multiplication of a scalar with a converging infinite sum; and the last three equations follow from standard rules about the Kronecker product and the vec-operator \citep[Sections A.11 and A.12]{lutkepohl2005new} and where in the last equation we use the convention that matrix multiplication is to be applied earlier than the Kronecker product.

Now, note that due to the diagonal structure of $\Sigma$, it holds that $\text{vec}(\Sigma)_{k} = 0$ for $k \notin \{1, d + 2, 2d+3,\ldots, (d-1)\cdot d + d = d^2\}$.
 
  Without loss of generality, let now $i = 1$ and start with $t_1=t$ and $t_2=t$.
Note that one can construct all treks between $S^1_t$ and $S^j_t$ with exactly $2m$-many lag-1 edges and arbitrary many lag-0 edges by choosing some top node $S^{k}_{t-m}$ with $k\in [d_1]$ and then concatenating all directed paths from $S^{k}_{t-m}$ to $S^1_t$ with all directed paths from $S^{k}_{t-m}$ to $S^j_t$. Note that the sum over all trek monomials for these treks (that is, treks using exactly $2m$-many lag-1 edges and arbitrary many lag-0 edges) is given by
  \begin{align}
 \label{vec_2}
     \sum_{k=1}^d\bigr((B^{(1)})^mB^{(0)}\bigr)_{1,k}\bigr((B^{(1)})^mB^{(0)}\bigr)_{j,k}\Sigma_{kk}.
 \end{align}
Now, note that the first $d$-entries of $\text{vec}(\Gamma_S(0))$ contain all the covariances between $S^1_t$ and $S^j_t$ and the $j$-th component with $j\in[d]_1$ of $\text{vec}(\Gamma_S(0))$ equals $\Gamma_{S^jS^1}=\Gamma_{S^1S^j}$. Also, note that the $j$-th row of
\begin{align*}
\left((B^{(1)})^mB^{(0)}\otimes (B^{(1)})^mB^{(0)}\right)\text{vec}(\Sigma), 
\end{align*}
 so the $j$-th row of the respective summand on the right-hand side of equation \eqref{vec_geo}, is exactly given by equation \eqref{vec_2}.
Thus, it follows that the $j$-th row of the right side of equation \eqref{vec_geo} 
equals
\begin{align*}
    &\sum_{m=0}^\infty \sum_{k=1}^d\bigr((B^{(1)})^mB^{(0)}\bigr)_{1,k}\bigr((B^{(1)})^mB^{(0)}\bigr)_{j,k}\Sigma_{kk}\\
    &=\sum_{m=0}^\infty \sum_{k=1}^d \sum_{\pi\in \mathcal{T}(S^1_t,S^j_t, S^k_{t-m})}\pi(A,\Sigma)\\
    &=\sum_{v\in V:\;t(v)\leq \min(t_1,t_2)}\;\; \sum_{\pi\in \mathcal{T}(S^1_t,S^j_t, v)}\pi(A,\Sigma),
\end{align*}
where the last line makes sense because the order of summation (over the top nodes) does not matter due to absolute convergence:
Note that due to standard inequalities on scalar products and norms,
\begin{align*}
\label{eq_absolute_summability}
    &\sum_{m=0}^\infty\sum_{k=1}^d\left|\bigr((B^{(1)})^mB^{(0)}\bigr)_{1,k}\bigr((B^{(1)})^mB^{(0)}\bigr)_{j,k}\Sigma_{kk}\right|\\
    &\leq \sum_{m=0}^\infty\sum_{k=1}^d\left(\bigr((B^{(1)})^mB^{(0)}\bigr)_{1,k}\right)^2 + \sum_{m=0}^\infty\sum_{k=1}^d\left(\bigr((B^{(1)})^mB^{(0)}\bigr)_{j,k}\Sigma_{kk}\right)^2\\
    &=\sum_{m=0}^\infty\sum_{k=1}^d\left(\sum_{l=1}^d\bigr((B^{(1)})^m\bigr)_{1,l}\bigr(B^{(0)}\bigr)_{l,k}\right)^2 + \sum_{m=0}^\infty\sum_{k=1}^d\left(\sum_{l=1}^d\bigr((B^{(1)})^m\bigr)_{j,l}\bigr(B^{(0)}\bigr)_{l,k}\Sigma_{kk}\right)^2\\
    &\leq 2\sum_{m=0}^\infty\sum_{k=1}^d\sum_{l=1}^d\left(\bigr((B^{(1)})^m\bigr)_{1,l}\bigr(B^{(0)}\bigr)_{l,k}\right)^2 + 2\sum_{m=0}^\infty\sum_{k=1}^d\sum_{l=1}^d\left(\bigr((B^{(1)})^m\bigr)_{j,l}\bigr(B^{(0)}\bigr)_{l,k}\Sigma_{kk}\right)^2\\
&= 2\sum_{k=1}^d\sum_{l=1}^d\left(\bigr(B^{(0)}\bigr)_{l,k}\right)^2\sum_{m=0}^\infty\left(\bigr((B^{(1)})^m\bigr)_{1,l}\right)^2 + 2\sum_{k=1}^d\sum_{l=1}^d\left(\bigr(B^{(0)}\bigr)_{l,k}\Sigma_{kk}\right)^2\sum_{m=0}^\infty\left(\bigr((B^{(1)})^m\bigr)_{j,l}\right)^2\\
&<\infty,\numberthis
\end{align*}
where the last equality follows again from a standard rule about finite sums of converging infinite series which is applicable due to the "$<\infty$" in the last line and where the "$<\infty$" follows because the stability assumption implies that each of these entry sequences is absolutely summable due to stability and thus square-summable \citep[Sections 2.1.1 and A9.1]{lutkepohl2005new}.

To get the trek rule for general  start and end points $S^1_t$ and $S^j_{t+h}$ with $h\in\mathbb{Z}$ (so $t_1=t$ and $t_2=t+h$), use the Yule-Walker equations \citep[Equation (2.1.31)]{lutkepohl2005new}
 \begin{align*}
     \Gamma_S(h):=(B^{(1)})^h \Gamma_S(0), \;\;\; \forall h\geq 0.
 \end{align*}
For $h\geq 0$, use that all treks from  $S^1_t$ going to $S^j_{t+h}$ can be constructed by taking all treks from $S^1_t$ to all $S^{l}_t$ with $l\in[d]_1$ concatenated with all directed paths starting with a lag-1 edge from the respective $S^{l}_t$ to $S^j_{t+h}$ (these directed paths need to start with a lag 1 edge, as otherwise, certain treks would be double counted because one considers \emph{all} $l\in[d]_1$). Now, as previously argued, the sum of all trek monomials for treks from $S^1_t$ to some $S^{l}_t$ equals the $l$-th component of $\text{vec}(\Gamma_S(0))$, and thus the $(l,1)$-th entry of $\Gamma_S(0)$. Also, the subsequent concatenation with all directed paths starting with a lag-1 edge from some $S^{l}_t$ to $S^j_{t+h}$ corresponds to multiplying (from the left) the $j$-th row of $(B^{(1)})^h$ with the first column
of $\Gamma_S(0)$. This multiplication exactly results in the sum of all trek monomials from $S^1_t$ going to $S^j_{t+h}$. Mathematically,
\begin{align*}
    \Gamma_{S^1_tS^j_{t+h}}&=\sum_{l=1}^d\bigr((B^{(1)})^h\bigr)_{j,l}\sum_{m=0}^\infty\sum_{k=1}^d\bigr((B^{(1)})^mB^{(0)}\bigr)_{1,k}\bigr((B^{(1)})^mB^{(0)}\bigr)_{l,k}\Sigma_{kk}\\
    &=\sum_{l=1}^d\sum_{m=0}^\infty\sum_{k=1}^d\bigr((B^{(1)})^h\bigr)_{j,l}\bigr((B^{(1)})^mB^{(0)}\bigr)_{1,k}\bigr((B^{(1)})^mB^{(0)}\bigr)_{l,k}\Sigma_{kk}\\
    &=\sum_{m=0}^\infty\sum_{k=1}^d\sum_{l=1}^d\bigr((B^{(1)})^h\bigr)_{j,l}\bigr((B^{(1)})^mB^{(0)}\bigr)_{1,k}\bigr((B^{(1)})^mB^{(0)}\bigr)_{l,k}\Sigma_{kk}\\
&=\sum_{m=0}^\infty\sum_{k=1}^d\bigr((B^{(1)})^mB^{(0)}\bigr)_{1,k}\bigr((B^{(1)})^{m+h}B^{(0)}\bigr)_{j,k}\Sigma_{kk}\\
    &=\sum_{m=0}^\infty\sum_{k=1}^d \sum_{\pi\in \mathcal{T}(S^1_t,S^j_{t+h}, S^k_{t-m})}\pi(A,\Sigma)\\
    &=\sum_{v\in V:\;t(v)\leq \min(t_1,t_2)}\;\; \sum_{\pi\in \mathcal{T}(S^1_t,S^j_{t+h}, v)}\pi(A,\Sigma),
\end{align*}
where the second equality follows from a standard rule about the multiplication of a scalar with a converging infinite sum and the third equality  follows from a standard rule about finite sums of converging infinite series and where the last line also makes sense because the order of summation (for the infinite summation) does not matter due to an argument analogous to \eqref{eq_absolute_summability}.
Thus, the Yule-Walker equations give the trek rule for all $h\geq 0$ as well.
For $h<0$, use that one can relabel the components of $\{S_t\}_{t\in\mathbb{Z}}$ such that $S^j$ and $S^1$ now flip roles and then apply the result for $h\geq 0$.

For general SVAR(p) processes: 

First, start by writing $B^{(s_w)}:=(I-A^{(0)})^{-1}A^{(s_w)}$ for all $s_w\in [p]_1$. Next, note that the sum of path monomials of all directed paths from $S^{k_1}_{t-h}$ to $S^{k_2}_{t}$ with $l,k\in [d]_1$ for $h> 0$ equals 
\begin{align}
\label{vec_3}
    \left(\left(\sum_{\substack{1\leq s_1,\ldots,s_r\leq p\\ 
			s_1+\ldots+s_r = h}}B^{(s_1)}\cdots B^{(s_r)}\right)B^{(0)}\right)_{k_2,k_1},
\end{align}
for $h=0$ equals $B^{(0)}_{k_2,k_1}$,
and for $h<0$ equals $0$.
	
One can understand Equation \eqref{vec_3} by using the block-matrix
		\begin{align}
		\label{big_A}
	\mathbf{B}:=
	\begin{pmatrix}
	B^{(1)} & B^{(2)} & \cdots & B^{(p-1)} & B^{(p)}\\
	I_d & 0 & \cdots & 0 & 0 \\
	0 & I_d & \cdots & 0 & 0 \\
	\vdots & \vdots & \ddots & \vdots & \vdots \\
	0 & 0 & \cdots & I_d & 0
	\end{pmatrix}.
	\end{align}
and taking a look at the exponents $\mathbf{B}^m$. We claim that for all $l\in\mathbb{N}_{\geq 0}$ such that $l<\min(m,p)$,
\begin{align}
\label{eq_exponents_big_A}
(\mathbf{B}^m)_{(l\cdot d +1,\ldots,l\cdot d + d),(1,\ldots,d)}=\sum_{\substack{1\leq s_1,\ldots,s_r\leq p\\ 
		s_1+\ldots+s_r = m-l}}B^{(s_1)}\cdots B^{(s_r)},
\end{align}
and if $m<p$, then for all $l$ with $m\leq l <p$
\begin{align}
\label{eq_exponents_big_A_2}
(\mathbf{B}^m)_{(l\cdot d +1,\ldots,l\cdot d + d),(1,\ldots,d)}=I_d\cdot 1_{l=m},
\end{align}
(Here, $(\mathbf{B}^m)_{(l\cdot d +1,\ldots,l\cdot d + d),(1,\ldots,d)}$ is the submatrix of $\mathbf{B}^m$ with rows $(l\cdot d +1,\ldots,l\cdot d + d)$ and columns $(1,\ldots,d)$). The claim from equation \eqref{eq_exponents_big_A_2} follows directly from the zero and unity matrix structure of $\mathbf{B}$. Regarding equation \eqref{eq_exponents_big_A}: One can prove equation \eqref{eq_exponents_big_A} by induction over $m$. Clearly, equation \eqref{eq_exponents_big_A} holds for $m=1$ as just $s_1=1$ in the summation index is possible. Now, because $\mathbf{B}^{m+1}=\mathbf{B}\cdot \mathbf{B}^m$,
we see that for all $l$ with $1\leq l < \min(m+1,p)$ due to the indicator matrix structure in $\mathbf{B}$ that
\begin{align*}
(\mathbf{B}^{m+1})_{(l\cdot d +1,\ldots,l\cdot d + d),(1,\ldots,d)}
	&=(\mathbf{B}^m)_{((l-1)\cdot d +1,\ldots,(l-1)\cdot d + d),(1,\ldots,d)}\\
	&=\sum_{\substack{1\leq s_1,\ldots,s_r\leq p\\ 
			s_1+\ldots+s_r = m-(l-1)}}B^{(s_1)}\cdots B^{(s_r)}\\
&=\sum_{\substack{1\leq s_1,\ldots,s_r\leq p\\ 
		s_1+\ldots+s_r = m+1-l}}B^{(s_1)}\cdots B^{(s_r)},
\end{align*}
where the second equality follows from the induction hypothesis.
For $l=0$ we have
\begin{align*}
(\mathbf{B}^{m+1})_{(l\cdot d +1,\ldots,l\cdot d + d),(1,\ldots,d)}&=\sum_{z=1}^{\min(m+1,p)}\biggr(B^{(z)}\cdot (\mathbf{B}^m)_{((z-1)\cdot d +1,\ldots,(z-1)\cdot d + d),(1,\ldots,d)}\biggr)\\
&=\begin{cases}
B^{(m+1)}+\sum_{z=1}^{m}\biggr(B^{(z)}\cdot \sum_{\substack{1\leq s_1,\ldots,s_r\leq p\\ 
		s_1+\ldots+s_r = m-(z-1)}}B^{(s_1)}\cdots B^{(s_r)}\biggr),\;\text{if }m<p\\
\sum_{z=1}^{p}\biggr(B^{(z)}\cdot \sum_{\substack{1\leq s_1,\ldots,s_r\leq p\\ 
		s_1+\ldots+s_r = m-(z-1)}}B^{(s_1)}\cdots B^{(s_r)}\biggr),\;\text{if }m\geq p
\end{cases}\\
	&=\sum_{\substack{1\leq s_1,\ldots,s_r\leq p\\ 
			s_1+\ldots+s_r = m+1}}B^{(s_1)}\cdots B^{(s_r)},
\end{align*}
where the second equality again follows from the induction hypothesis.

Now, using the block matrix $\mathbf{B}$, one can rewrite each SVAR(p) process as a VAR(1) process. In particular, it holds that $\mathbf{B}$ has all eigenvalues with modulus strictly less than one if and only if the stability condition holds \citep[Section 2.1.1]{{lutkepohl2005new}}. Moreover, one has \citep[Equation (2.1.39)]{{lutkepohl2005new}}
 \begin{align*}
\text{vec}(\Tilde{\Gamma})
    =(I_{(dp)^2}-\mathbf{B}\otimes \mathbf{B})^{-1}\text{vec}(\Tilde{\Sigma}),
 \end{align*}
where $\Tilde{\Sigma}$ is a $(dp)\times(dp)$-matrix for which the $d\times d$ left-upper submatrix is given by $\Tilde{\Sigma}=B^{(0)}\Sigma (B^{(0)})^T$ and all other entries are zero and where 
\begin{align*}
    \Tilde{\Gamma}:=	
     \begin{pmatrix}
		\Gamma_S(0) & \Gamma_S(1) & \ldots & \Gamma_S(p-1)\\
		\Gamma_S(-1) & \Gamma_S(0) & \ldots & \Gamma_S(p-2)\\
		\vdots & \vdots & \ddots & \vdots\\
		\Gamma_S(-(p-1)) & \Gamma_S(-(p - 2)) & \ldots & \Gamma_S(0)
		\end{pmatrix}.
\end{align*}

Now again using the von Neumann series, we get 
 \begin{align*}
 \label{eq_vec_geo2}
\text{vec}(\Tilde{\Gamma})
&=\left(\sum_{m=0}^\infty \left(\mathbf{B}\otimes \mathbf{B}\right)^m\right)\text{vec}(\Tilde{\Sigma})\\
&=\sum_{m=0}^\infty \left(\mathbf{B}\otimes \mathbf{B}\right)^m\text{vec}(\Tilde{\Sigma})\\
    &=\sum_{m=0}^\infty \left(\mathbf{B}^m\otimes \mathbf{B}^m\right)\text{vec}(\Tilde{\Sigma})\\
    &=\sum_{m=0}^\infty \left(\mathbf{B}^m\otimes \mathbf{B}^m\right)_{\cdot, (1,\ldots,d,pd+1,\ldots, pd+d,\ldots\ldots,(d-1)pd+1,\ldots,(d-1)pd+d)}\text{vec}\left(B^{(0)}\Sigma (B^{(0)})^T\right)\\
    &=\sum_{m=0}^\infty \left(\mathbf{B}^m_{\cdot, (1,\ldots,d)}\otimes \mathbf{B}^m_{\cdot, (1,\ldots,d)}\right)\text{vec}\left(B^{(0)}\Sigma (B^{(0)})^T\right)\\
        &=\sum_{m=0}^\infty \left(\mathbf{B}^m_{\cdot, (1,\ldots,d)}B^{(0)}\otimes \mathbf{B}^m_{\cdot, (1,\ldots,d)}B^{(0)}\right)\text{vec}\left(\Sigma\right)\numberthis,
 \end{align*}
where $\mathbf{B}^m_{\cdot, (1,\ldots,d)}$ denotes the submatrix of $\mathbf{B}^m$ consisting of all rows and the first $d$-columns---analogously $\left(\mathbf{B}^m\otimes \mathbf{B}^m\right)_{\cdot, (1,\ldots,d,pd+1,\ldots, pd+d,\ldots\ldots,(d-1)pd+1,\ldots,(d-1)pd+d)}$ denotes the submatrix with all columns of the form $kpd+j$ where $k\in[d-1]_0$ and $j\in[d]_1$ and with all rows---and where the third last equality follows from the zero structure of $\Tilde{\Sigma}$.

Again, let now $t_1=t$ and $t_2=t-(l-1)$ with $l\in[p]_1$ and without loss of generality, $i=1$.
For $j\in[d]_1$, the $j$-th row of $\text{vec}(\Tilde{\Gamma})$ equals $\Gamma_{S^1_{t}S^j_{t}}=\Gamma_{S^j_{t}S^1_{t}}$, and in more generality, the $(l-1)\cdot d+ j$-th row with $j\in[d]_1$ and $l\in[p]_1$ of $\text{vec}(\Tilde{\Gamma})$ equals $\Gamma_{S^1_{t+(l-1)}S^j_{t}}=\Gamma_{S^1_{t}S^j_{t-(l-1)}}$. Similarly, the $(l-1)\cdot d+ j$-th row with $j\in[d]_1$ and $l\in[p]_1$ of 
\begin{align*}
     \left(\mathbf{B}^m_{\cdot, (1,\ldots,d)}B^{(0)}\otimes \mathbf{B}^m_{\cdot, (1,\ldots,d)}B^{(0)}\right)\text{vec}\left(\Sigma\right)
\end{align*}
equals
\begin{align*}
\label{eq_sum_top_nodes_general_p}
 &\sum_{k=1}^d((\mathbf{B}^m)_{\cdot, (1,\ldots,d)}B^{(0)})_{1,k}\cdot(\mathbf{(B}^m)_{\cdot, (1,\ldots,d)}B^{(0)})_{(l-1)\cdot d+ j,k}\cdot\Sigma_{kk}\\
  &=\sum_{k=1}^d(\mathbf{B}^m)_{1, (1,\ldots,d)}(B^{(0)})_{\cdot,k}\cdot(\mathbf{B}^m)_{(l-1)\cdot d+ j,(1,\ldots,d)}(B^{(0)})_{\cdot,k}\cdot\Sigma_{kk}\\
		&=\sum_{k=1}^d\underbrace{\left(\left( \sum_{\substack{1\leq s_1,\ldots,s_r\leq p\\ 
		s_1+\ldots+s_r = m}}B^{(s_1)}\cdots B^{(s_r)}\right)_{1, (1,\ldots,d)}(B^{(0)})_{\cdot,k}1_{m\geq 1}+(B^{(0)})_{1,k}1_{m=0}\right)}_{\text{Sum of all path monomials of all directed paths from $S^k_{t-m}$ to $S^1_t$, see  eq.\ \eqref{vec_3}}}\\
		&\hspace{0.5cm}\cdot    \underbrace{\left(  \left( \sum_{\substack{1\leq s_1,\ldots,s_r\leq p\\ 
		s_1+\ldots+s_r = m-(l-1)}}B^{(s_1)}\cdots B^{(s_r)}\right)_{ j,(1,\ldots,d)}(B^{(0)})_{\cdot,k}1_{l-1 < m}+(B^{(0)})_{j,k}\cdot 1_{l-1=m}\right)}_{\text{Sum of all path monomials of all directed paths from $S^k_{t-m}$ to $S^j_{t-(l-1)}$, see  eq.\ \eqref{vec_3}}}\\
		&\hspace{0.5cm}\cdot\Sigma_{kk}.\numberthis
\end{align*}
Here, the second equality holds due to equations \eqref{eq_exponents_big_A} and \eqref{eq_exponents_big_A_2}. Furthermore, for the second equality we used that $l\in[p]_1$ by assumption and thus $l-1<p$ holds, and in addition, if $l-1=m$, then because as just argued, $l-1<p$, it holds that $m<p$ as well.

From the same concatenation-of-directed-paths argument as for SVAR(1) processes it now follows that
\begin{align*}
 \text{vec}(\Tilde{\Gamma})_{(l-1)\cdot d + j}=   \Gamma_{S^1_{t}S^j_{t-(l-1)}} &=\sum_{m=0}^\infty \left(\left(\mathbf{B}^m_{\cdot, (1,\ldots,d)}B^{(0)}\otimes \mathbf{B}^m_{\cdot, (1,\ldots,d)}B^{(0)}\right)\text{vec}\left(\Sigma\right)\right)_{(l-1)\cdot d+j}\\
    &=\sum_{m=0}^\infty \sum_{k=1}^d \sum_{\pi\in \mathcal{T}(S^1_{t},S^j_{t-(l-1)}, S^k_{t-m})}\pi(A,\Sigma)\\\\
     &=\sum_{v\in V:\;t(v)\leq \min(t_1,t_2)}\;\; \sum_{\pi\in \mathcal{T}(S^1_t,S^j_{t-(l-1)}, v)}\pi(A,\Sigma),
\end{align*}
where the last line again makes sense due to absolute convergence: Here, absolute convergence follows from an analogous argument as for SVAR(1) processes and the fact that the entries of $\mathbf{B}^m$ are absolutely summable and thus square summable \citep[Sections 2.1.1 and A9.1]{lutkepohl2005new}. Finally, using the general Yule-Walker equations \citep[Equation (2.1.36)]{{lutkepohl2005new}}
\begin{align*}
    \Gamma(h):= B^{(1)}\Gamma(h-1)+\ldots+B^{(p)}\Gamma(h-p),\;\;\;\forall h > 0,
\end{align*}
an analogous argument as for SVAR(1) processes, and the fact that the sum of absolutely convergent infinite series is also absolutely convergent due to the triangle inequality, yields the result.
\end{proof}

\subsection{Example for Remark \ref{remark_counterexample_trek_rule}}
\label{counter_ex_trek_rule}
We here present the example mentioned in Remark \ref{remark_counterexample_trek_rule}.
\addtocounter{theorem}{-20}
\begin{example}
We adapt this example from Section 2.3 in \cite{hochsprungglobal}.
	Consider the process $\{S_t\}_{t\in\mathbb{Z}}:=\{(S^1_t,S^2_t)\}_{t\in\mathbb{Z}}$ defined for all $t\in\mathbb{Z}$ by
	\begin{align*}
	S^1_t&=-\sum_{i=1}^\infty \phi^{-i}(\epsilon^1_{t+i}+\epsilon^2_{t+i-1}),\;\textnormal{and}\\
	S^2_t&=\epsilon^2_t
	\end{align*}
	where $\phi\in\mathbb{R}$ with $|\phi|>1$. Note that $\{\phi^{-i}\}_{i\in\mathbb{N}_{\geq 1}}$ is absolutely summable and hence $S^1_t$, interpreted as a limit in mean-square, exists and is uniquely defined up to $P$-nullsets (see Section C.3 in \cite{lutkepohl2005new}).
	
	It can be shown that $\{S_t\}_{t\in\mathbb{Z}}$ (see Section 2.3 in \cite{hochsprungglobal}) satisfies the VAR(1) (and thus SVAR(1)) equations
	\begin{align*}
	S^1_t&=\phi \cdot S^1_{t-1}+S^2_{t-1}+\epsilon^1_t,\;\textnormal{and} \\
	S^2_t&=\epsilon^2_t.
	\end{align*}
	Note that this VAR(1) process does not satisfy the stability assumption: We have
	\begin{align*}
	A^{(1)}=\begin{pmatrix}
	\phi & 1 \\
	0 & 0\\
	\end{pmatrix}
	\end{align*}
	and thus, $\det(I_2-A^{(1)}\lambda) = (1-\lambda \phi)$. Because $(1-\lambda \phi)$ has root $\lambda = 1/\phi$ and because $|\phi|>1$ by assumption, this VAR(1) process is not stable.
	
	Now, in the full time graph, there is no trek between $S^1_t$ and $S^2_{t+1}$. However, the covariance between $S^1_t$ and $S^2_{t+1}=\epsilon^2_{t+1}$ equals $\phi^{-2}\neq 0$.
\end{example}		
	\addtocounter{theorem}{19}
\subsection{Proof of Lemma \ref{lemma_parent_decomp}}
\label{sec_proof_parent_decomp}
This section contains the proof of Lemma \ref{lemma_parent_decomp}.
	\begin{proof}
		By the trek rule (Lemma \ref{lemma_trek_rule_var_processes}), we have 
		\begin{align*}
		\Gamma_{S^{i}_{t_1}S^{j}_{t_2}} = \Gamma_{S^{j}_{t_2}S^{i}_{t_1}} = \sum_{v\in V:\;t(v)\leq \min(t_1,t_2)}\;\;\sum_{\pi \in \mathcal{T}(S^{i}_{t_1},S^{j}_{t_2},v)}\pi(A, \Sigma).
		\end{align*}
		Similarly for all $q\in \textnormal{pa}(S^{j}_{t_2})$,
		\begin{align*}
		\Gamma_{qS^{i}_{t_1}}=\Gamma_{S^{i}_{t_1}q}&=\sum_{v\in V:\;t(v)\leq \min(t_1,t(q))}\;\;\sum_{\pi' \in \mathcal{T}(S^{i}_{t_1},q,v)}\pi'(A, \Sigma)\\
		&=\sum_{v\in V:\;t(v)\leq \min(t_1,t_2)}\;\;\sum_{\pi' \in \mathcal{T}(S^{i}_{t_1},q,v)}\pi'(A, \Sigma),
		\end{align*}
		where the last equality follows because if $t(q)<t_2$, then no trek from $S^i_{t_1}$ to $q$ has a top node at time $t_2$ as every non-top node in a trek must have a larger or equal time index than the top node.
		
		Because $S^{i}_{t_1}$ is a non-descendant of $S^{j}_{t_2}$ (so in particular $S^{i}_{t_1}\neq S^{j}_{t_2}$), every trek from $S^{i}_{t_1}$ to $S^{j}_{t_2}$ ends with an edge $q\rightarrow  S^{j}_{t_2}$ for some $q\in \textnormal{pa}(S^{j}_{t_2})$. Thus, every trek in $\mathcal{T}(S^{i}_{t_1},S^{j}_{t_2})$ can be decomposed into a trek from $\mathcal{T}(S^{i}_{t_1}, q)$ and the corresponding edge  $q\rightarrow  S^{j}_{t_2}$ for some $q\in \textnormal{pa}(S^{j}_{t_2})$.

		Thus, for every trek $\pi \in \mathcal{T}(S^{i}_{t_1},S^{j}_{t_2})$ we can decompose the trek monomial $\pi(A,\Sigma)$ into a product of the trek monomial of some trek $\pi'$ from some $\mathcal{T}(S^{i}_{t_1}, q_{\pi})$ and the corresponding edge coefficient from $A^{(k_{q_\pi})}_{S^j_{t_2}q_\pi}$, that is, 
		\begin{align*}
		\pi(A,\Sigma) = \pi'(A,\Sigma) \cdot A^{(k_{q_\pi})}_{S^j_{t_2}q_\pi}.
		\end{align*}

		Vice versa, every trek from some $\mathcal{T}(S^{i}_{t_1}, q)$ can be concatenated with the corresponding edge $q\rightarrow S^{j}_{t_2}$ to make a trek in $\mathcal{T}(S^{i}_{t_1},S^{j}_{t_2})$. 
		Therefore, there is a one-to-one correspondence between treks in $\mathcal{T}(S^{i}_{t_1},S^{j}_{t_2})$ and treks in $\bigcup_{q\in \textnormal{pa}(S^{j}_{t_2})}\mathcal{T}(S^{i}_{t_1}, q)$   concatenated with the edges from $\bigcup_{q\in \textnormal{pa}(S^{j}_{t_2})}\{q\rightarrow S^{j}_{t_2}\}$. 
		
		Hence, we can write
			\begin{align*}
	&\sum_{q\in \textnormal{pa}(S^{j}_{t_2})} A^{(k_q)}_{S^{j}_{t_2}q}\Gamma_{S^{i}_{t_1}q}\\
		&=	\sum_{q\in \textnormal{pa}(S^{j}_{t_2})}\;\;A^{(k_q)}_{S^{j}_{t_2}q}\sum_{v\in V:\;t(v)\leq \min(t_1,t_2)}\;\;\sum_{\pi' \in \mathcal{T}(S^{i}_{t_1},q,v)}\pi'(A, \Sigma)\\
	&=	\sum_{q\in \textnormal{pa}(S^{j}_{t_2})}\;\;\sum_{v\in V:\;t(v)\leq \min(t_1,t_2)}\;\;\sum_{\pi' \in \mathcal{T}(S^{i}_{t_1},q,v)}A^{(k_q)}_{S^{j}_{t_2}q}\pi'(A, \Sigma)\\
	&=	\sum_{v\in V:\;t(v)\leq \min(t_1,t_2)}\;\;\sum_{q\in \textnormal{pa}(S^{j}_{t_2})}\;\;\sum_{\pi' \in \mathcal{T}(S^{i}_{t_1},q,v)}A^{(k_q)}_{S^{j}_{t_2}q}\pi'(A, \Sigma)\\
	&=\sum_{v\in V:\;t(v)\leq \min(t_1,t_2)}\;\;\sum_{\pi \in \mathcal{T}(S^{i}_{t_1},S^{j}_{t_2},v)}\pi(A, \Sigma),
		\end{align*}
	where the second and third equality follow from the absolute summability part in Lemma \ref{lemma_trek_rule_var_processes}, and a standard rule about the multiplication of a scalar with an absolutely converging infinite sum and the fact that finite sums of absolutely converging infinite series are also absolutely converging due to the triangle inequality, respectively (absolute convergence instead of convergence is required here, as the order of summation in each of the previous equations must not matter for the expressions to make sense). 
	\end{proof}
	
\subsection{Lemma \ref{lemma_determinant}}
\label{sec_proof_lemma_determinant}
\begin{lemma} [Calculating determinants of matrices $\Gamma_{R,C}$]
	\label{lemma_determinant}
		Assume a stable SVAR process satisfying Assumptions \ref{assumption_no_instantaneous_self_edges} and \ref{assumption_acyclicity}. Let $R$ and $S$ be finite sets of vertices from some underlying stable SVAR process such that $|R|=|C|<\infty$. Let $v_1,v_2,\ldots$ be any enumeration of the vertices in the countable set $\{v\in V:\;t(v)\leq t_{\textnormal{max}}(R\cup S)\}$. Then, it holds that
	\begin{align}
	\label{eq_determinant}
		\sum_{l=1}^\infty\sum_{\substack{l_1+\ldots+l_n=l\\l_1,\ldots,l_n\in\mathbb{N}_{\geq 1}}}\;\;\sum_{\substack{\Pi=\{\pi_1,\ldots,\pi_n\}: R\rightrightarrows S\\\text{s.t.\ }\pi_i \text{ has top node }v_{l_i}\\
		\text{and s.t.\ }\Pi \text{ has no sided intersection}} }\textnormal{sgn}(\Pi)\Pi(A,\Sigma).
	\end{align}
	Here, the sum on the right-hand side of equation \eqref{eq_determinant} has typically infinitely many summands and is meant as a power series of the diagonal elements of $\Sigma$ and the non-zero entries of $A^{(0)},\ldots,A^{(p)}$. Furthermore,
	\begin{align*}
	   \sum_{l=1}^\infty\left|\sum_{\substack{l_1+\ldots+l_n=l\\l_1,\ldots,l_n\in\mathbb{N}_{\geq 1}}}\;\;\sum_{\substack{\Pi=\{\pi_1,\ldots,\pi_n\}: R\rightrightarrows S\\\text{s.t.\ }\pi_i \text{ has top node }v_{l_i}\\
		\text{and s.t.\ }\Pi \text{ has no sided intersection}} }\textnormal{sgn}(\Pi)\Pi(A,\Sigma)\right| < \infty.
	\end{align*}
\end{lemma}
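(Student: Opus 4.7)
The plan is to combine the Leibniz expansion of the determinant with the trek rule (Lemma \ref{lemma_trek_rule_var_processes}) and then cancel systems with sided intersection via a sign-reversing involution, in direct analogy with the standard LSEM proof found in \citet{sullivant2010trek} and \citet{foygel2012half}.

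First I would expand
\begin{align*}
\det(\Gamma_{R,S}) = \sum_{\sigma \in \mathcal{S}_n} \textnormal{sgn}(\sigma)\prod_{i=1}^{n}\Gamma_{r_i,s_{\sigma(i)}},
\end{align*}
and apply the trek rule to each factor, grouped by top node $v_{l_i}$, writing
\begin{align*}
\Gamma_{r_i,s_{\sigma(i)}} = \sum_{l_i=1}^{\infty}\sum_{\pi_i \in \mathcal{T}(r_i,s_{\sigma(i)},v_{l_i})}\pi_i(A,\Sigma),
\end{align*}
where each $l_i$-series is absolutely convergent by Lemma \ref{lemma_trek_rule_var_processes}. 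Since a finite product of absolutely convergent series is itself absolutely convergent and may be expanded termwise, the product $\prod_i \Gamma_{r_i,s_{\sigma(i)}}$ becomes an absolutely convergent sum over tuples $(l_1,\ldots,l_n)\in\mathbb{N}_{\geq 1}^{n}$ and over systems of treks $\Pi=\{\pi_1,\ldots,\pi_n\}$ with $\pi_i\in\mathcal{T}(r_i,s_{\sigma(i)},v_{l_i})$, each contributing $\Pi(A,\Sigma)$.

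Next I would swap the finite outer sum over $\sigma$ with this absolutely convergent sum and regroup the summands by the total index $l=l_1+\cdots+l_n$; both operations are justified by absolute convergence. The outcome is
\begin{align*}
\det(\Gamma_{R,S}) = \sum_{l=1}^{\infty}\sum_{\substack{l_1+\cdots+l_n=l\\ l_i\in\mathbb{N}_{\geq 1}}}\sum_{\substack{\Pi:R\rightrightarrows S\\ \pi_i \text{ has top node }v_{l_i}}}\textnormal{sgn}(\Pi)\,\Pi(A,\Sigma),
\end{align*}
where the innermost sum is initially over \emph{all} systems of treks with the prescribed top-node assignment, without any restriction on sided intersections (the $l$-sum may start at $1$ since the inner sum is vacuous for $l<n$).

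The key step is to argue that the contribution of systems $\Pi$ with a sided intersection cancels. I would construct a sign-reversing, monomial-preserving involution $\Phi$ on the set of such systems: fix a global linear order $\prec$ on $V$ (for instance lexicographic in the pair of component index and time index, which is well-defined because $V$ is countable), take $v$ to be the $\prec$-minimal vertex that lies on two left sides or on two right sides of some pair of treks in $\Pi$, among the offending treks pick the two with the smallest source indices, and swap the portions of those two treks on the offending side at $v$. This operation preserves the top-node multiset and the monomial $\Pi(A,\Sigma)$, but exchanges the targets of the two involved treks and hence flips the sign of the underlying permutation; thus $\Phi$ pairs up systems with sided intersection and the corresponding contributions cancel, leaving precisely the sum in \eqref{eq_determinant}. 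Because the whole series converges absolutely, the term-by-term cancellation is legitimate.

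The absolute summability claim at the end of the lemma is a by-product of this argument: before any cancellation one bounds the signed series by the unsigned series, which is finite as a finite product of the absolutely convergent series from Lemma \ref{lemma_trek_rule_var_processes} together with the finite outer sum over $\mathcal{S}_n$. The main obstacle I anticipate is the combinatorial bookkeeping needed to ensure that $\Phi$ is really an involution: one has to pin down a canonical choice of offending vertex and of offending pair of treks in a way that is invariant under applying $\Phi$ twice, and one has to verify that after swapping sides at $v$ the result is still a system of treks (no new collider is accidentally created at $v$, which is the standard reason the cancellation works for \emph{sided} intersection rather than arbitrary intersection). The justification for rearranging the infinite sum by total top-node index $l$ is the other delicate point and is handled by the same finite-product-of-absolutely-convergent-series argument already used by the authors in the proof of the trek rule.
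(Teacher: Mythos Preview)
Your proposal is correct and follows essentially the same route as the paper: Leibniz expansion, trek rule applied factor-wise, Cauchy-product regrouping by the total top-node index $l$, and a sign-reversing swap at a sided-intersection vertex to cancel the offending systems. The paper invokes Cauchy's product formula explicitly and spells out the chain of triangle-inequality bounds for the absolute-summability claim, but is in fact \emph{less} careful than you about making the swap a genuine involution (it simply asserts that a partner $\Pi'$ with flipped sign and identical monomial exists within the same $l$-block), so your flagged bookkeeping concern is apt but no more of an obstacle than what the paper already glosses over.
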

\begin{proof}
	Let $n:=|R|=|C|$, let $\mathcal{S}_n$ denote the symmetric group of degree $n$ and let $\sigma\in \mathcal{S}_n$ denote a permutation in that symmetric group. From the trek rule it follows
	that
	\begin{align*}
	    \Gamma_{r_i,s_{\sigma(i)}}&=\sum_{v\in V:\;t(v)\leq \min(t(r_i),t(s_{\sigma(i)}))}\;\;\sum_{\pi \in \mathcal{T}(r_i,s_{\sigma(i)},v)}\pi(A, \Sigma)\\
	    &=\sum_{v\in V:\;t(v)\leq t_{\textnormal{max}}(R\cup S)}\;\;\sum_{\pi \in \mathcal{T}(r_i,s_{\sigma(i)},v)}\pi(A, \Sigma)\\
	    &=\sum_{l=1}^\infty\sum_{\pi \in \mathcal{T}(r_i,s_{\sigma(i)},v_l)}\pi(A, \Sigma),
	\end{align*}
	where the second equation is true because if $\min(t(r_i),t(s_{\sigma(i)}))<t_{\textnormal{max}}(R\cup S)$, then no further treks are included because every non-top node of a trek has a time index larger than or equal to the time index of the respective top node. Also here, $v_1,v_2,\ldots$ is any enumeration of the vertices in $\{v\in V:\;t(v)\leq t_{\textnormal{max}}(R\cup S)\}$ (the third equality here holds true for \emph{every} enumeration due to absolute convergence as stated in Lemma \ref{lemma_trek_rule_var_processes}).
	By definition of the determinant and by the trek rule (Lemma \ref{lemma_trek_rule_var_processes}), we have
	\begin{align*}
	\det(\Gamma_{R,C})&=\sum_{\sigma\in S_n}\biggr (\textnormal{sgn}(\sigma)\prod_{i=1}^n\Gamma_{r_i,s_{\sigma(i)}}\biggr)\\
	&=\sum_{\sigma\in S_n}\textnormal{sgn}(\sigma)\prod_{i=1}^n\sum_{l=1}^\infty\;\;\sum_{\pi \in \mathcal{T}(r_i,s_{\sigma(i)},v_l)}\pi(A, \Sigma)\\
	&=\sum_{\sigma\in S_n}\textnormal{sgn}(\sigma)\sum_{l=1}^\infty\sum_{\substack{l_1+\ldots+l_n=l\\l_1,\ldots,l_n\in\mathbb{N}_{\geq 1}}}\;\;\prod_{i=1}^n\;\;\sum_{\pi \in \mathcal{T}(r_i,s_{\sigma(i)},v_{l_i})}\pi(A, \Sigma)\\
		&=\sum_{\sigma\in S_n}\textnormal{sgn}(\sigma)\sum_{l=1}^\infty\sum_{\substack{l_1+\ldots+l_n=l\\l_1,\ldots,l_n\in\mathbb{N}_{\geq 1}}}\;\;\sum_{\substack{\pi_1 \in \mathcal{T}(r_1,s_{\sigma(1)},v_{l_1})\\\ldots\\\pi_n \in \mathcal{T}(r_n,s_{\sigma(n)},v_{l_n})}}\;\;\prod_{i=1}^n\pi_i(A, \Sigma).
	\end{align*}
	Here, the third equality is due to Cauchy's product formula [e.g., Theorem 3.50 in \citealp{rudin1976principles}] and absolute convergence as stated in Lemma \ref{lemma_trek_rule_var_processes}  and the last equality follows due to a standard exchange of finite product and sum. Also note that due to Cauchy's product formula and absolute convergence [e.g., Exercise 13 in Chapter 3 in \citealp{rudin1976principles}],
	\begin{align*}
	\label{abs_sum_cauchy}
	    &\sum_{l=1}^\infty\left|\sum_{\substack{l_1+\ldots+l_n=l\\l_1,\ldots,l_n\in\mathbb{N}_{\geq 1}}}\;\;\sum_{\substack{\pi_1 \in \mathcal{T}(r_1,s_{\sigma(1)},v_{l_1})\\\ldots\\\pi_n \in \mathcal{T}(r_n,s_{\sigma(n)},v_{l_n})}}\;\;\prod_{i=1}^n\pi_i(A, \Sigma)\right|\\
	    &=\sum_{l=1}^\infty\left|\sum_{\substack{l_1+\ldots+l_n=l\\l_1,\ldots,l_n\in\mathbb{N}_{\geq 1}}}\;\;\prod_{i=1}^n\;\;\sum_{\pi \in \mathcal{T}(r_i,s_{\sigma(i)},v_{l_i})}\pi(A, \Sigma)\right|<\infty \numberthis
	\end{align*}
	
	Each combination of treks $\pi_1\in \mathcal{T}(r_1, s_{\sigma(1)}),\ldots,\pi_n\in \mathcal{T}(r_n, s_{\sigma(n)})$ induces a system of treks from $R$ to $S$ with permutation $\sigma$. Moreover, all systems of treks with permutation $\sigma$ can be decomposed into treks $\pi_1\in \mathcal{T}(r_1, s_{\sigma(1)}),\ldots,\pi_n\in \mathcal{T}(r_n, s_{\sigma(n)})$. Therefore, for all $\sigma\in \mathcal{S}_n$,
	\begin{align*}
\sum_{\substack{\pi_1 \in \mathcal{T}(r_1,s_{\sigma(1)},v_{l_1})\\\ldots\\\pi_n \in \mathcal{T}(r_n,s_{\sigma(n)},v_{l_n})}}\;\;\prod_{i=1}^n\pi_i(A, \Sigma)&=\sum_{\substack{\Pi=\{\pi_1,\ldots,\pi_n\}: R\rightrightarrows (s_{\sigma(1)},\ldots,s_{\sigma(n)})\\\text{s.t.\ }\pi_i \text{ has top node }v_{l_i}} }\Pi(A,\Sigma).
	\end{align*}
	Thus,
	\begin{align*}
	\label{formula_sys_treks}
&	\sum_{\sigma\in S_n}\textnormal{sgn}(\sigma)\sum_{l=1}^\infty\sum_{\substack{l_1+\ldots+l_n=l\\l_1,\ldots,l_n\in\mathbb{N}_{\geq 1}}}\;\;\sum_{\substack{\pi_1 \in \mathcal{T}(r_1,s_{\sigma(1)},v_{l_1})\\\ldots\\\pi_n \in \mathcal{T}(r_n,s_{\sigma(n)},v_{l_n})}}\;\;\prod_{i=1}^n\pi_i(A, \Sigma)\\
	&=	\sum_{\sigma\in S_n}\textnormal{sgn}(\sigma)\sum_{l=1}^\infty\sum_{\substack{l_1+\ldots+l_n=l\\l_1,\ldots,l_n\in\mathbb{N}_{\geq 1}}}\;\;\sum_{\substack{\Pi=\{\pi_1,\ldots,\pi_n\}: R\rightrightarrows (s_{\sigma(1)},\ldots,s_{\sigma(n)})\\\text{s.t.\ }\pi_i \text{ has top node }v_{l_i}} }\Pi(A,\Sigma)\\
	&=	\sum_{l=1}^\infty\sum_{\substack{l_1+\ldots+l_n=l\\l_1,\ldots,l_n\in\mathbb{N}_{\geq 1}}}\;\;\sum_{\sigma\in S_n}\;\;\sum_{\substack{\Pi=\{\pi_1,\ldots,\pi_n\}: R\rightrightarrows (s_{\sigma(1)},\ldots,s_{\sigma(n)})\\\text{s.t.\ }\pi_i \text{ has top node }v_{l_i}} }\textnormal{sgn}(\sigma)\Pi(A,\Sigma)\\	
		&=	\sum_{l=1}^\infty\sum_{\substack{l_1+\ldots+l_n=l\\l_1,\ldots,l_n\in\mathbb{N}_{\geq 1}}}\;\;\sum_{\substack{\Pi=\{\pi_1,\ldots,\pi_n\}: R\rightrightarrows S\\\text{s.t.\ }\pi_i \text{ has top node }v_{l_i}} }\textnormal{sgn}(\Pi)\Pi(A,\Sigma).\numberthis
	\end{align*}
	Here, for the second equality we used standard rules for finite sums of converging infinite series and for a constant factor multiplied to an infinite series.

	 Now, note that if $\Pi$ has sided-intersection, then one can always find another system of treks $\Pi'$ where two targets and the corresponding subtreks from that intersection to these targets are exchanged. This exchange exactly changes the sign of $\Pi'$ relative to the sign of $\Pi$ by the factor $-1$, however, that leaves the monomial $\Pi(A,\Sigma)$ unchanged, that is, $\Pi(A,\Sigma) = \Pi'(A, \Sigma)$. This exchange also either does not or does flip the two top nodes of the corresponding two treks, and hence, both $\Pi$ and $\Pi'$ occur in the sum 
	 \begin{align*}
	     \sum_{\substack{l_1+\ldots+l_n=l\\l_1,\ldots,l_n\in\mathbb{N}_{\geq 1}}}\;\;\sum_{\substack{\Pi=\{\pi_1,\ldots,\pi_n\}: R\rightrightarrows S\\\text{s.t.\ }\pi_i \text{ has top node }v_{l_i}} }\textnormal{sgn}(\Pi)\Pi(A,\Sigma).
	 \end{align*}
Therefore,
	 one can restrict the sum in equation \eqref{formula_sys_treks} to systems of treks with no sided intersection as terms for treks with sided intersection cancel out.
	 
	 From equations \eqref{abs_sum_cauchy} and \eqref{formula_sys_treks} it also follows that
	 \begin{align*}
	     	&\sum_{l=1}^\infty\left|\sum_{\substack{l_1+\ldots+l_n=l\\l_1,\ldots,l_n\in\mathbb{N}_{\geq 1}}}\;\;\sum_{\substack{\Pi=\{\pi_1,\ldots,\pi_n\}: R\rightrightarrows S\\\text{s.t.\ }\pi_i \text{ has top node }v_{l_i}} }\textnormal{sgn}(\Pi)\Pi(A,\Sigma)\right|\\
	     	&=\sum_{l=1}^\infty\left|\sum_{\substack{l_1+\ldots+l_n=l\\l_1,\ldots,l_n\in\mathbb{N}_{\geq 1}}}\;\;\sum_{\sigma\in S_n}\;\;\sum_{\substack{\Pi=\{\pi_1,\ldots,\pi_n\}: R\rightrightarrows (s_{\sigma(1)},\ldots,s_{\sigma(n)})\\\text{s.t.\ }\pi_i \text{ has top node }v_{l_i}} }\textnormal{sgn}(\sigma)\Pi(A,\Sigma)\right|\\
	     	&=\sum_{l=1}^\infty\left|\sum_{\sigma\in S_n}\;\;\textnormal{sgn}(\sigma)\sum_{\substack{l_1+\ldots+l_n=l\\l_1,\ldots,l_n\in\mathbb{N}_{\geq 1}}}\;\;\sum_{\substack{\Pi=\{\pi_1,\ldots,\pi_n\}: R\rightrightarrows (s_{\sigma(1)},\ldots,s_{\sigma(n)})\\\text{s.t.\ }\pi_i \text{ has top node }v_{l_i}} }\Pi(A,\Sigma)\right|\\
	     	&\leq \sum_{l=1}^\infty\sum_{\sigma\in S_n}\;\;|\textnormal{sgn}(\sigma)|\cdot\left|\sum_{\substack{l_1+\ldots+l_n=l\\l_1,\ldots,l_n\in\mathbb{N}_{\geq 1}}}\;\;\sum_{\substack{\Pi=\{\pi_1,\ldots,\pi_n\}: R\rightrightarrows (s_{\sigma(1)},\ldots,s_{\sigma(n)})\\\text{s.t.\ }\pi_i \text{ has top node }v_{l_i}} }\Pi(A,\Sigma)\right|\\
	     	&=\sum_{\sigma\in S_n}\sum_{l=1}^\infty\left|\sum_{\substack{l_1+\ldots+l_n=l\\l_1,\ldots,l_n\in\mathbb{N}_{\geq 1}}}\;\;\sum_{\substack{\pi_1 \in \mathcal{T}(r_1,s_{\sigma(1)},v_{l_1})\\\ldots\\\pi_n \in \mathcal{T}(r_n,s_{\sigma(n)},v_{l_n})}}\;\;\prod_{i=1}^n\pi_i(A, \Sigma)\right|<\infty
	 \end{align*}
	 where the second equality follows from standard rules about finite sums, where the ``$\leq$'' follows from the triangle inequality and the multiplicativity of the absolute value, where the last equality again follows from a standard rule about finite sums of converging infinite series and where the "$<\infty$" follows from \eqref{abs_sum_cauchy}.
\end{proof}

\subsection{Lemma \ref{lemma_unique_solvability_new}}	\label{sec_unique_solvability_new}

\begin{lemma} [Deciding invertibility of matrices $\Gamma_{R,C}$]
	\label{lemma_unique_solvability_new}
Assume a stable SVAR process satisfying Assumptions \ref{assumption_no_instantaneous_self_edges} and \ref{assumption_acyclicity}. Let $R$ and $S$ be finite sets of vertices from some underlying stable SVAR process such that $|R|=|C|<\infty$.	If there exists a system of treks $\Psi:R\rightrightarrows C$ with no sided intersection such that  $\Psi(A,\Sigma)\neq \Pi(A,\Sigma)$ for all other system of treks $\Pi:R \rightrightarrows C$ with no sided intersection, then $\Gamma_{R,C}$ is invertible in generic settings.
\end{lemma}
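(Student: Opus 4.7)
My plan is to apply Lemma \ref{lemma_determinant} together with a standard real-analyticity argument. By Lemma \ref{lemma_determinant}, $\det(\Gamma_{R,C})$ is an absolutely convergent sum over systems of treks $\Pi:R\rightrightarrows C$ without sided intersection, each contributing a signed monomial $\textnormal{sgn}(\Pi)\Pi(A,\Sigma)$ in the nonzero entries of $A^{(0)},\ldots,A^{(p)}$ and the diagonal entries of $\Sigma$. First, I would regroup terms by the formal monomials they produce: the coefficient attached to the specific monomial $\Psi(A,\Sigma)$ is the signed count $\sum_{\Pi:\,\Pi(A,\Sigma)=\Psi(A,\Sigma)}\textnormal{sgn}(\Pi)$, and by the uniqueness hypothesis on $\Psi$ the only surviving term is $\textnormal{sgn}(\Psi)\in\{-1,+1\}\neq 0$. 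So, viewed as a formal power series in the parameters, $\det(\Gamma_{R,C})$ is nonzero.

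Second, I would upgrade this formal nonvanishing to a pointwise nonvanishing on a Lebesgue-conull set. Lemma \ref{lemma_trek_rule_var_processes} shows that each entry of $\Gamma_{R,C}$ is given, on the open stability region, by an absolutely convergent power series in the parameters, so the determinant is itself a real-analytic function of $(\theta_1,\ldots,\theta_N,\Sigma_{11},\ldots,\Sigma_{dd})$ on $\Theta_{\textnormal{stable}}(G)\times \mathbb{R}^d_{>0}$. Since the stability condition is an open condition that is satisfied in a full neighborhood of any point whose matrix entries have small modulus, the power series expansion identified in the previous paragraph is a valid Taylor expansion on an open subset of parameter space; in particular, its $\pm 1$-coefficient for the monomial $\Psi(A,\Sigma)$ witnesses that $\det(\Gamma_{R,C})$ is not identically zero as an analytic function on each connected component of the parameter space that meets this neighborhood.

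Third, I would invoke the classical fact that the zero set of a non-identically-zero real-analytic function on a connected open subset of Euclidean space has Lebesgue measure zero. Applied componentwise, this gives $\{\det(\Gamma_{R,C})=0\}$ Lebesgue measure zero in $\Theta_{\textnormal{stable}}(G)\times \mathbb{R}^d_{>0}$; and since $\mathbb{R}^d_{\geq 0}\setminus \mathbb{R}^d_{>0}$ already has measure zero, the same conclusion holds on $\Theta_{\textnormal{stable}}(G)\times \mathbb{R}^d_{\geq 0}$. This is precisely the statement that $\Gamma_{R,C}$ is invertible in generic settings.

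The main obstacle I anticipate is the bookkeeping around analytic extension and connected components: the natural point around which the power series from Lemma \ref{lemma_determinant} has its monomials identified is the all-zero point, which lies on the boundary of $\Theta_{\textnormal{stable}}(G)$ rather than inside it (because entries are forced to be nonzero). I would handle this by analytically extending $\det(\Gamma_{R,C})$ across the excluded coordinate hyperplanes to a neighborhood of the origin (stability holds there), reading off the nonzero Taylor coefficient of $\Psi(A,\Sigma)$ at that origin, and then restricting to each connected component of $\Theta_{\textnormal{stable}}(G)\times\mathbb{R}^d_{>0}$; since the extended function is nonzero and real-analytic, its restriction to every connected component that is reachable by analytic continuation from the origin is also nonzero, and the zero-set-has-measure-zero statement then finishes the proof.
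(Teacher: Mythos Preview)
Your proposal is essentially correct and follows the same route as the paper: invoke Lemma~\ref{lemma_determinant}, read off the coefficient $\textnormal{sgn}(\Psi)$ of the distinguished monomial to see that the determinant is not the zero power series, and then use real-analyticity plus the measure-zero property of zero sets.

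The one place where your argument is looser than the paper's is the connectedness bookkeeping you flag at the end. You are right that the origin lies on the boundary of $\Theta_{\textnormal{stable}}(G)$ (entries are forced to be nonzero) and that $\Theta_{\textnormal{stable}}(G)$ itself need not be connected. Your proposed fix via ``analytic continuation from the origin'' and then restricting to ``every connected component reachable by analytic continuation'' is not quite the right formal device: analytic continuation extends a germ along paths, it does not by itself tell you the function is nonzero on a component disjoint from the neighborhood of the origin. The paper resolves this cleanly by passing to the superset $\Tilde{\Theta}_{\textnormal{stable}}(G)$ (stable parameters with zero entries allowed) and proving in Lemma~\ref{lemma_lebesgue} that $\Tilde{\Theta}_{\textnormal{stable}}(G)$ is \emph{connected} (indeed path-connected, via an explicit shrinking of the lagged coefficients followed by the contemporaneous ones). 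The determinant is then real-analytic and not identically zero on the single connected open set $\Tilde{\Theta}_{\textnormal{stable}}(G)\times\mathbb{R}^d_{>0}$, so Lemma~\ref{lemma_zero_sets_real_analytic} gives a measure-zero zero set there, and restricting to the subset $\Theta_{\textnormal{stable}}(G)\times\mathbb{R}^d_{\geq 0}$ finishes the argument. Your ``analytic extension across coordinate hyperplanes'' is exactly this enlargement; what was missing is the proof that the enlarged domain is connected.
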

\begin{proof}
	By Lemma \ref{lemma_determinant},
		\begin{align}
		\label{eq_det_gamma}
	\det(\Gamma_{R,C})=\sum_{l=1}^\infty\sum_{\substack{l_1+\ldots+l_n=l\\l_1,\ldots,l_n\in\mathbb{N}_{\geq 1}}}\;\;\sum_{\substack{\Pi=\{\pi_1,\ldots,\pi_n\}: R\rightrightarrows S\\\text{s.t.\ }\pi_i \text{ has top node }v_{l_i}\\
		\text{and s.t.\ }\Pi \text{ has no sided intersection}} }\textnormal{sgn}(\Pi)\Pi(A,\Sigma),
	\end{align}
	where $v_1,v_2,\ldots$ is any enumeration of the vertices in the countable set $\{v\in V:\;t(v)\leq t_{\textnormal{max}}(R\cup S)\}$.
	If there is a system of treks $\Psi$ with no sided intersection and unique  monomial $\Psi(A,\Sigma)$ among all other systems of treks from $R$ to $C$ with no sided intersection, then the coefficient in front of $\Psi(A,\Sigma)$ in equation \eqref{eq_det_gamma} is $\textnormal{sgn}(\Psi)$ and hence, $\det(\Gamma_{R,C})$ is not the zero power series.

Now, note that $\Tilde{\Theta}_{\textnormal{stable}}(G)\times \mathbb{R}^d_{> 0}$ (where 	$\Tilde{\Theta}_{\textnormal{stable}}(G)$ is a superset of 	$\Theta_{\textnormal{stable}}(G)$ representing all possible stable parameters for \emph{subgraphs} of $G$, see Section \ref{sec_proofs_sec_causal_effects} for a formal definition) is a connected open set due to Lemma \ref{lemma_lebesgue} in Section \ref{sec_proofs_sec_causal_effects} of the Appendix and because  $\mathbb{R}^d_{> 0}$ is a connected open set and because the Cartesian product of connected open sets is connected and open. Furthermore, because  for Banach spaces (and thus real-valued vector spaces) absolute convergence implies convergence and because power series with positive radius of convergence are real analytic in the interior of their radius of convergence (see, for example, Proposition 2.2.7 in \citet{krantz2002primer}),
	it follows that the function
	\begin{align*}
	&f:\Tilde{\Theta}_{\textnormal{stable}}(G)\times \mathbb{R}^d_{\geq 0}\rightarrow \mathbb{R},\\
	&\hspace{0.5cm}(A,\Sigma)\mapsto\sum_{l=1}^\infty\sum_{\substack{l_1+\ldots+l_n=l\\l_1,\ldots,l_n\in\mathbb{N}_{\geq 1}}}\;\;\sum_{\substack{\Pi=\{\pi_1,\ldots,\pi_n\}: R\rightrightarrows S\\\text{s.t.\ }\pi_i \text{ has top node }v_{l_i}\\
		\text{and s.t.\ }\Pi \text{ has no sided intersection}} }\textnormal{sgn}(\Pi)\Pi(A,\Sigma),
	\end{align*}
	restricted to $\Tilde{\Theta}_{\textnormal{stable}}(G)\times \mathbb{R}^d_{> 0}$
	is real analytic on a connected open domain.
	From Lemma \ref{lemma_zero_sets_real_analytic} it now follows that the zero set of $f$ restricted to $\Tilde{\Theta}_{\textnormal{stable}}(G)\times \mathbb{R}^d_{> 0}$ has Lebesgue measure zero and thus, that the zero set of $f$ on its entire domain has Lebesgue measure zero. Therefore and because $\Theta_{\textnormal{stable}}(G)$ is itself measurable (see Lemma \ref{lemma_lebesgue} in Section \ref{sec_proofs_sec_causal_effects} for this fact), $\det(\Gamma_{R,C})$ is non-zero Lebesgue almost everywhere in $\Theta_{\textnormal{stable}}(G)\times \mathbb{R}^d_{\geq 0}$.
\end{proof}	
	
\subsection{Lemma \ref{lemma:invertible_lambda_new}}
\label{sec_invertible_lambda_new}
\begin{lemma}
	\label{lemma:invertible_lambda_new}
Assume a stable SVAR process satisfying Assumptions \ref{assumption_no_instantaneous_self_edges}, \ref{assumption_acyclicity} and \ref{assumption_latents_have_no_observed_parents}. Furthermore, let $B_U$ and $F^{\textnormal{obs}}$ be as in Theorem \ref{main_theorem}. If there is a system of directed paths $\Upsilon$ from $B_U$ to $F^{\textnormal{obs}}$ with no intersections such that 
 \begin{itemize}
            \item all vertices in each directed path except the last one are latent, and
            \item for every other system of directed paths $\Pi:B_U\rightarrow F^{\textnormal{obs}}$ with no intersections for which all vertices in each directed path except that last one are latent (and such that $\Pi\neq \Upsilon$) it holds that $\Pi(A)\neq \Upsilon(A)$.
        \end{itemize}
Then, in generic settings, $\Tilde{\Lambda}$ from equation \eqref{eq_matrix_not} is invertible.
\end{lemma}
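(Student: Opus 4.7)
The plan is to show that $\det(\Tilde{\Lambda})$, viewed as a polynomial in the nonzero entries of the coefficient matrices $A^{(0)},\ldots,A^{(p)}$, is not identically zero, and then invoke the real-analyticity argument already used in the proof of Lemma~\ref{lemma_unique_solvability_new} to conclude that its zero set has Lebesgue measure zero in $\Theta_{\textnormal{stable}}(G)\times \mathbb{R}^d_{\geq 0}$. Note that each entry $\Tilde{\Lambda}_{kj}=\Tilde{\lambda}^k_{b_j}$ is a \emph{finite} sum of path monomials, since the full time graph is acyclic and every directed path from $b_j\in B_U$ to $f_k\in F^{\textnormal{obs}}$ whose internal vertices are latent has length bounded by $t(f_k)-t(b_j)$ times the number of latent component series. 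So $\det(\Tilde{\Lambda})$ is a genuine polynomial in the edge coefficients.

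First, I expand the determinant in the usual way: for $n:=|B_U|=|F^{\textnormal{obs}}|$,
\begin{align*}
\det(\Tilde{\Lambda}) \;=\; \sum_{\sigma\in\mathcal{S}_n}\textnormal{sgn}(\sigma)\prod_{k=1}^n \Tilde{\lambda}^k_{b_{\sigma(k)}}
\;=\; \sum_{\sigma\in\mathcal{S}_n}\textnormal{sgn}(\sigma)\sum_{\substack{\pi_k:b_{\sigma(k)}\rightarrow f_k\\ \text{latent-restricted}}}\prod_{k=1}^n \pi_k(A),
\end{align*}
where ``latent-restricted'' means the internal vertices are latent and no other $b_l$ is visited. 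Grouping tuples $(\pi_1,\ldots,\pi_n)$ into systems of directed paths $\Pi:B_U\rightarrow F^{\textnormal{obs}}$, this becomes
\begin{align*}
\det(\Tilde{\Lambda}) \;=\; \sum_{\substack{\Pi:B_U\rightarrow F^{\textnormal{obs}}\\ \text{latent-restricted}}}\textnormal{sgn}(\Pi)\,\Pi(A).
\end{align*}

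Next, I apply the standard Lindström--Gessel--Viennot-type sign-reversing involution to cancel all systems with intersections. Concretely, fix any total ordering of the vertices and, for a latent-restricted system $\Pi$ with at least one pair of intersecting paths, take the smallest-indexed vertex $v$ that lies on two paths $\pi_i$ and $\pi_j$ (with $i<j$, say), and form $\Pi'$ by swapping the tails of $\pi_i$ and $\pi_j$ at $v$. The swap is an involution, preserves the monomial $\Pi(A)=\Pi'(A)$, flips the permutation by a transposition so $\textnormal{sgn}(\Pi')=-\textnormal{sgn}(\Pi)$, and maps latent-restricted systems to latent-restricted systems (since swapping at an internal vertex preserves the property that internal vertices are latent and that no $b_l$ is visited internally). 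Hence all intersecting systems cancel in pairs, and
\begin{align*}
\det(\Tilde{\Lambda}) \;=\; \sum_{\substack{\Pi:B_U\rightarrow F^{\textnormal{obs}}\\ \text{latent-restricted, no intersections}}}\textnormal{sgn}(\Pi)\,\Pi(A).
\end{align*}

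The assumption of the lemma now directly yields non-vanishing: the system $\Upsilon$ is among those indexed on the right, it has a unique monomial $\Upsilon(A)$ among latent-restricted systems with no intersections, and thus its coefficient in $\det(\Tilde{\Lambda})$ is exactly $\textnormal{sgn}(\Upsilon)\neq 0$. Therefore $\det(\Tilde{\Lambda})$ is a nonzero polynomial in the edge parameters and in particular a nonzero real-analytic function on the connected open set $\Tilde{\Theta}_{\textnormal{stable}}(G)\times\mathbb{R}^d_{>0}$. Applying Lemma~\ref{lemma_zero_sets_real_analytic} exactly as in the proof of Lemma~\ref{lemma_unique_solvability_new}, its zero set has Lebesgue measure zero, so $\Tilde{\Lambda}$ is invertible in generic settings. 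The main (and only non-routine) step is the LGV-style involution, where one must verify that restricting to ``latent-restricted'' systems is preserved by the tail-swap; this is where the precise definition of $\Tilde{\lambda}^k_{b_j}$ (no passage through other $b_l$, all internal vertices latent) is used.
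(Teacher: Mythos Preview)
Your proof is correct and follows essentially the same approach as the paper: both reduce $\det(\Tilde{\Lambda})$ via the Lindstr\"om--Gessel--Viennot mechanism to a signed sum over non-intersecting path systems and then use the uniqueness of $\Upsilon(A)$ to conclude the polynomial is nonzero. The only difference is packaging: the paper constructs an auxiliary finite graph $\mathcal{M}$ (whose vertex and edge restrictions encode the ``latent, avoids other $b_l$'' condition) and cites LGV as a black box, whereas you reproduce the tail-swap involution directly in the full time graph and check explicitly that it preserves the latent-restricted class---which it does, since for non-intersecting systems the ``no other $b_l$'' condition is automatic.
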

\begin{proof}
	For the sake of just this proof, define a new graph $\mathcal{M}$ (containing finitely many vertices): 
	This graph contains all the latent vertices in the time interval $[t_{\textnormal{inf}}(B_U\cup \textnormal{pa}^{\textnormal{lat}}(F^{\textnormal{obs}})),t_{\textnormal{sup}}(B_U\cup \textnormal{pa}^{\textnormal{lat}}(F^{\textnormal{obs}}))]$. 
	The graph $\mathcal{M}$ also contains all the edges between these latent vertices from the full time graph except when both endpoints of these edges lie in $B_U$. Furthermore, $\mathcal{M}$ contains the vertices from the set $F^{\textnormal{obs}}$ and just the edges from the latent parents of $F^{\textnormal{obs}}$ to the respective elements in $F^{\textnormal{obs}}$.
	
Now, recall that $\Tilde{\Lambda}_{kj}=\Tilde{\lambda^k}_{b_j}$
is the sum of all path monomials of all directed paths from $b_j$ to $f_k$ in the full time graph that only use latent vertices except for the vertex $f_k$. By construction of $\mathcal{M}$, it then follows that $\Tilde{\Lambda}_{kj}=\Tilde{\lambda^k}_{b_j}$
is the sum of all path monomials of directed paths from $b_j$ to $f_k$ in $\mathcal{M}$.
Thus, we can apply the Lindström-Gessel-Viennot lemma \citep{lindstrom1973vector,gessel1985binomial} (and in particular the formulation as stated in Lemma 1 in \citet{lindstrom1973vector}) on $\Tilde{\Lambda}$ and $\mathcal{M}$  which yields that
	\begin{align}
	\label{det_lambda_tilde}
	\det(\Tilde{\Lambda}) = \sum_{\substack{\Pi:B_U\rightarrow F^{\textnormal{obs}}\\ \Pi\textnormal{ has no intersections}\\(\textnormal{underlying graph is }\mathcal{M})}}\textnormal{sign}(\Pi) \Pi(A).
	\end{align}

	Note that equation \eqref{det_lambda_tilde} is a polynomial (and thus a real-analytic function) of the underlying parameters in $\mathcal{M}$ as there are only finitely many systems of directed paths from $B_U$ to $F^{\textnormal{obs}}$ in $\mathcal{M}.$
	
	Now, the existence of a system of directed paths $\Upsilon$ with respect to the original full time graph satisfying the constraints mentioned in Lemma \ref{lemma:invertible_lambda_new} implies that $\Upsilon$ is a system of directed paths with no intersections in $\mathcal{M}$ by construction of $\mathcal{M}$. Now, by assumption, $\Upsilon$ has a unique monomial $\Upsilon(A)$ among all other systems of directed paths satisfying the constraints in Lemma \ref{lemma:invertible_lambda_new} when the underlying graph is the full time graph. Thus, by construction of $\mathcal{M}$, the system of directed paths $\Upsilon$ also has a unique monomial $\Upsilon(A)$ among all other systems of directed paths from $B_U$ to $F^{\textnormal{obs}}$ with no intersections when the underlying graph is $\mathcal{M}$. Hence, the monomial $\Psi(A)$ in equation \eqref{det_lambda_tilde} has coefficient $\textnormal{sign}(\Upsilon)$. Therefore and because $\det(\Tilde{\Lambda})$ is a polynomial of the underlying parameters in $\mathcal{M}$ and thus a polynomial in terms of $\Theta_{\textnormal{stable}}(G)\times \mathbb{R}_{\geq 0}^d$ and hence in terms of $\Tilde{\Theta}_{\textnormal{stable}}(G)\times \mathbb{R}_{\geq 0}^d$ (where 	$\Tilde{\Theta}_{\textnormal{stable}}(G)$ is a superset of 	$\Theta_{\textnormal{stable}}(G)$ representing all possible stable parameters for \emph{subgraphs} of $G$, see Section \ref{sec_proofs_sec_causal_effects} for a formal definition), it follows that $\det(\Tilde{\Lambda})$ is not the zero polynomial.  From the fact that $\Tilde{\Theta}_{\textnormal{stable}}(G)\times \mathbb{R}_{>0}^d$ is a connected open set due to Lemma \ref{lemma_lebesgue} in the Appendix, because $\mathbb{R}_{>0}^d$ is a connected open set and because the Cartesian product of connected open sets is connected and open, it thus follows that $\det(\Tilde{\Lambda})$ is a real-analytic function on the connected open domain $\Tilde{\Theta}_{\textnormal{stable}}(G)\times \mathbb{R}_{>0}^d$.
	By Lemma \ref{lemma_zero_sets_real_analytic}, it now follows that the zero set of $\det(\Tilde{\Lambda})$ restricted to $\Tilde{\Theta}_{\textnormal{stable}}(G)\times \mathbb{R}_{>0}^d$ has Lebesgue measure zero, and thus, that the zero set of $\det(\Tilde{\Lambda})$ on $\Tilde{\Theta}_{\textnormal{stable}}(G)\times \mathbb{R}_{\geq0}^d$ has Lebesgue measure zero. Therefore and because $\Theta_{\textnormal{stable}}(G)$ is itself measurable (see Lemma \ref{lemma_lebesgue} in Section \ref{sec_proofs_sec_causal_effects} for this fact),  $\det(\Tilde{\Lambda})$ is non-zero Lebesgue almost everywhere in $\Theta_{\textnormal{stable}}(G)\times \mathbb{R}^d_{\geq 0}$.
\end{proof}

\section{Further results for genericity}
\label{sec_proofs_sec_causal_effects}

For proving genericity, we typically use the fact that
a finite union of zero sets of real analytic functions that are not the zero-function have Lebesgue measure zero. For polynomials, the result that zero sets of non-zero polynomials have zero Lebesgue measure also exists in the statistics literature \citep{okamoto1973distinctness}, for the more general result for real analytic functions see for example \cite{mityagin2015zero}; Lemma 5.22 and Remark 5.23 in \cite{kuchment2016overview} or Lemma 1.22 in \cite{dang2015complex}. 
In the following lemma, we state this result.

\begin{lemma}
\label{lemma_zero_sets_real_analytic}
	Let $f$ be a real analytic function  that is not the zero-function on a connected open domain $D\subseteq \mathbb{R}^N$. Then, its zero set $\{x\in D:\;f(x)=0\}$ has Lebesgue measure zero.
\end{lemma}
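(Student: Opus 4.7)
The plan is to proceed by induction on the dimension $N$, combining Fubini's theorem with the identity theorem for real analytic functions on connected open sets (the fact that an analytic function vanishing on a non-empty open subset of a connected open set vanishes identically on the whole set).

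For the base case $N = 1$, I would argue that the zeros of $f$ are isolated: if some $x_0 \in D$ were an accumulation point of zeros, then iterated application of Rolle's theorem (plus continuity) forces $f^{(k)}(x_0) = 0$ for every $k \ge 0$, making the Taylor series of $f$ at $x_0$ identically zero; hence $f$ vanishes on a neighborhood of $x_0$, and by the identity theorem on the connected set $D$ this yields $f \equiv 0$, contradicting the hypothesis. An isolated subset of $\mathbb{R}$ is countable, hence of Lebesgue measure zero.

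For the inductive step, assume the claim in dimension $N - 1$. By second-countability of $D$ and countable additivity of null sets, it suffices to show $\lambda_N(\{f = 0\} \cap B) = 0$ for every open box $B = B' \times (\alpha, \beta) \subseteq D$, where $B' \subseteq \mathbb{R}^{N-1}$ is itself an open box. The identity theorem on the connected set $D$ guarantees $f \not\equiv 0$ on $B$. Fubini's theorem gives
\begin{align*}
\lambda_N\bigl(\{f = 0\} \cap B\bigr) \;=\; \int_{B'} \lambda_1\bigl(\{t \in (\alpha, \beta) : f(a, t) = 0\}\bigr)\, da,
\end{align*}
and for each $a$ such that $f(a, \cdot) \not\equiv 0$ on $(\alpha, \beta)$ the base case makes the slice null. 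It therefore reduces to showing that $A := \{a \in B' : f(a, \cdot) \equiv 0 \text{ on } (\alpha, \beta)\}$ has $(N{-}1)$-dimensional Lebesgue measure zero.

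To handle $A$, I would fix any $t_0 \in (\alpha, \beta)$ and consider the real analytic functions $h_k(a) := \partial_t^k f(a, t_0)$ on $B'$. By the identity theorem in a single variable, $a \in A$ if and only if $h_k(a) = 0$ for every $k \ge 0$, so $A \subseteq \{h_K = 0\}$ for any $K$ of my choice. If \emph{every} $h_k$ were identically zero on $B'$, then all mixed partial derivatives $\partial_a^\gamma \partial_t^k f$ would vanish at each point $(a_0, t_0)$ with $a_0 \in B'$, forcing the full Taylor series of $f$ at every such $(a_0, t_0)$ to vanish, hence $f \equiv 0$ on a neighborhood and (by the identity theorem on $D$) on all of $D$---a contradiction. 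So some $h_K$ is a non-zero real analytic function on the connected open set $B'$, and by the inductive hypothesis $\lambda_{N-1}(\{h_K = 0\}) = 0$, which bounds $\lambda_{N-1}(A)$. The main delicate point at each stage is that the identity theorem requires a \emph{connected} open domain; the reduction to rectangular boxes and the choice of a single basepoint $t_0$ are designed precisely so that each invocation takes place on a legitimately connected open set.
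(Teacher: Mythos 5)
Your proof is correct. Note, however, that the paper does not actually prove this lemma: it is stated as a known fact with pointers to the literature (the polynomial case to Okamoto, and the real analytic case to Mityagin, to Kuchment, and to Dang), so there is no in-paper argument to compare against; what you have written is essentially the standard proof contained in those references. Your argument is complete as it stands: the base case correctly shows that the zeros of a non-trivial one-dimensional real analytic function are isolated, hence form a countable (null) set; the reduction to open boxes ensures that every invocation of the identity theorem takes place on a connected open set; and the treatment of the exceptional slice set $A$ via the auxiliary functions $h_k(a) := \partial_t^k f(a,t_0)$ is exactly the right device---if every $h_k$ vanished identically on $B'$ then all mixed partials of $f$ would vanish at each $(a_0,t_0)$, forcing $f\equiv 0$ on $D$ by the identity theorem, and otherwise $A$ is contained in the zero set of a non-trivial real analytic $h_K$ on the connected open box $B'\subseteq\mathbb{R}^{N-1}$, where the inductive hypothesis applies. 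The only points worth making explicit in a write-up are routine: the zero set is closed in $B$ (hence Borel, so Tonelli/Fubini applies), and a discrete subset of a second-countable space is countable (used in the base case). What your self-contained argument buys over the paper's citation is transparency about exactly which properties of real analyticity are used, namely the identity theorem on connected open sets and the local representation by the Taylor series; the paper's choice to outsource the proof is defensible since the result is classical.
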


We conclude the section by stating another important result that we use throughout the paper.

For that, we make use of a superset of $\Tilde{\Theta}_{\textnormal{stable}}(G)\supseteq \Theta_{\textnormal{stable}}(G)$ which is defined by
	\begin{align*}
	\Tilde{\Theta}_{\textnormal{stable}}(G)&:=\{(\theta_1,\ldots,\theta_N)\in \mathbb{R}^{N}:\;\theta_i= A^{(h_i)}_{j_ik_i} \textnormal{ for all $i\in[N]_1$} \\
	& \hspace{1cm}\textnormal{ and } A^{(0)},\ldots,A^{(p)} \textnormal{ satisfy the stability condition}\}.
	\end{align*}
	Note that the only difference of 	$\Tilde{\Theta}_{\textnormal{stable}}$ in comparison to 	$\Theta_{\textnormal{stable}}$ is that entries are also allowed to be zero and thus $\Tilde{\Theta}_{\textnormal{stable}}$ encodes all nonzero entries of stable parameter matrices for \emph{subgraphs} of $G$.
Since by convention every graph is its own subgraph, the inclusion $\Tilde{\Theta}_{\textnormal{stable}}(G)\supseteq \Theta_{\textnormal{stable}}(G)$ indeed holds true.

In Lemma \ref{lemma_lebesgue} below, we show---using $\Tilde{\Theta}_{\textnormal{stable}}$---that $\Theta_{\textnormal{stable}}$ identified as a subset of $\mathbb{R}^N$ is a connected open set with strictly positive Lebesgue measure, which is a result that we use at several points in the paper. Also note that this result implies that $\Theta_{\textnormal{stable}}\times \mathbb{R}^d_{\geq 0}$ has positive Lebesgue measure and that, for example, $\Theta_{\textnormal{stable}}\times \mathbb{R}^d_{>0}$ is open and connected as every finite union of open connected sets is open and connected.
\begin{lemma}
	\label{lemma_lebesgue}
	If Assumptions \ref{assumption_no_instantaneous_self_edges} and \ref{assumption_acyclicity} hold, then $\Theta_{\textnormal{stable}}(G)$ is nonempty, open and has strictly positive Lebesgue (with respect to $\mathbb{R}^{N}$) measure.
	
	Furthermore, if Assumptions \ref{assumption_no_instantaneous_self_edges} and \ref{assumption_acyclicity} hold, then $\Tilde{\Theta}_{\textnormal{stable}}(G)$ is nonempty, open, connected and has strictly positive Lebesgue (with respect to $\mathbb{R}^{N}$) measure.
	
\end{lemma}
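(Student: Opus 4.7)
My plan is to first establish the claims for the larger set $\tilde{\Theta}_{\textnormal{stable}}(G)$ and then deduce the corresponding claims for $\Theta_{\textnormal{stable}}(G)$ by intersecting with the open set where every coordinate is nonzero. Nonemptiness of $\tilde{\Theta}_{\textnormal{stable}}(G)$ is immediate because the origin $(0,\ldots,0)$ makes all coefficient matrices vanish, so the companion matrix $\mathbf{B}$ from equation \eqref{big_A} is zero and trivially satisfies the stability condition. For openness, Assumptions \ref{assumption_no_instantaneous_self_edges} and \ref{assumption_acyclicity} force $A^{(0)}$ (with its fixed support pattern) to be nilpotent, so $(I_d - A^{(0)})^{-1}$ is a polynomial in $A^{(0)}$ and $\mathbf{B}$ depends polynomially on the parameter vector; since stability is equivalent to all eigenvalues of $\mathbf{B}$ lying strictly inside the open unit disk, and the spectrum depends continuously on matrix entries, $\tilde{\Theta}_{\textnormal{stable}}(G)$ is open. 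Being open and nonempty in $\mathbb{R}^N$ already gives strictly positive Lebesgue measure.

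The main technical step will be connectedness of $\tilde{\Theta}_{\textnormal{stable}}(G)$, which I plan to establish by showing that every point is path-connected to the origin via a two-stage path. Given $(A^{(0)}, A^{(1)}, \ldots, A^{(p)}) \in \tilde{\Theta}_{\textnormal{stable}}(G)$, the first stage is $t \mapsto (A^{(0)}, tA^{(1)}, t^2 A^{(2)}, \ldots, t^p A^{(p)})$ for $t$ running from $1$ down to $0$. The key observation is that the scaled VAR matrices are $B^{(k)}(t) = t^k B^{(k)}$, so the stability polynomial satisfies $p_t(\lambda) = \det(I_d - \sum_{k=1}^p t^k B^{(k)} \lambda^k) = p_1(t\lambda)$; its roots are $\lambda = \mu/t$ for $\mu$ the roots of $p_1$, which by assumption have $|\mu| > 1$. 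Hence for $t \in (0,1]$ the roots satisfy $|\lambda| > 1/t \geq 1$, and at $t = 0$ the polynomial reduces to $\det(I_d) = 1$ and has no roots. The second stage is $t \mapsto (tA^{(0)}, 0, \ldots, 0)$ for $t$ running from $1$ down to $0$, along which $B^{(k)} = 0$ for all $k \geq 1$ and stability holds vacuously; note that $I_d - tA^{(0)}$ remains invertible throughout because $A^{(0)}$ is nilpotent. Concatenating these two segments yields a continuous path inside $\tilde{\Theta}_{\textnormal{stable}}(G)$ from the starting point to the origin.

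Finally, I will write $\Theta_{\textnormal{stable}}(G) = \tilde{\Theta}_{\textnormal{stable}}(G) \cap \{\theta \in \mathbb{R}^N : \theta_i \neq 0 \text{ for all } i\}$, so openness is immediate as an intersection of two open sets. Nonemptiness follows from taking $\theta_i = \varepsilon$ for all $i$ with $\varepsilon > 0$ sufficiently small: by continuous dependence of the spectrum, small parameter values give a $\mathbf{B}$ whose eigenvalues lie close to zero and thus in the open unit disk. Openness together with nonemptiness then yields strictly positive Lebesgue measure. The main obstacle is orchestrating the connecting path so that stability holds at every intermediate point, and the scaling $A^{(k)} \mapsto t^k A^{(k)}$ is the right device precisely because it matches the natural weighting of $\lambda^k$ in the stability polynomial; the remaining claims reduce to standard continuity and openness arguments.
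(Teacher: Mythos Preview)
Your proposal is correct and follows essentially the same approach as the paper: both establish openness via continuity of the spectrum of the companion matrix $\mathbf{B}$, and both prove connectedness of $\tilde{\Theta}_{\textnormal{stable}}(G)$ by the same two-stage path to the origin---first scaling $A^{(k)}\mapsto t^k A^{(k)}$ for $k\geq 1$ (exploiting $p_t(\lambda)=p_1(t\lambda)$), then shrinking $A^{(0)}$ linearly to zero. The only cosmetic differences are that the paper argues openness by showing the complement is sequentially closed and deduces nonemptiness of $\Theta_{\textnormal{stable}}(G)$ from the positive-measure argument rather than by exhibiting a small-$\varepsilon$ point directly.
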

\begin{proof}
	The stability condition for SVAR processes can be rephrased. Namely, an SVAR process is stable if and only if the block-matrix $\mathbf{B}$ from equation \eqref{big_A} in Section \ref{sec_trek_proofs} of the Appendix
	has only eigenvalues with modulus strictly less than $1$ (see Chapter 2 in \citet{lutkepohl2005new}).

	Now, note that the complement of $\Tilde{\Theta}_{\textnormal{stable}}(G)$ (in $\mathbb{R}^N$), which we denote by $\Tilde{\Theta}_{\textnormal{stable}}(G)^c$, equals
	\begin{align*}
	\Tilde{\Theta}_{\textnormal{stable}}(G)^c&:=\{(\theta_1,\ldots,\theta_N)\in \mathbb{R}^{N}:\;\theta_i= A^{(h_i)}_{j_ik_i} \textnormal{ for all $i\in[N]_1$} \\
	& \hspace{1cm}\textnormal{ and } A^{(0)},\ldots,A^{(p)} \textnormal{ \emph{do not} satisfy the stability condition}\}.
	\end{align*}
	
Also note that the complement of $\Theta_{\textnormal{stable}}(G)$ (in $\mathbb{R}^N$) which we denote by $\Theta_{\textnormal{stable}}(G)^c$ equals 	\begin{align*}
		\Theta_{\textnormal{stable}}(G)^c&=\Tilde{\Theta}_{\textnormal{stable}}(G)^c \cup \underbrace{\{(\theta_1,\ldots,\theta_N)\in \mathbb{R}^{N}:\;\theta_i=0 \text{ for at least one $i\in [N]_1$}\}}_{=: M_1}.
	\end{align*}

	We next show that $\Tilde{\Theta}_{\textnormal{stable}}(G)^c$ is closed, which then implies that $\Tilde{\Theta}_{\textnormal{stable}}(G)$ is open. The fact that $\Tilde{\Theta}_{\textnormal{stable}}(G)^c$ is closed also implies that $	\Theta_{\textnormal{stable}}(G)^c$ is closed because $M_1$ is closed and a finite union of closed sets is closed, and thus, that $\Theta_{\textnormal{stable}}(G)$ is open.
	
	Let $b_n\in \Tilde{\Theta}_{\textnormal{stable}}(G)^c$ for all $n\in\mathbb{N}_{\geq 0}$ and such that $b_n$ converges (in a componentwise manner) to some $b\in \mathbb{R}^N$. Let $A^{(0)}_n,\ldots, A^{(p)}_n$ be the corresponding sequence of coefficient matrices (note that this sequence of coefficient matrices is uniquely defined as all other entries of these coefficient matrices not encoded in $b_n$ are $0$ by the subgraph condition) and let $\mathbf{B_n}$ be the uniquely defined sequence of corresponding block matrices as defined in equation \eqref{big_A} (note that there might be several elements of $\Tilde{\Theta}_{\textnormal{stable}}(G)^c$ yielding the same block matrices, however, this non-uniqueness issue does not occur in the argument; we only require uniqueness of the block matrices $\mathbf{B}_n$ for given $A^{(0)}_n,\ldots, A^{(p)}_n$ and this fact is satisfied because $I_d-A^{(0)}_n$ is invertible by assumption). With this construction, componentwise convergence of $b_n$ to some $b$ implies that $A^{(0)}_n,\ldots, A^{(p)}_n$ converge in a componentwise manner to some $A^{(0)},\ldots, A^{(p)}$ and $\mathbf{B_n}$ also converges in a componentwise manner to some $\mathbf{B}$ (because the $I_d-A^{(0)}_n$ are invertible by assumption and because matrix inversion and multiplication is continuous, it follows that $(I_d-A^{(0)}_n)^{-1}A^{(i)}_n$ converges to $(I_d-A^{(0)})^{-1}A^{(i)}$ where $A^{(0)}$ is also permutable to a lower triangular matrix, because Assumptions \ref{assumption_no_instantaneous_self_edges} and \ref{assumption_acyclicity} also hold for the limit as no new non-zero entries are introduced, thus justifying invertibility of $I_d-A^{(0)}$). Furthermore, it holds that the block matrix as defined in equation \eqref{big_A} induced by $A^{(0)},\ldots, A^{(p)}$ yields $\mathbf{B}$ (because $I_d-A^{(0)}$ which occurs in the construction in equation \eqref{big_A} is invertible as just argued) and that the $(h_1,j_1,k_1),\ldots,(h_N,j_N,k_N)$-entries of $A^{(0)},\ldots, A^{(p)}$ equal $b$ and that no new non-zero entries in $A^{(0)},\ldots, A^{(p)}$ are introduced when taking the limit because taking the limit of constant $0$-entries still yields $0$.
	
We now argue that $\mathbf{B}$ has at least one eigenvalue with modulus bigger or equal than $1$: The componentwise convergence of $\mathbf{B}_n$ to $\mathbf{B}$ implies that the eigenvalues of $\mathbf{B}_n$ converge to the eigenvalues of $\mathbf{B}$, which more specifically means that after some permutation of the eigenvalues of each $\mathbf{B}_n$, the vectors of eigenvalues of the $\mathbf{B}_n$ converge to the (in some fashion ordered) vector of the eigenvalues of $\mathbf{B}$. Therefore, one eigenvalue of $\mathbf{B}$ must have modulus bigger or equal than $1$ because otherwise, this convergence of eigenvalues cannot hold because each $\mathbf{B}_n$ has at least one eigenvalue with modulus bigger or equal than $1$. Thus, the matrix $\mathbf{B}$ also gives rise to an SVAR process that as argued in the previous paragraph induces a graph that is a subgraph of $G$, yet an SVAR process that is not stable. Hence, $b\in \Tilde{\Theta}_{\textnormal{stable}}(G)^c$ and hence, $\Tilde{\Theta}_{\textnormal{stable}}(G)^c$ is closed and thus, $\Tilde{\Theta}_{\textnormal{stable}}(G)$ is open. It hence also follows that $	\Theta_{\textnormal{stable}}(G)^c$ is closed and thus that $	\Theta_{\textnormal{stable}}(G)$ is open. Note that this openness also implies that $\Tilde{\Theta}_{\textnormal{stable}}(G)$ and $	\Theta_{\textnormal{stable}}(G)$ are measurable in the first place as open subsets of $\mathbb{R}^N$ are Lebesgue measurable.

Because non-empty open subsets of $\mathbb{R}^N$ have positive Lebesgue measure (and $\Tilde{\Theta}_{\textnormal{stable}}(G)$ clearly is non-empty because $(0,0,\ldots,0)\in \mathbb{R}^N$ is an element of $\Tilde{\Theta}_{\textnormal{stable}}(G)$), it follows that $\Tilde{\Theta}_{\textnormal{stable}}(G)$ has strictly positive Lebesgue measure. 
Because
\begin{align*}
    \Tilde{\Theta}_{\textnormal{stable}}(G) \subseteq 	\Theta_{\textnormal{stable}}(G) \dot{\cup} M_1
\end{align*}
and because $M_1$ has zero Lebesgue measure (and is measurable in the first place) as a proper hyperplane in its ambient space $\mathbb{R}^N$, it follows that $\Theta_{\textnormal{stable}}(G)$ has positive Lebesgue measure as well and thus that $	\Theta_{\textnormal{stable}}(G)$ is non-empty.

Finally, to show that $\Tilde{\Theta}_{\textnormal{stable}}(G)$ is connected, we show that $\Tilde{\Theta}_{\textnormal{stable}}(G)$ is path-connected which then implies that $\Tilde{\Theta}_{\textnormal{stable}}(G)$ is connected. For that, let $b, b'\in \Tilde{\Theta}_{\textnormal{stable}}(G)$ and $A^{(0)},\ldots, A^{(p)}$ respectively $A'^{(0)},\ldots, A'^{(p)}$ be the uniquely defined corresponding coefficient matrices. We construct a path from $b$ to $b'$ by constructing several subpaths individually and then gluing these subpaths together.
Without loss of generality, assume that the first $m_0$-many entries of $b$ and $b'$ correspond to parameters of $A^{(0)}$ respectively $A'^{(0)}$, the second $m_1$-many entries of $b$ and $b'$ to parameters of $A^{(1)}$ respectively $A'^{(1)}$, and so on. We first construct a subpath from $b=(b_1,\ldots, b_N)$ to $(b_1,b_2,\ldots, b_{m_0},0,0\ldots, 0)$ (and analogously from $b'=(b'_1,\ldots, b'_N)$ to $(b'_1,b'_2,\ldots, b'_{m_0},0,0\ldots, 0)$). For that we look at the path
\begin{align*}
    &f_1:[0,1]\rightarrow \Tilde{\Theta}_{\textnormal{stable}}(G),\\
    &\gamma \mapsto (b_1,b_2,\ldots, b_{m_0}, \gamma b_{m_0+1} , \gamma b_{m_0+2}, \ldots, \gamma b_{m_0+m_1}, \gamma^2 b_{m_0+m_1+1}, \gamma^2 b_{m_0+m_1+2},\ldots),
\end{align*}
that is, $f_1$ scales the entries in $b$ corresponding to $A^{(i)}$ by $\gamma^i$. Clearly, $f_1$ is continuous because its individual component functions are monomials. We next show that for all $\gamma \in [0, 1]$ it holds that $f_1(\gamma)\in \Tilde{\Theta}_{\textnormal{stable}}(G)$. Write $A^{(i)}_{\gamma, f_1}$ with $i\in [p]_0$ for the coefficient matrices corresponding to $f_1(\gamma)$. To see that $f_1(\gamma)\in \Tilde{\Theta}_{\textnormal{stable}}(G)$ for all $\gamma\in[0,1]$, we look at the stability condition for $f_1(\gamma)$ which reads as \citep[Chapter 2]{lutkepohl2005new}
\begin{align*}
\label{eq_reverse_char}
    \textnormal{det}(I_d-B^{(1)}_{\gamma, f_1} \lambda-\cdots -B^{(p)}_{\gamma, f_1}\lambda^p)\neq 0 \numberthis
\end{align*}
for all $|\lambda|\leq 1$ where $B^{(i)}_{\gamma, f_1}=(I_d-A^{(0)}_{\gamma, f_1})^{-1}A^{(i)}_{\gamma, f_1}$ for $i\in [p]_1$. Here, note that $I_d-A^{(0)}_{\gamma, f_1}$ is invertible because $A^{(0)}_{\gamma, f_1} = A^{(0)}$ and $I_d-A^{(0)}$ is invertible by Assumptions \ref{assumption_no_instantaneous_self_edges} and \ref{assumption_acyclicity}. From $A^{(0)}_{\gamma, f_1} = A^{(0)}$ it also follows that $B^{(i)}_{\gamma, f_1} = B^{(i)}\cdot \gamma^i$ for all $i\in[p]_1$ where $B^{(i)}=(I_d-A^{(0)})^{(-1)}A^{(i)}$. Therefore, 
\begin{align*}
\textnormal{det}(I_d-B^{(1)}_{\gamma, f_1} \lambda-\cdots -B^{(p)}_{\gamma, f_1})&=    \textnormal{det}(I_d-B^{(1)}\gamma  \lambda-\cdots -B^{(p)}\gamma^p \lambda^p)\\
    &=\textnormal{det}(I_d-B^{(1)}(\gamma  \lambda)-\cdots -B^{(p)}(\gamma \lambda)^p)=:\textnormal{RHS}.
\end{align*}
From the stability condition on $b$ it now follows that
$\textnormal{RHS}\neq 0$
for all $|\gamma \lambda|\leq 1$, and thus, because $\gamma \in [0,1]$ and because $\gamma |\lambda| = |\gamma \lambda|$, it follows that $\textnormal{RHS}\neq 0$ for all $|\lambda|\leq 1$ and $\gamma\in[0,1]$. Thus, $f_1(\gamma) \in \Tilde{\Theta}_{\textnormal{stable}}(G)$, and analogously constructing a path $f_2$ for $b'$ yields an analogous result for $b'$.

As a next subpath $f_3$, we connect $(b_1,b_2,\ldots, b_{m_0},0,0,\ldots, 0)$ to $(0,0,\ldots, 0)$ (and analogously we connect $(b'_1,b'_2,\ldots, b'_{m_0},0,0,\ldots, 0)$ to $(0,0,\ldots, 0)$). Define $f_3$ by
\begin{align*}
    f_3:[0,1]\rightarrow \Tilde{\Theta}_{\textnormal{stable}}(G), \gamma \mapsto (\gamma b_1,\gamma b_2,\ldots, \gamma b_{m_0},0,0,\ldots, 0).
\end{align*}
Clearly, $f_3$ is continuous as its individual component functions are linear functions. Also note that $f_3(\gamma) \in \Tilde{\Theta}_{\textnormal{stable}}(G)$ for all $\gamma\in[0,1]$: First, because for $(b_1,b_2,\ldots, b_{m_0},0,0,\ldots, 0)$ Assumptions \ref{assumption_no_instantaneous_self_edges} and \ref{assumption_acyclicity} hold and thus, Assumptions \ref{assumption_no_instantaneous_self_edges} and \ref{assumption_acyclicity} also hold for $(\gamma b_1,\gamma b_2,\ldots, \gamma b_{m_0},0,0,\ldots, 0)$ thus making $I_d-A^{(0)}_{\gamma, f_3}$, where $A^{(0)}_{\gamma, f_3}$ is the coefficient matrix corresponding to $(\gamma b_1,\gamma b_2,\ldots, \gamma b_{m_0},0,0,\ldots, 0)$, invertible. Second, because the stability condition always holds if $I_d-A^{(0)}_{\gamma, f_3}$ is invertible and if $p=0$ because the stability condition reads as
\begin{align*}
    \textnormal{det}(I_d)\neq 0
\end{align*}
for all $|\lambda|\leq 1$ which is always satisfied.
Analogously constructing a path $f_4$ for $(b'_1,\allowbreak b'_2,\allowbreak\ldots, b'_{m_0},\allowbreak 0,\allowbreak 0,\allowbreak\ldots,\allowbreak 0)$ and gluing $f_1$ to $f_3$ to $f_4$ to $f_2$ together into a path $f$ and using that $f$ is clearly continuous because the individual subpaths are continuous and because the subpaths are glued together at shared points,  shows that (arbitrary) $b, b'\in \Tilde{\Theta}_{\textnormal{stable}}(G)$ are path-connected and thus that $\Tilde{\Theta}_{\textnormal{stable}}(G)$ is path-connected. Hence, we conclude that $\Tilde{\Theta}_{\textnormal{stable}}(G)$ is connected.
\end{proof}

\section{Further details on the real-world example}
\label{sec_app_real_world_ex}

	For total wind generation, electricity demand and prices, we use data from \citet{Bundesnetzagentur}.  For these three time series, we additionally regress out the control variables hour of the week, month of the year and holiday index as explained in Section D of \citet{tiedemann2024identifying}. Hereby, the source for the holiday index is \citet{githubrepo} which itself is based on \citet{destatis} and the python-package \emph{holidays} from \citet{holidays}. We thus in fact do not look at the original total wind generation, electricity demand and price time series, but rather at the corresponding residual time series after regressing out the just mentioned control variables. Note that due to simplicity and accessibility-related issues, we only use a subset of the control variables that \citet{tiedemann2024identifying} use. 
	
	For the semi-synthetic data, we (largely) follow the setup layed out in Section 4 of \citet{tiedemann2024identifying}. In particular, we first fit an AR model of order 20 for the total (residual) wind generation and then use the following settings for the experiments (neglecting units): $\beta^{P}=-100$, $\beta^{P1}=50$, $\gamma^P = 500$, $\gamma^W=1$, $S_0=25000$, $D_0=50000$, $U^D_t\sim \mathcal{N}(0, 2000)$, $U^S_t \sim \mathcal{N}(0, 1)$, $U^A_t\sim \mathcal{N}(0, 2000/\sqrt{2})$, $U^B_t\sim \mathcal{N}(0, 2000/\sqrt{2})$, $B_0=0$, $\beta^{D1}=0.7$ and $\beta^{B1}= 0.9$.

\section{More Examples}
\label{sec_examples}
In this section, we present more examples.

\begin{figure}[h]
	\centering
	\scalebox{0.8}{
		\begin{tikzpicture}
		\node[] (t-5) at (-10,3) {\small$t-5$};
		\node[] (t-4) at (-9,3) {\small$t-4$};
		\node[] (t-3) at (-8,3) {\small$t-3$};
		\node[] (t-2) at (-7,3) {\small$t-2$};
		\node[] (t-1) at (-6,3) {\small$t-1$};
		\node[] (t) at (-5,3) {\small$t$};
		\node[] (t+1) at (-4,3) {\small$t+1$};
		\node[] (t+2) at (-3,3) {\small$t+2$};
		\node[] (t+3) at (-2,3) {\small$t+3$};
		\node[] (t+4) at (-1,3) {\small$t+4$};
		\node[] (t+5) at (0,3) {\small$t+5$};
		
		\node[] (X) at (-12,2) {\small$X$};
		\node[] (U) at (-12,0) {\small$U$};
		\node[] (Y) at (-12,-2) {\small$Y$};
		
		\node[shape=circle, fill] (U-10) at (-10,0) {};
		\node[shape=circle, fill] (X-10) at (-10,2) {};
		\node[shape=circle, fill] (Y-10) at (-10,-2) {};
		
		\node[shape=circle, fill] (U-9) at (-9,0) {};
		\node[shape=circle, fill] (X-9) at (-9,2) {};
		\node[shape=circle, fill] (Y-9) at (-9,-2) {};
		
		\node[shape=circle, fill] (U-8) at (-8,0) {};
		\node[shape=circle, fill] (X-8) at (-8,2) {};
		\node[shape=circle, fill] (Y-8) at (-8,-2) {};
		
		\node[shape=circle, fill] (U-7) at (-7,0) {};
		\node[shape=circle, fill] (X-7) at (-7,2) {};
		\node[shape=circle, fill] (Y-7) at (-7,-2) {};
		
		\node[shape=circle, fill] (U-6) at (-6,0) {};
		\node[shape=circle, fill] (X-6) at (-6,2) {};
		\node[shape=circle, fill] (Y-6) at (-6,-2) {};
		
		\node[shape=circle, fill] (U-5) at (-5,0) {};
		\node[shape=circle, fill] (X-5) at (-5,2) {};
		\node[shape=circle, fill] (Y-5) at (-5,-2) {};
		
		\node[shape=circle, fill] (U-4) at (-4,0) {};
		\node[shape=circle, fill] (X-4) at (-4,2) {};
		\node[shape=circle, fill] (Y-4) at (-4,-2) {};
		
		\node[shape=circle, fill] (U-3) at (-3,0) {};
		\node[shape=circle, fill] (X-3) at (-3,2) {};
		\node[shape=circle, fill] (Y-3) at (-3,-2) {};
		
		\node[shape=circle, fill] (U-2) at (-2,0) {};
		\node[shape=circle, fill] (X-2) at (-2,2) {};
		\node[shape=circle, fill] (Y-2) at (-2,-2) {};
		
		\node[shape=circle, fill] (U-1) at (-1,0) {};
		\node[shape=circle, fill] (X-1) at (-1,2) {};
		\node[shape=circle, fill] (Y-1) at (-1,-2) {};
		
		\node[shape=circle, fill] (U0) at (0,0) {};
		\node[shape=circle, fill] (X0) at (0,2) {};
		\node[shape=circle, fill] (Y0) at (0,-2) {};
		
		\node[] (Udots-past) at (-10.5,0) {$\ldots$};
		\node[] (Udots-future) at (0.5,0) {$\ldots$};
		\node[] (Xdots-past) at (-10.5,2) {$\ldots$};
		\node[] (Xdots-future) at (0.5,2) {$\ldots$};
		\node[] (Ydots-past) at (-10.5,-2) {$\ldots$};
		\node[] (Ydots-future) at (0.5,-2) {$\ldots$};

		\path [->, line width = 0.5mm] (U-10) edge node[left] {} (U-9);
		\path [->, line width = 0.5mm] (U-9) edge node[left] {} (U-8);
		\path [->, line width = 0.5mm] (U-8) edge node[left] {} (U-7);
		\path [->, line width = 0.5mm] (U-7) edge node[left] {} (U-6);
		\path [->, line width = 0.5mm] (U-6) edge node[left] {} (U-5);
		\path [->, line width = 0.5mm] (U-5) edge node[left] {} (U-4);
		\path [->, line width = 0.5mm] (U-4) edge node[left] {} (U-3);
		\path [->, line width = 0.5mm] (U-3) edge node[left] {} (U-2);
		\path [->, line width = 0.5mm] (U-2) edge node[left] {} (U-1);
		\path [->, line width = 0.5mm] (U-1) edge node[left] {} (U0);	
		
		\path [->, line width = 0.5mm, bend left] (U-10) edge node[left] {} (U-8);
		\path [->, line width = 0.5mm, bend left] (U-9) edge node[left] {} (U-7);
		\path [->, line width = 0.5mm, bend left] (U-8) edge node[left] {} (U-6);
		\path [->, line width = 0.5mm, bend left] (U-7) edge node[left] {} (U-5);
		\path [->, line width = 0.5mm, bend left] (U-6) edge node[left] {} (U-4);
		\path [->, line width = 0.5mm, bend left] (U-5) edge node[left] {} (U-3);
		\path [->, line width = 0.5mm, bend left] (U-4) edge node[left] {} (U-2);
		\path [->, line width = 0.5mm, bend left] (U-3) edge node[left] {} (U-1);
	    \path [->, line width = 0.5mm, bend left] (U-2) edge node[left] {} (U0);
		
		\path [->, line width = 0.5mm] (X-10) edge node[left] {} (X-9);
		\path [->, line width = 0.5mm,] (X-9) edge node[left] {} (X-8);
		\path [->, line width = 0.5mm] (X-8) edge node[left] {} (X-7);
		\path [->, line width = 0.5mm] (X-7) edge node[left] {} (X-6);
		\path [->, line width = 0.5mm] (X-6) edge node[left] {} (X-5);
		\path [->, line width = 0.5mm] (X-5) edge node[left] {} (X-4);
		\path [->, line width = 0.5mm] (X-4) edge node[left] {} (X-3);
		\path [->, line width = 0.5mm] (X-3) edge node[left] {} (X-2);
		\path [->, line width = 0.5mm] (X-2) edge node[left] {} (X-1);
        \path [->, line width = 0.5mm] (X-1) edge node[left] {} (X0);
		
		\path [->, line width = 0.5mm, color = blue] (Y-10) edge node[left] {} (Y-9);
		\path [->, line width = 0.5mm, color = blue] (Y-9) edge node[left] {} (Y-8);
		\path [->, line width = 0.5mm, color = blue] (Y-8) edge node[left] {} (Y-7);
		\path [->, line width = 0.5mm, color = blue] (Y-7) edge node[left] {} (Y-6);
		\path [->, line width = 0.5mm, color = blue] (Y-6) edge node[left] {} (Y-5);
		\path [->, line width = 0.5mm, color = blue] (Y-5) edge node[left] {} (Y-4);
		\path [->, line width = 0.5mm, color = blue] (Y-4) edge node[left] {} (Y-3);
		\path [->, line width = 0.5mm, color = blue] (Y-3) edge node[left] {} (Y-2);
		\path [->, line width = 0.5mm, color = blue] (Y-2) edge node[left] {} (Y-1);
		\path [->, line width = 0.5mm, color = blue] (Y-1) edge node[left] {} (Y0);

		\path [->, line width = 0.5mm] (U-10) edge node[left] {} (Y-7);
		\path [->, line width = 0.5mm] (U-9) edge node[left] {} (Y-6);
		\path [->, line width = 0.5mm] (U-8) edge node[left] {} (Y-5);
		\path [->, line width = 0.5mm] (U-7) edge node[left] {} (Y-4);
		\path [->, line width = 0.5mm] (U-6) edge node[left] {} (Y-3);
		\path [->, line width = 0.5mm] (U-5) edge node[left] {} (Y-2);
		\path [->, line width = 0.5mm] (U-4) edge node[left] {} (Y-1);
		\path [->, line width = 0.5mm] (U-3) edge node[left] {} (Y0);
		
		\path [->, line width = 0.5mm] (U-10) edge node[left] {} (Y-8);
		\path [->, line width = 0.5mm] (U-9) edge node[left] {} (Y-7);
		\path [->, line width = 0.5mm] (U-8) edge node[left] {} (Y-6);
		\path [->, line width = 0.5mm] (U-7) edge node[left] {} (Y-5);
		\path [->, line width = 0.5mm] (U-6) edge node[left] {} (Y-4);
		\path [->, line width = 0.5mm] (U-5) edge node[left] {} (Y-3);
		\path [->, line width = 0.5mm] (U-4) edge node[left] {} (Y-2);
        \path [->, line width = 0.5mm] (U-3) edge node[left] {} (Y-1);
        \path [->, line width = 0.5mm] (U-2) edge node[left] {} (Y0);
		
		\path [->, line width = 0.5mm] (U-10) edge node[left] {} (X-9);
		\path [->, line width = 0.5mm] (U-9) edge node[left] {} (X-8);
		\path [->, line width = 0.5mm] (U-8) edge node[left] {} (X-7);
		\path [->, line width = 0.5mm] (U-7) edge node[left] {} (X-6);
		\path [->, line width = 0.5mm] (U-6) edge node[left] {} (X-5);
		\path [->, line width = 0.5mm] (U-5) edge node[left] {} (X-4);
		\path [->, line width = 0.5mm] (U-4) edge node[left] {} (X-3);
		\path [->, line width = 0.5mm] (U-3) edge node[left] {} (X-2);
        \path [->, line width = 0.5mm] (U-2) edge node[left] {} (X-1);
        \path [->, line width = 0.5mm] (U-1) edge node[left] {} (X0);
		
		\path [->, line width = 0.5mm] (U-10) edge node[left] {} (X-8);
		\path [->, line width = 0.5mm] (U-9) edge node[left] {} (X-7);
		\path [->, line width = 0.5mm] (U-8) edge node[left] {} (X-6);
		\path [->, line width = 0.5mm] (U-7) edge node[left] {} (X-5);
		\path [->, line width = 0.5mm] (U-6) edge node[left] {} (X-4);
		\path [->, line width = 0.5mm] (U-5) edge node[left] {} (X-3);
		\path [->, line width = 0.5mm] (U-4) edge node[left] {} (X-2);
		\path [->, line width = 0.5mm] (U-3) edge node[left] {} (X-1);
		\path [->, line width = 0.5mm] (U-2) edge node[left] {} (X0);
		
       \path [->, line width = 0.5mm, color = red] (X-10) edge node[left] {} (Y-5);
		\path [->, line width = 0.5mm, color = red] (X-9) edge node[left] {} (Y-4);
		\path [->, line width = 0.5mm, color = red] (X-8) edge node[left] {} (Y-3);
		\path [->, line width = 0.5mm, color = red] (X-7) edge node[left] {} (Y-2);
		\path [->, line width = 0.5mm, color = red] (X-6) edge node[left] {} (Y-1);
		\path [->, line width = 0.5mm, color = red] (X-5) edge node[left] {} (Y0);

		\end{tikzpicture}
	}
	\caption{Full time graph for Example \ref{ex_app_1}. Here, the \color{red} red \color{black} edges correspond to $A^{(5)}_{YO^2}$ and the \color{blue} blue \color{black} edges to $A^{(1)}_{YY}$.}
	\label{ex_full_time_graph_app_1}
\end{figure}
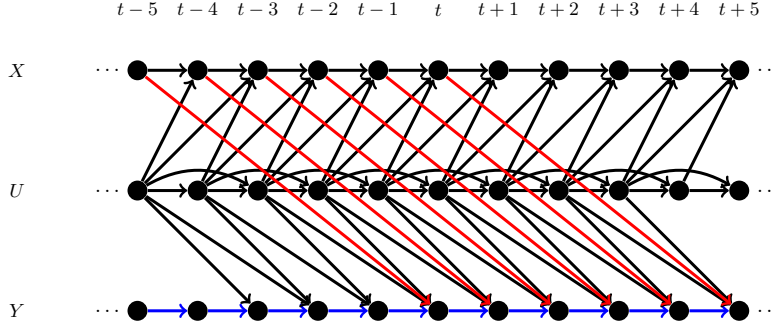
\addtocounter{theorem}{-24}
\begin{example}
\label{ex_app_1}
Consider the full time graph from Figure \ref{ex_full_time_graph_app_1}. Here,
$d_U=1$ and $d_O=2$. Abbreviate $\{X_t\}_{t\in\mathbb{Z}}:=\{O^2_t\}_{t\in\mathbb{Z}}$. Note that $m_U=2$, $l^U_1=1$, $l^U_2=2$, $m_Y=1$, $l^Y_1=1$, $m_{X}=1$, $l^{X}_1=1$, $m_{YX}=1$, $l^{YX}_1=5$, $m_{XY}=0$, $m_{YU}=2$, $l^{YU}_1=2$, $l^{YU}_2=3$, $m_{XU}=2$, $l^{XU}_1=1$, $l^{XU}_2=2$.

First, note that $\textnormal{pa}^{\textnormal{obs}}(Y_t)=\{Y_{t-1}, X_{t-5}\}$ and $\textnormal{pa}^{\textnormal{lat}}(Y_t)=\{U_{t-2}, U_{t-3}\}$. Let
\begin{align*}
    F^{\textnormal{obs}}:=\{X_{t+2}, X_{t+3}\}.
\end{align*}
Then, $\textnormal{pa}^{\textnormal{obs}}(F^{\textnormal{obs}})\setminus F^{\textnormal{obs}}=\{X_{t+1}\}$ and $\textnormal{pa}^{\textnormal{lat}}(F^{\textnormal{obs}})=\{U_{t},U_{t+1},U_{t+2}\}$.
Therefore,
\begin{align*}
    C=\{X_{t-5}, Y_{t-1}, X_{t+2}, X_{t+3}, X_{t+1}\}
\end{align*}
is valid.
Next, consider
\begin{align*}
    B_U:=\{U_{t-3}, U_{t-2}\}.
\end{align*}
Then, $\tau_{Y}=\{2\}$ and $\tau_{X}=\{3\}$ are valid.

Also note that $C^{(1)}_{Y}=\{Y_{t-1}\}$ and $C^{(1)}_{X}=\{X_{t-5}, X_{t+1}\}$. Thus, Corollary \ref{corollary_identifiability} applies and we get identifiability.
One possible $R$ according to Lemma \ref{lemma_resid_class1} is
\begin{align*}
    R:=\{X_{t-5}, Y_{t-2}, X_{t-3}, X_{t-6}, X_{t-7}\}.
\end{align*}
In generic settings, the parameters of interest $A^{(5)}_{YX}$ and $A^{(1)}_{YY}$ are given by
    \begin{align*}
    \label{est_app_1}
        &\begin{pmatrix}
        \color{red}A^{(5)}_{YX}\color{black}\\
        \color{blue}A^{(1)}_{YY}\color{black}\\
        \vdots \\
        \textnormal{other terms}\\
        \vdots
        \end{pmatrix}\\
        &=\begin{pmatrix}
        \Gamma_{X_{t-5}\color{red}X_{t-5}\color{black}} & \Gamma_{X_{t-5}\color{blue}Y_{t-1}\color{black}} & \Gamma_{X_{t-5}X_{t+2}} & \Gamma_{X_{t-5}X_{t+3}} & \Gamma_{X_{t-5}X_{t+1}}\\
        \Gamma_{Y_{t-2}\color{red}X_{t-5}\color{black}} & \Gamma_{Y_{t-2}\color{blue}Y_{t-1}\color{black}} & \Gamma_{Y_{t-2}X_{t+2}} & \Gamma_{Y_{t-2}X_{t+3}} & \Gamma_{Y_{t-2}X_{t+1}}\\
        \Gamma_{X_{t-3}\color{red}X_{t-5}\color{black}} & \Gamma_{X_{t-3}\color{blue}Y_{t-1}\color{black}} & \Gamma_{X_{t-3}X_{t+2}} & \Gamma_{X_{t-3}X_{t+3}} & \Gamma_{X_{t-3}X_{t+1}}\\
        \Gamma_{X_{t-6}\color{red}X_{t-5}\color{black}} & \Gamma_{X_{t-6}\color{blue}Y_{t-1}\color{black}} & \Gamma_{X_{t-6}X_{t+2}} & \Gamma_{X_{t-6}X_{t+3}} & \Gamma_{X_{t-6}X_{t+1}}\\
        \Gamma_{X_{t-7}\color{red}X_{t-5}\color{black}} & \Gamma_{X_{t-7}\color{blue}Y_{t-1}\color{black}} & \Gamma_{X_{t-7}X_{t+2}} & \Gamma_{X_{t-7}X_{t+3}} & \Gamma_{X_{t-7}X_{t+1}}
        \end{pmatrix}^{-1}\cdot
        \begin{pmatrix}
        \Gamma_{X_{t-5}Y_{t}}\\
        \Gamma_{Y_{t-2}Y_{t}}\\
        \Gamma_{X_{t-3}Y_{t}}\\
        \Gamma_{X_{t-6}Y_{t}}\\
        \Gamma_{X_{t-7}Y_{t}}\\
        \end{pmatrix}.\numberthis
    \end{align*}
    Here, the red and blue colours indicate which component corresponds to which column.
    
    Besides, in Figure \ref{Figure_num_validation_ex_app}a we present a numerical validation for this example.
    \demo

\end{example}

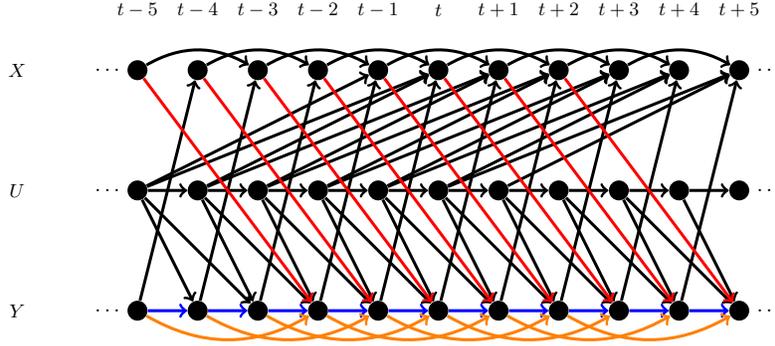
\begin{figure}[h]
	\centering
	\scalebox{0.8}{
		\begin{tikzpicture}
		\node[] (t-5) at (-10,3) {\small$t-5$};
		\node[] (t-4) at (-9,3) {\small$t-4$};
		\node[] (t-3) at (-8,3) {\small$t-3$};
		\node[] (t-2) at (-7,3) {\small$t-2$};
		\node[] (t-1) at (-6,3) {\small$t-1$};
		\node[] (t) at (-5,3) {\small$t$};
		\node[] (t+1) at (-4,3) {\small$t+1$};
		\node[] (t+2) at (-3,3) {\small$t+2$};
		\node[] (t+3) at (-2,3) {\small$t+3$};
		\node[] (t+4) at (-1,3) {\small$t+4$};
		\node[] (t+5) at (0,3) {\small$t+5$};
		
		\node[] (X) at (-12,2) {\small$X$};
		\node[] (U) at (-12,0) {\small$U$};
		\node[] (Y) at (-12,-2) {\small$Y$};
		
		\node[shape=circle, fill] (U-10) at (-10,0) {};
		\node[shape=circle, fill] (X-10) at (-10,2) {};
		\node[shape=circle, fill] (Y-10) at (-10,-2) {};
		
		\node[shape=circle, fill] (U-9) at (-9,0) {};
		\node[shape=circle, fill] (X-9) at (-9,2) {};
		\node[shape=circle, fill] (Y-9) at (-9,-2) {};
		
		\node[shape=circle, fill] (U-8) at (-8,0) {};
		\node[shape=circle, fill] (X-8) at (-8,2) {};
		\node[shape=circle, fill] (Y-8) at (-8,-2) {};
		
		\node[shape=circle, fill] (U-7) at (-7,0) {};
		\node[shape=circle, fill] (X-7) at (-7,2) {};
		\node[shape=circle, fill] (Y-7) at (-7,-2) {};
		
		\node[shape=circle, fill] (U-6) at (-6,0) {};
		\node[shape=circle, fill] (X-6) at (-6,2) {};
		\node[shape=circle, fill] (Y-6) at (-6,-2) {};
		
		\node[shape=circle, fill] (U-5) at (-5,0) {};
		\node[shape=circle, fill] (X-5) at (-5,2) {};
		\node[shape=circle, fill] (Y-5) at (-5,-2) {};
		
		\node[shape=circle, fill] (U-4) at (-4,0) {};
		\node[shape=circle, fill] (X-4) at (-4,2) {};
		\node[shape=circle, fill] (Y-4) at (-4,-2) {};
		
		\node[shape=circle, fill] (U-3) at (-3,0) {};
		\node[shape=circle, fill] (X-3) at (-3,2) {};
		\node[shape=circle, fill] (Y-3) at (-3,-2) {};
		
		\node[shape=circle, fill] (U-2) at (-2,0) {};
		\node[shape=circle, fill] (X-2) at (-2,2) {};
		\node[shape=circle, fill] (Y-2) at (-2,-2) {};
		
		\node[shape=circle, fill] (U-1) at (-1,0) {};
		\node[shape=circle, fill] (X-1) at (-1,2) {};
		\node[shape=circle, fill] (Y-1) at (-1,-2) {};
		
		\node[shape=circle, fill] (U0) at (0,0) {};
		\node[shape=circle, fill] (X0) at (0,2) {};
		\node[shape=circle, fill] (Y0) at (0,-2) {};
		
		\node[] (Udots-past) at (-10.5,0) {$\ldots$};
		\node[] (Udots-future) at (0.5,0) {$\ldots$};
		\node[] (Xdots-past) at (-10.5,2) {$\ldots$};
		\node[] (Xdots-future) at (0.5,2) {$\ldots$};
		\node[] (Ydots-past) at (-10.5,-2) {$\ldots$};
		\node[] (Ydots-future) at (0.5,-2) {$\ldots$};

		\path [->, line width = 0.5mm] (U-10) edge node[left] {} (U-9);
		\path [->, line width = 0.5mm] (U-9) edge node[left] {} (U-8);
		\path [->, line width = 0.5mm] (U-8) edge node[left] {} (U-7);
		\path [->, line width = 0.5mm] (U-7) edge node[left] {} (U-6);
		\path [->, line width = 0.5mm] (U-6) edge node[left] {} (U-5);
		\path [->, line width = 0.5mm] (U-5) edge node[left] {} (U-4);
		\path [->, line width = 0.5mm] (U-4) edge node[left] {} (U-3);
		\path [->, line width = 0.5mm] (U-3) edge node[left] {} (U-2);
		\path [->, line width = 0.5mm] (U-2) edge node[left] {} (U-1);
		\path [->, line width = 0.5mm] (U-1) edge node[left] {} (U0);

		\path [->, line width = 0.5mm, bend left] (X-10) edge node[left] {} (X-8);
		\path [->, line width = 0.5mm, bend left] (X-9) edge node[left] {} (X-7);
		\path [->, line width = 0.5mm, bend left] (X-8) edge node[left] {} (X-6);
		\path [->, line width = 0.5mm, bend left] (X-7) edge node[left] {} (X-5);
		\path [->, line width = 0.5mm, bend left] (X-6) edge node[left] {} (X-4);
		\path [->, line width = 0.5mm, bend left] (X-5) edge node[left] {} (X-3);
		\path [->, line width = 0.5mm, bend left] (X-4) edge node[left] {} (X-2);
		\path [->, line width = 0.5mm, bend left] (X-3) edge node[left] {} (X-1);
		\path [->, line width = 0.5mm, bend left] (X-2) edge node[left] {} (X0);

		\path [->, line width = 0.5mm, color = blue] (Y-10) edge node[left] {} (Y-9);
		\path [->, line width = 0.5mm, color = blue] (Y-9) edge node[left] {} (Y-8);
		\path [->, line width = 0.5mm, color = blue] (Y-8) edge node[left] {} (Y-7);
		\path [->, line width = 0.5mm, color = blue] (Y-7) edge node[left] {} (Y-6);
		\path [->, line width = 0.5mm, color = blue] (Y-6) edge node[left] {} (Y-5);
		\path [->, line width = 0.5mm, color = blue] (Y-5) edge node[left] {} (Y-4);
		\path [->, line width = 0.5mm, color = blue] (Y-4) edge node[left] {} (Y-3);
		\path [->, line width = 0.5mm, color = blue] (Y-3) edge node[left] {} (Y-2);
		\path [->, line width = 0.5mm, color = blue] (Y-2) edge node[left] {} (Y-1);
		\path [->, line width = 0.5mm, color = blue] (Y-1) edge node[left] {} (Y0);

		\path [->, line width = 0.5mm, color = black] (Y-10) edge node[left] {} (X-9);
		\path [->, line width = 0.5mm, color = black] (Y-9) edge node[left] {} (X-8);
		\path [->, line width = 0.5mm, color = black] (Y-8) edge node[left] {} (X-7);
		\path [->, line width = 0.5mm, color = black] (Y-7) edge node[left] {} (X-6);
		\path [->, line width = 0.5mm, color = black] (Y-6) edge node[left] {} (X-5);
		\path [->, line width = 0.5mm, color = black] (Y-5) edge node[left] {} (X-4);
		\path [->, line width = 0.5mm, color = black] (Y-4) edge node[left] {} (X-3);
		\path [->, line width = 0.5mm, color = black] (Y-3) edge node[left] {} (X-2);
		\path [->, line width = 0.5mm, color = black] (Y-2) edge node[left] {} (X-1);
		\path [->, line width = 0.5mm, color = black] (Y-1) edge node[left] {} (X0);
		
		\path [->, line width = 0.5mm, bend right, color = orange] (Y-10) edge node[left] {} (Y-7);
		\path [->, line width = 0.5mm, bend right, color = orange] (Y-9) edge node[left] {} (Y-6);
		\path [->, line width = 0.5mm, bend right, color = orange] (Y-8) edge node[left] {} (Y-5);
		\path [->, line width = 0.5mm, bend right, color = orange] (Y-7) edge node[left] {} (Y-4);
		\path [->, line width = 0.5mm, bend right, color = orange] (Y-6) edge node[left] {} (Y-3);
		\path [->, line width = 0.5mm, bend right, color = orange] (Y-5) edge node[left] {} (Y-2);
		\path [->, line width = 0.5mm, bend right, color = orange] (Y-4) edge node[left] {} (Y-1);
		\path [->, line width = 0.5mm, bend right, color = orange] (Y-3) edge node[left] {} (Y0);
		
		\path [->, line width = 0.5mm] (U-10) edge node[left] {} (X-5);
		\path [->, line width = 0.5mm] (U-9) edge node[left] {} (X-4);
		\path [->, line width = 0.5mm] (U-8) edge node[left] {} (X-3);
		\path [->, line width = 0.5mm] (U-7) edge node[left] {} (X-2);
		\path [->, line width = 0.5mm] (U-6) edge node[left] {} (X-1);
		\path [->, line width = 0.5mm] (U-5) edge node[left] {} (X0);
		
		\path [->, line width = 0.5mm] (U-10) edge node[left] {} (X-6);
		\path [->, line width = 0.5mm] (U-9) edge node[left] {} (X-5);
		\path [->, line width = 0.5mm] (U-8) edge node[left] {} (X-4);
		\path [->, line width = 0.5mm] (U-7) edge node[left] {} (X-3);
		\path [->, line width = 0.5mm] (U-6) edge node[left] {} (X-2);
		\path [->, line width = 0.5mm] (U-5) edge node[left] {} (X-1);
		\path [->, line width = 0.5mm] (U-4) edge node[left] {} (X0);

		\path [->, line width = 0.5mm] (U-10) edge node[left] {} (Y-9);
		\path [->, line width = 0.5mm] (U-9) edge node[left] {} (Y-8);
		\path [->, line width = 0.5mm] (U-8) edge node[left] {} (Y-7);
		\path [->, line width = 0.5mm] (U-7) edge node[left] {} (Y-6);
		\path [->, line width = 0.5mm] (U-6) edge node[left] {} (Y-5);
		\path [->, line width = 0.5mm] (U-5) edge node[left] {} (Y-4);
		\path [->, line width = 0.5mm] (U-4) edge node[left] {} (Y-3);
		\path [->, line width = 0.5mm] (U-3) edge node[left] {} (Y-2);
		\path [->, line width = 0.5mm] (U-2) edge node[left] {} (Y-1);
		\path [->, line width = 0.5mm] (U-1) edge node[left] {} (Y0);
		
		\path [->, line width = 0.5mm] (U-10) edge node[left] {} (Y-8);
		\path [->, line width = 0.5mm] (U-9) edge node[left] {} (Y-7);
		\path [->, line width = 0.5mm] (U-8) edge node[left] {} (Y-6);
		\path [->, line width = 0.5mm] (U-7) edge node[left] {} (Y-5);
		\path [->, line width = 0.5mm] (U-6) edge node[left] {} (Y-4);
		\path [->, line width = 0.5mm] (U-5) edge node[left] {} (Y-3);
		\path [->, line width = 0.5mm] (U-4) edge node[left] {} (Y-2);
		\path [->, line width = 0.5mm] (U-3) edge node[left] {} (Y-1);
		\path [->, line width = 0.5mm] (U-2) edge node[left] {} (Y0);
		
		\path [->, line width = 0.5mm, color = red] (X-10) edge node[left] {} (Y-7);
		\path [->, line width = 0.5mm, color = red] (X-9) edge node[left] {} (Y-6);
		\path [->, line width = 0.5mm, color = red] (X-8) edge node[left] {} (Y-5);
		\path [->, line width = 0.5mm, color = red] (X-7) edge node[left] {} (Y-4);
		\path [->, line width = 0.5mm, color = red] (X-6) edge node[left] {} (Y-3);
		\path [->, line width = 0.5mm, color = red] (X-5) edge node[left] {} (Y-2);
		\path [->, line width = 0.5mm, color = red] (X-4) edge node[left] {} (Y-1);
		\path [->, line width = 0.5mm, color = red] (X-3) edge node[left] {} (Y0);

		\end{tikzpicture}
	}
	\caption{Example full time graph. Here, the \color{red} red \color{black} edges correspond to $A^{(3)}_{YX}$, the \color{blue} blue \color{black} edges to $A^{(1)}_{YY}$ and the \color{orange} orange \color{black} edges to $A^{(3)}_{YY}$.}
	\label{ex_full_time_graph_app_2}
\end{figure}

\begin{example}
\label{ex_app_2}
Consider the full time graph from Figure \ref{ex_full_time_graph_app_2}. Here,
$d_U=1$ and $d_O=2$. Abbreviate $\{X_t\}_{t\in\mathbb{Z}}:=\{O^2_t\}_{t\in\mathbb{Z}}$. Note that $m_U=1$, $l^U_1=1$, $m_Y=2$, $l^Y_1=1$, $l^Y_2=3$, $m_{X}=1$, $l^{X}_1=2$, $m_{YX}=1$, $l^{YX}_1=3$, $m_{XY}=1$, $l^{XY}_1=1$, $m_{YU}=2$, $l^{YU}_1=1$, $l^{YU}_2=2$, $m_{XU}=2$, $l^{XU}_1=4$, $l^{XU}_2=5$.

First, note that $\textnormal{pa}^{\textnormal{obs}}(Y_t)=\{Y_{t-1}, Y_{t-3}, X_{t-3}\}$ and $\textnormal{pa}^{\textnormal{lat}}(Y_t)=\{U_{t-1}, U_{t-2}\}$. Let
\begin{align*}
    F^{\textnormal{obs}}:=\{X_{t+2}\}.
\end{align*}
Then, $\textnormal{pa}^{\textnormal{obs}}(F^{\textnormal{obs}})\setminus F^{\textnormal{obs}}=\{X_{t}, Y_{t+1}\}$ and $\textnormal{pa}^{\textnormal{lat}}(F^{\textnormal{obs}})=\{U_{t-2},U_{t-3}\}$. 
Therefore,
\begin{align*}
    C=\{X_{t-3}, Y_{t-1}, Y_{t-3}, X_{t+2}, X_{t}, Y_{t+1}\}
\end{align*}
is valid.
Next, consider
\begin{align*}
    B_U:=\{U_{t-3}\}.
\end{align*}
Then, $\tau_{Y}=\{3\}$ and $\tau_{X}=\{1\}$ are valid.

Also note that $C^{(1)}_{Y}=\{Y_{t-1}, Y_{t-3}, Y_{t+1}\}$ and $C^{(1)}_{X}=\{X_{t-3}, X_{t}\}$. Thus, Corollary \ref{corollary_identifiability} applies and we get identifiability.
One possible $R$ according to Lemma \ref{lemma_resid_class1} is
\begin{align*}
    R:=\{X_{t-3}, Y_{t-5}, Y_{t-4}, Y_{t-3}, X_{t-2}, X_{t-4}\}.
\end{align*}
In generic settings, the parameters of interest $A^{(3)}_{YX}$ and $A^{(1)}_{YY}$ and $A^{(3)}_{YY}$ are given by
    \begin{align*}
    \label{est_app_2}
        &\begin{pmatrix}
        \color{red}A^{(3)}_{YX}\color{black}\\
        \color{blue}A^{(1)}_{YY}\color{black}\\
        \color{orange}A^{(3)}_{YY}\color{black}\\
        \vdots \\
        \textnormal{other terms}\\
        \vdots
        \end{pmatrix}\\
        &=\begin{pmatrix}
        \Gamma_{X_{t-3}\color{red}X_{t-3}\color{black}} & \Gamma_{X_{t-3}\color{blue}Y_{t-1}\color{black}} & \Gamma_{X_{t-3}\color{orange}Y_{t-3}\color{black}} & \Gamma_{X_{t-3}X_{t+2}} & \Gamma_{X_{t-3}X_{t}} & \Gamma_{X_{t-3}Y_{t+1}}\\
        \Gamma_{Y_{t-5}\color{red}X_{t-3}\color{black}} & \Gamma_{Y_{t-5}\color{blue}Y_{t-1}\color{black}} & \Gamma_{Y_{t-5}\color{orange}Y_{t-3}\color{black}} & \Gamma_{Y_{t-5}X_{t+2}} & \Gamma_{Y_{t-5}X_{t}} & \Gamma_{Y_{t-5}Y_{t+1}}\\
        \Gamma_{Y_{t-4}\color{red}X_{t-3}\color{black}} & \Gamma_{Y_{t-4}\color{blue}Y_{t-1}\color{black}} &
        \Gamma_{Y_{t-4}\color{orange}Y_{t-3}\color{black}} & \Gamma_{Y_{t-4}X_{t+2}} & \Gamma_{Y_{t-4}X_{t}} & \Gamma_{Y_{t-4}Y_{t+1}}\\
        \Gamma_{Y_{t-3}\color{red}X_{t-3}\color{black}} & \Gamma_{Y_{t-3}\color{blue}Y_{t-1}\color{black}} & \Gamma_{Y_{t-3}\color{orange}Y_{t-3}\color{black}} & \Gamma_{Y_{t-3}X_{t+2}} & \Gamma_{Y_{t-3}X_{t}} & \Gamma_{Y_{t-3}Y_{t+1}}\\
        \Gamma_{X_{t-2}\color{red}X_{t-3}\color{black}} & \Gamma_{X_{t-2}\color{blue}Y_{t-1}\color{black}} &
        \Gamma_{X_{t-2}\color{orange}Y_{t-3}\color{black}} &
        \Gamma_{X_{t-2}X_{t+2}} & \Gamma_{X_{t-2}X_{t}} & \Gamma_{X_{t-2}Y_{t+1}}\\
        \Gamma_{X_{t-4}\color{red}X_{t-3}\color{black}} & \Gamma_{X_{t-4}\color{blue}Y_{t-1}\color{black}} & \Gamma_{X_{t-4}\color{orange}Y_{t-3}\color{black}} & \Gamma_{X_{t-4}X_{t+2}} & \Gamma_{X_{t-4}X_{t}} & \Gamma_{X_{t-4}Y_{t+1}}
        \end{pmatrix}^{-1}\cdot
        \begin{pmatrix}
        \Gamma_{X_{t-3}Y_{t}}\\
        \Gamma_{Y_{t-5}Y_{t}}\\
        \Gamma_{Y_{t-4}Y_{t}}\\
        \Gamma_{Y_{t-3}Y_{t}}\\
        \Gamma_{X_{t-2}Y_{t}}\\
        \Gamma_{X_{t-4}Y_{t}}
        \end{pmatrix}.\numberthis
    \end{align*}
    Here, the red, blue and orange colours indicate which component corresponds to which column.
    
     Besides, in Figure \ref{Figure_num_validation_ex_app}b we present a numerical validation for this example.
    \demo

\end{example}
	
	\begin{figure}
	\begin{subfigure}[t]{0.33\textwidth}
		\includegraphics[scale=0.085]{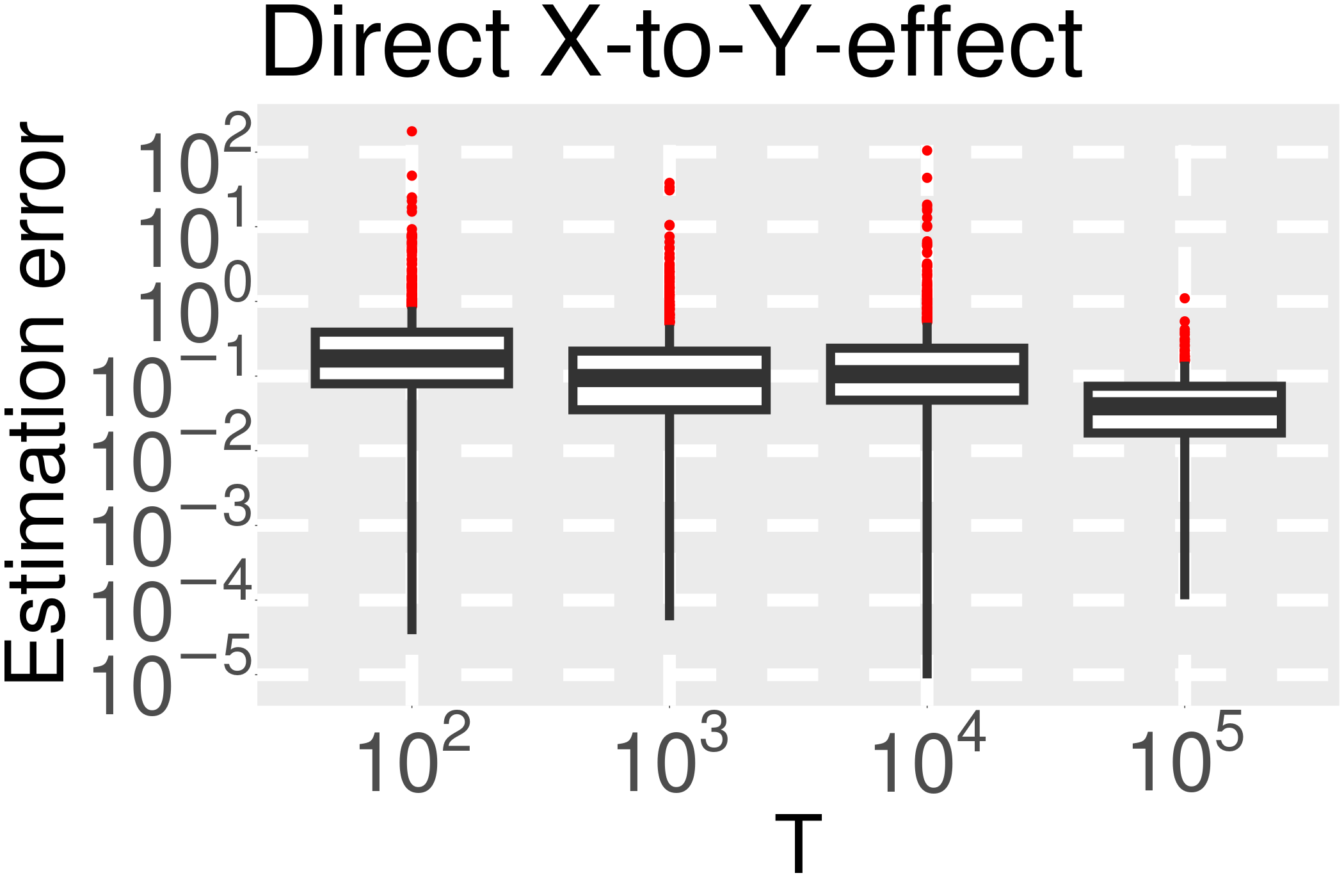}
		\includegraphics[scale=0.085]{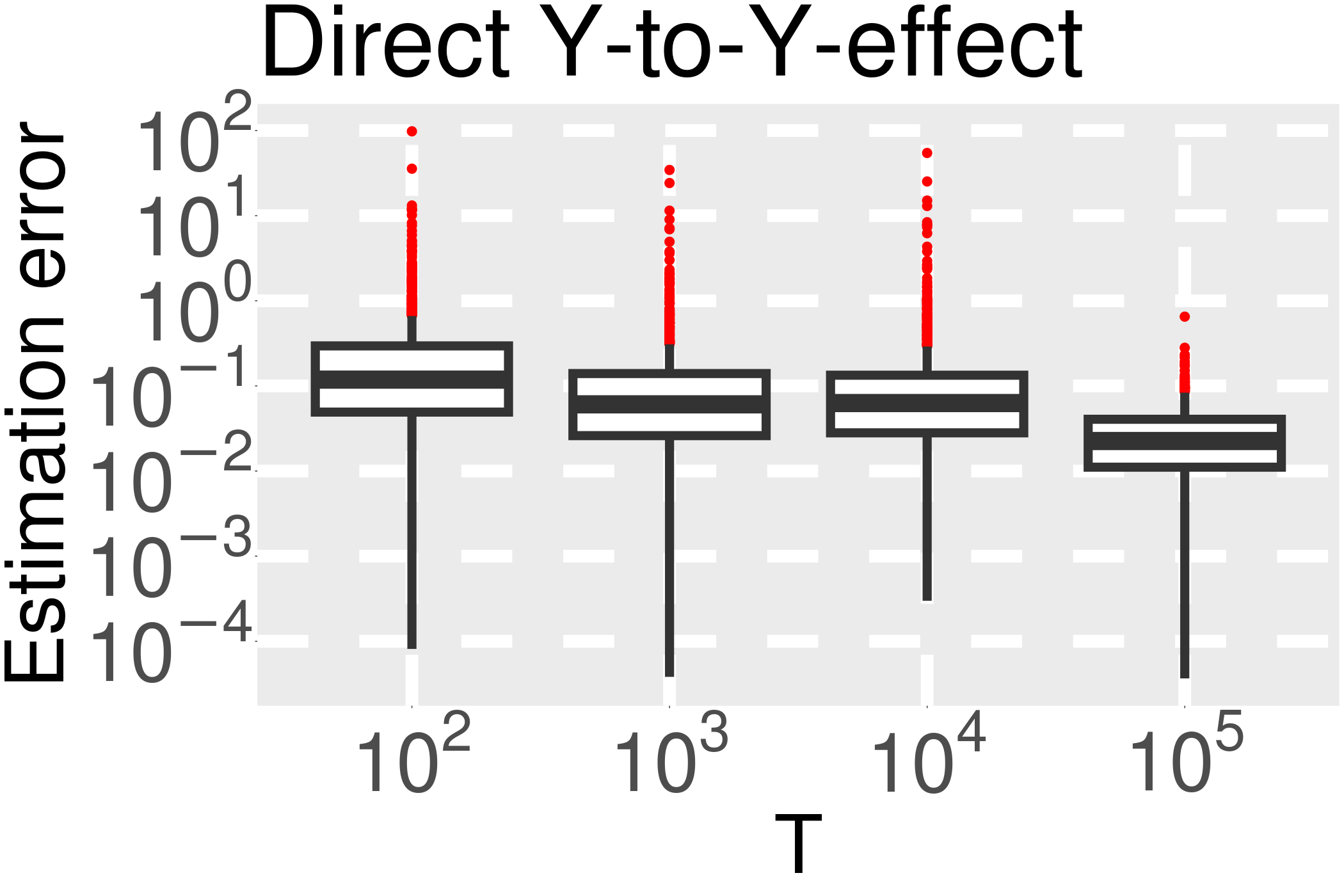}
		\caption{}
	\end{subfigure}
	\begin{subfigure}[t]{0.33\textwidth}
		\includegraphics[scale=0.085]{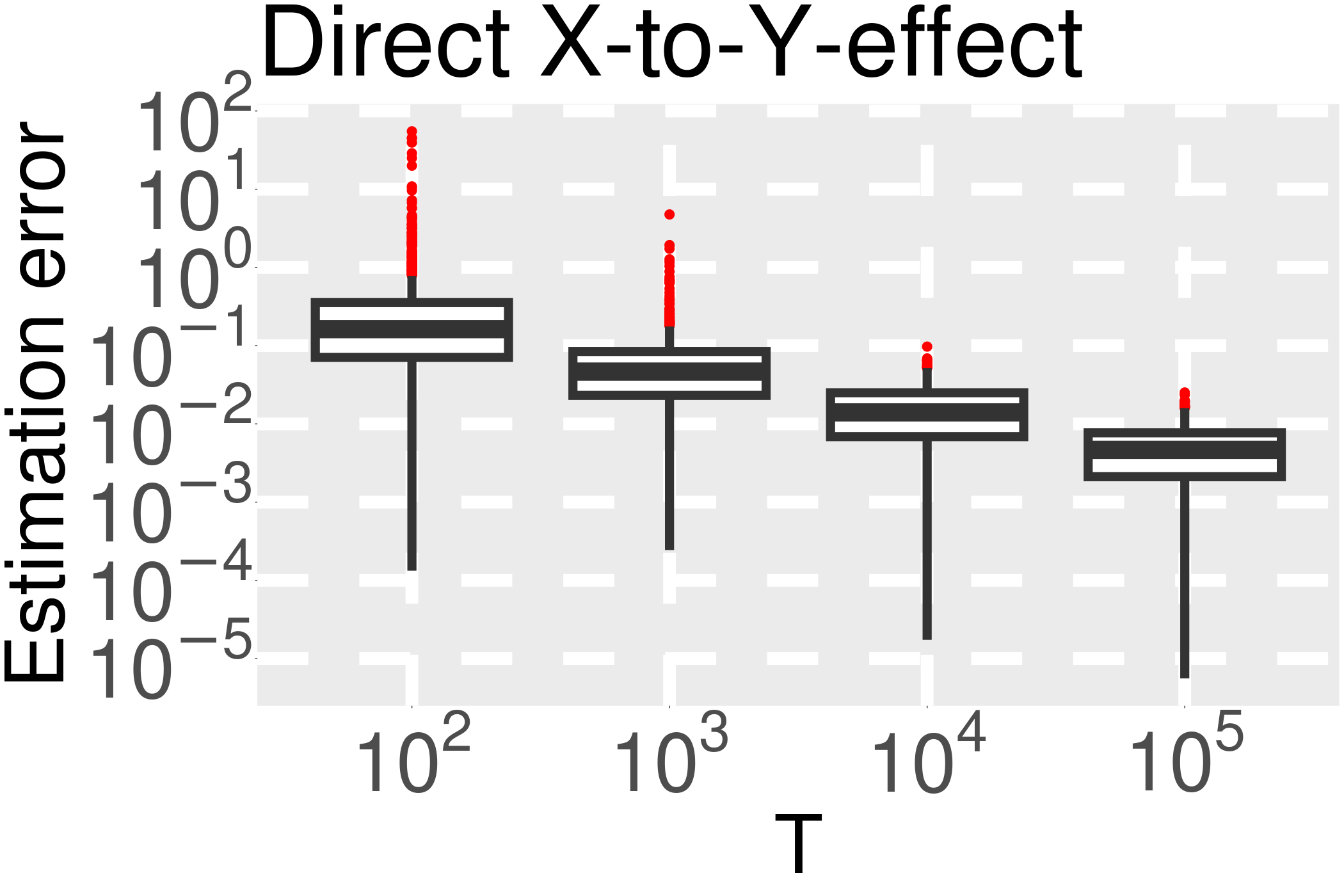}
		\includegraphics[scale=0.085]{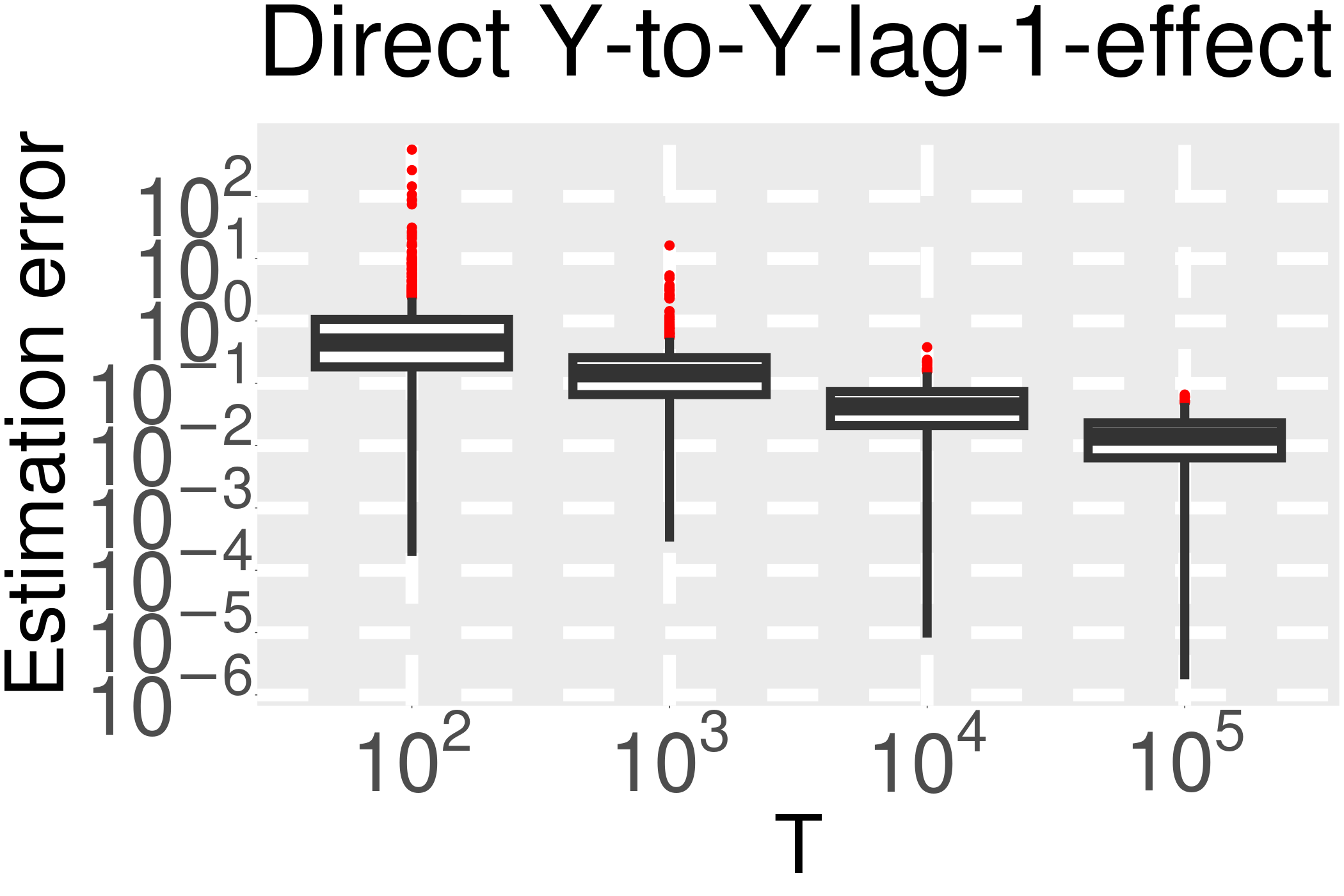}
			\includegraphics[scale=0.085]{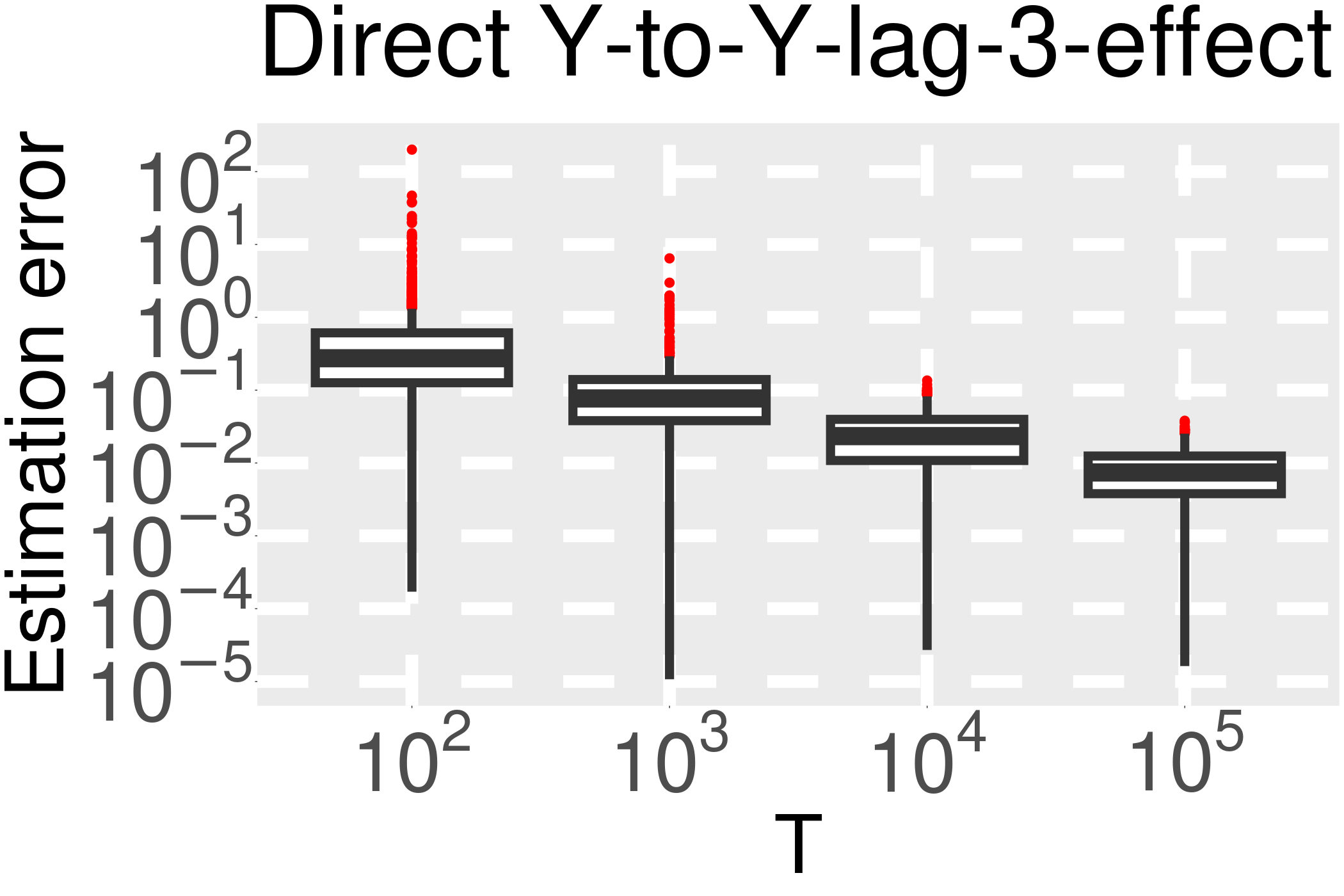}
			\caption{}
		\end{subfigure}
		\begin{subfigure}[t]{0.33\textwidth}
			\includegraphics[scale=0.085]{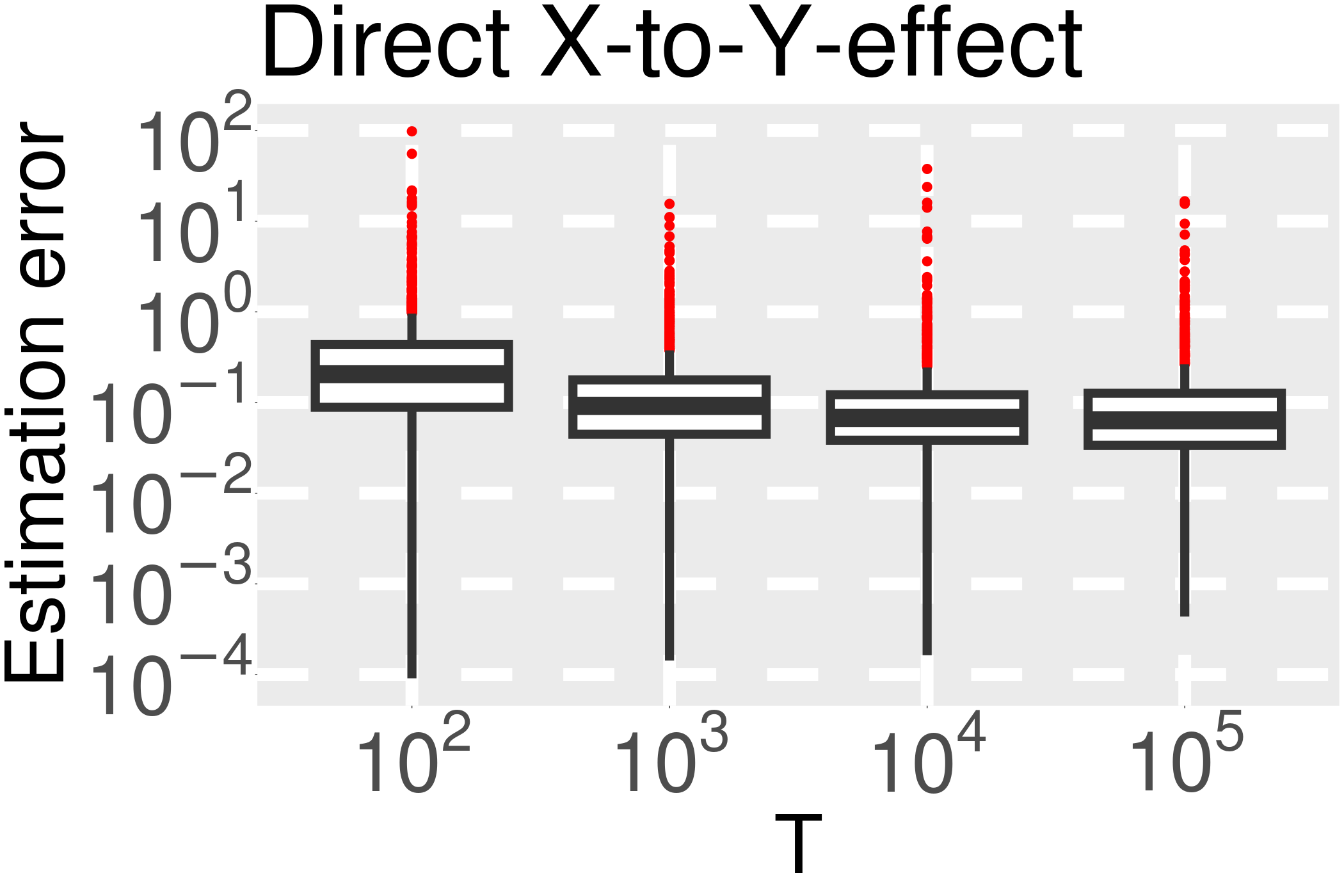}
		\includegraphics[scale=0.085]{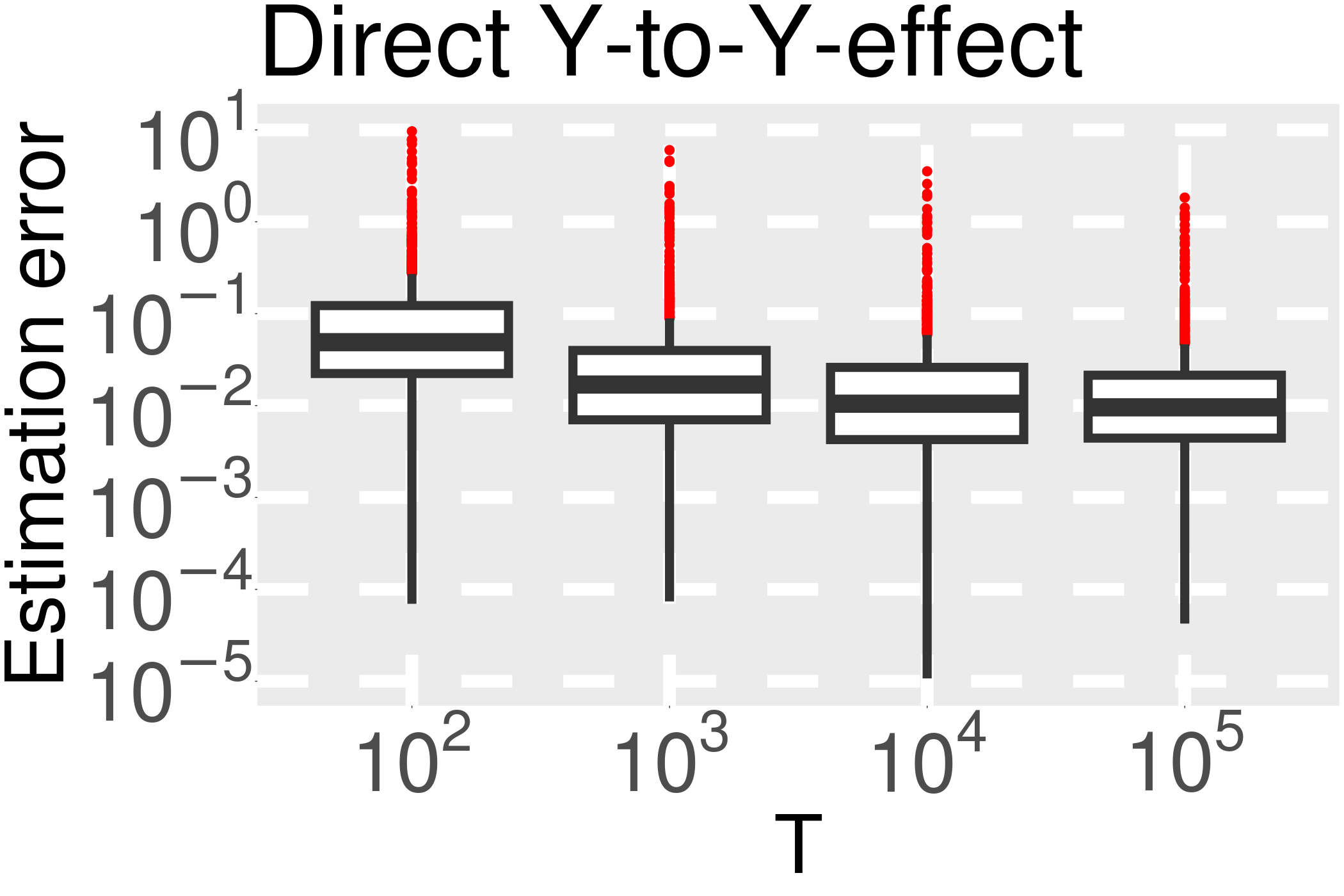}
		\caption{}
		\end{subfigure}
		\caption{A numerical validation of Examples \ref{ex_app_1} (subfigure a, left column), \ref{ex_app_2} (subfigure b, middle column) and \ref{ex_app_3} (subfigure c, right column). For different $1000$ different parameters that yield a stable SVAR process inducing the full time graph from Figure \ref{ex_full_time_graph_app_2} and different time series lengths $T\in\{10^2, 10^3, 10^4, 10^5\}$, we plot the errors of the estimates of $A^{(5)}_{YX}$ and $A^{(1)}_{YY}$ (subfigure a), $A^{(3)}_{YX}$ and $A^{(1)}_{YY}$ and $A^{(3)}_{YY}$ (subfigure b), and $A^{(5)}_{YX}$ and $A^{(2)}_{YY}$ (subfigure c) to the respective true values. Remark: In these boxplots, the whisker's outside the boxes correspond to the smallest and largest points within the $1.5$-inner quartile range. Outliers are highlighted in red. The ordinate axis is log10-transformed.}
		\label{Figure_num_validation_ex_app}
	\end{figure}

\begin{figure}[h]
	\centering
	\scalebox{0.8}{
		\begin{tikzpicture}
		\node[] (t-5) at (-10,3) {\small$t-5$};
		\node[] (t-4) at (-9,3) {\small$t-4$};
		\node[] (t-3) at (-8,3) {\small$t-3$};
		\node[] (t-2) at (-7,3) {\small$t-2$};
		\node[] (t-1) at (-6,3) {\small$t-1$};
		\node[] (t) at (-5,3) {\small$t$};
		\node[] (t+1) at (-4,3) {\small$t+1$};
		\node[] (t+2) at (-3,3) {\small$t+2$};
		\node[] (t+3) at (-2,3) {\small$t+3$};
		\node[] (t+4) at (-1,3) {\small$t+4$};
		\node[] (t+5) at (0,3) {\small$t+5$};
		
		\node[] (X) at (-12,2) {\small$X$};
		\node[] (U) at (-12,0) {\small$U^1$};
		\node[] (U2) at (-12,-2) {\small$U^2$};
		\node[] (Y) at (-12,-4) {\small$Y$};
		
		\node[shape=circle, fill] (U-10) at (-10,0) {};
		\node[shape=circle, fill] (U2-10) at (-10,-2) {};
		\node[shape=circle, fill] (X-10) at (-10,2) {};
		\node[shape=circle, fill] (Y-10) at (-10,-4) {};
		
		\node[shape=circle, fill] (U-9) at (-9,0) {};
		\node[shape=circle, fill] (U2-9) at (-9,-2) {};
		\node[shape=circle, fill] (X-9) at (-9,2) {};
		\node[shape=circle, fill] (Y-9) at (-9,-4) {};
		
		\node[shape=circle, fill] (U-8) at (-8,0) {};
		\node[shape=circle, fill] (U2-8) at (-8,-2) {};
		\node[shape=circle, fill] (X-8) at (-8,2) {};
		\node[shape=circle, fill] (Y-8) at (-8,-4) {};
		
		\node[shape=circle, fill] (U-7) at (-7,0) {};
		\node[shape=circle, fill] (U2-7) at (-7,-2) {};
		\node[shape=circle, fill] (X-7) at (-7,2) {};
		\node[shape=circle, fill] (Y-7) at (-7,-4) {};
		
		\node[shape=circle, fill] (U-6) at (-6,0) {};
		\node[shape=circle, fill] (U2-6) at (-6,-2) {};
		\node[shape=circle, fill] (X-6) at (-6,2) {};
		\node[shape=circle, fill] (Y-6) at (-6,-4) {};
		
		\node[shape=circle, fill] (U-5) at (-5,0) {};
		\node[shape=circle, fill] (U2-5) at (-5,-2) {};
		\node[shape=circle, fill] (X-5) at (-5,2) {};
		\node[shape=circle, fill] (Y-5) at (-5,-4) {};
		
		\node[shape=circle, fill] (U-4) at (-4,0) {};
		\node[shape=circle, fill] (U2-4) at (-4,-2) {};
		\node[shape=circle, fill] (X-4) at (-4,2) {};
		\node[shape=circle, fill] (Y-4) at (-4,-4) {};
		
		\node[shape=circle, fill] (U-3) at (-3,0) {};
		\node[shape=circle, fill] (U2-3) at (-3,-2) {};
		\node[shape=circle, fill] (X-3) at (-3,2) {};
		\node[shape=circle, fill] (Y-3) at (-3,-4) {};
		
		\node[shape=circle, fill] (U-2) at (-2,0) {};
		\node[shape=circle, fill] (U2-2) at (-2,-2) {};
		\node[shape=circle, fill] (X-2) at (-2,2) {};
		\node[shape=circle, fill] (Y-2) at (-2,-4) {};
		
		\node[shape=circle, fill] (U-1) at (-1,0) {};
		\node[shape=circle, fill] (U2-1) at (-1,-2) {};
		\node[shape=circle, fill] (X-1) at (-1,2) {};
		\node[shape=circle, fill] (Y-1) at (-1,-4) {};
		
		\node[shape=circle, fill] (U0) at (0,0) {};
		\node[shape=circle, fill] (U20) at (0,-2) {};
		\node[shape=circle, fill] (X0) at (0,2) {};
		\node[shape=circle, fill] (Y0) at (0,-4) {};
		
		\node[] (Udots-past) at (-10.5,0) {$\ldots$};
		\node[] (Udots-future) at (0.5,0) {$\ldots$};
		\node[] (U2dots-past) at (-10.5,-2) {$\ldots$};
		\node[] (U2dots-future) at (0.5,-2) {$\ldots$};
		\node[] (Xdots-past) at (-10.5,2) {$\ldots$};
		\node[] (Xdots-future) at (0.5,2) {$\ldots$};
		\node[] (Ydots-past) at (-10.5,-4) {$\ldots$};
		\node[] (Ydots-future) at (0.5,-4) {$\ldots$};

		\path [->, line width = 0.5mm] (U-10) edge node[left] {} (U-9);
		\path [->, line width = 0.5mm] (U-9) edge node[left] {} (U-8);
		\path [->, line width = 0.5mm] (U-8) edge node[left] {} (U-7);
		\path [->, line width = 0.5mm] (U-7) edge node[left] {} (U-6);
		\path [->, line width = 0.5mm] (U-6) edge node[left] {} (U-5);
		\path [->, line width = 0.5mm] (U-5) edge node[left] {} (U-4);
		\path [->, line width = 0.5mm] (U-4) edge node[left] {} (U-3);
		\path [->, line width = 0.5mm] (U-3) edge node[left] {} (U-2);
		\path [->, line width = 0.5mm] (U-2) edge node[left] {} (U-1);
		\path [->, line width = 0.5mm] (U-1) edge node[left] {} (U0);
		
		\path [->, line width = 0.5mm] (U-10) edge node[left] {} (U2-9);
		\path [->, line width = 0.5mm] (U-9) edge node[left] {} (U2-8);
		\path [->, line width = 0.5mm] (U-8) edge node[left] {} (U2-7);
		\path [->, line width = 0.5mm] (U-7) edge node[left] {} (U2-6);
		\path [->, line width = 0.5mm] (U-6) edge node[left] {} (U2-5);
		\path [->, line width = 0.5mm] (U-5) edge node[left] {} (U2-4);
		\path [->, line width = 0.5mm] (U-4) edge node[left] {} (U2-3);
		\path [->, line width = 0.5mm] (U-3) edge node[left] {} (U2-2);
		\path [->, line width = 0.5mm] (U-2) edge node[left] {} (U2-1);
		\path [->, line width = 0.5mm] (U-1) edge node[left] {} (U20);
		
        \path [->, line width = 0.5mm] (U2-10) edge node[left] {} (U2-9);
		\path [->, line width = 0.5mm] (U2-9) edge node[left] {} (U2-8);
		\path [->, line width = 0.5mm] (U2-8) edge node[left] {} (U2-7);
		\path [->, line width = 0.5mm] (U2-7) edge node[left] {} (U2-6);
		\path [->, line width = 0.5mm] (U2-6) edge node[left] {} (U2-5);
		\path [->, line width = 0.5mm] (U2-5) edge node[left] {} (U2-4);
		\path [->, line width = 0.5mm] (U2-4) edge node[left] {} (U2-3);
		\path [->, line width = 0.5mm] (U2-3) edge node[left] {} (U2-2);
		\path [->, line width = 0.5mm] (U2-2) edge node[left] {} (U2-1);
		\path [->, line width = 0.5mm] (U2-1) edge node[left] {} (U20);	
		
		\path [->, line width = 0.5mm, bend left] (X-10) edge node[left] {} (X-8);
		\path [->, line width = 0.5mm, bend left] (X-9) edge node[left] {} (X-7);
		\path [->, line width = 0.5mm, bend left] (X-8) edge node[left] {} (X-6);
		\path [->, line width = 0.5mm, bend left] (X-7) edge node[left] {} (X-5);
		\path [->, line width = 0.5mm, bend left] (X-6) edge node[left] {} (X-4);
		\path [->, line width = 0.5mm, bend left] (X-5) edge node[left] {} (X-3);
		\path [->, line width = 0.5mm, bend left] (X-4) edge node[left] {} (X-2);
		\path [->, line width = 0.5mm, bend left] (X-3) edge node[left] {} (X-1);
		\path [->, line width = 0.5mm, bend left] (X-2) edge node[left] {} (X0);

		\path [->, line width = 0.5mm, bend right, color = blue] (Y-10) edge node[left] {} (Y-8);
		\path [->, line width = 0.5mm, bend right, color = blue] (Y-9) edge node[left] {} (Y-7);
		\path [->, line width = 0.5mm, bend right, color = blue] (Y-8) edge node[left] {} (Y-6);
		\path [->, line width = 0.5mm, bend right, color = blue] (Y-7) edge node[left] {} (Y-5);
		\path [->, line width = 0.5mm, bend right, color = blue] (Y-6) edge node[left] {} (Y-4);
		\path [->, line width = 0.5mm, bend right, color = blue] (Y-5) edge node[left] {} (Y-3);
		\path [->, line width = 0.5mm, bend right, color = blue] (Y-4) edge node[left] {} (Y-2);
		\path [->, line width = 0.5mm, bend right, color = blue] (Y-3) edge node[left] {} (Y-1);
		\path [->, line width = 0.5mm, bend right, color = blue] (Y-2) edge node[left] {} (Y0);

		\path [->, line width = 0.5mm] (U2-10) edge node[left] {} (Y-7);
		\path [->, line width = 0.5mm] (U2-9) edge node[left] {} (Y-6);
		\path [->, line width = 0.5mm] (U2-8) edge node[left] {} (Y-5);
		\path [->, line width = 0.5mm] (U2-7) edge node[left] {} (Y-4);
		\path [->, line width = 0.5mm] (U2-6) edge node[left] {} (Y-3);
		\path [->, line width = 0.5mm] (U2-5) edge node[left] {} (Y-2);
		\path [->, line width = 0.5mm] (U2-4) edge node[left] {} (Y-1);
		\path [->, line width = 0.5mm] (U2-3) edge node[left] {} (Y0);
		
		\path [->, line width = 0.5mm] (U2-10) edge node[left] {} (Y-8);
		\path [->, line width = 0.5mm] (U2-9) edge node[left] {} (Y-7);
		\path [->, line width = 0.5mm] (U2-8) edge node[left] {} (Y-6);
		\path [->, line width = 0.5mm] (U2-7) edge node[left] {} (Y-5);
		\path [->, line width = 0.5mm] (U2-6) edge node[left] {} (Y-4);
		\path [->, line width = 0.5mm] (U2-5) edge node[left] {} (Y-3);
		\path [->, line width = 0.5mm] (U2-4) edge node[left] {} (Y-2);
        \path [->, line width = 0.5mm] (U2-3) edge node[left] {} (Y-1);
        \path [->, line width = 0.5mm] (U2-2) edge node[left] {} (Y0);
		
		\path [->, line width = 0.5mm] (U-10) edge node[left] {} (X-9);
		\path [->, line width = 0.5mm] (U-9) edge node[left] {} (X-8);
		\path [->, line width = 0.5mm] (U-8) edge node[left] {} (X-7);
		\path [->, line width = 0.5mm] (U-7) edge node[left] {} (X-6);
		\path [->, line width = 0.5mm] (U-6) edge node[left] {} (X-5);
		\path [->, line width = 0.5mm] (U-5) edge node[left] {} (X-4);
		\path [->, line width = 0.5mm] (U-4) edge node[left] {} (X-3);
		\path [->, line width = 0.5mm] (U-3) edge node[left] {} (X-2);
        \path [->, line width = 0.5mm] (U-2) edge node[left] {} (X-1);
        \path [->, line width = 0.5mm] (U-1) edge node[left] {} (X0);
		
		\path [->, line width = 0.5mm] (U-10) edge node[left] {} (X-8);
		\path [->, line width = 0.5mm] (U-9) edge node[left] {} (X-7);
		\path [->, line width = 0.5mm] (U-8) edge node[left] {} (X-6);
		\path [->, line width = 0.5mm] (U-7) edge node[left] {} (X-5);
		\path [->, line width = 0.5mm] (U-6) edge node[left] {} (X-4);
		\path [->, line width = 0.5mm] (U-5) edge node[left] {} (X-3);
		\path [->, line width = 0.5mm] (U-4) edge node[left] {} (X-2);
		\path [->, line width = 0.5mm] (U-3) edge node[left] {} (X-1);
		\path [->, line width = 0.5mm] (U-2) edge node[left] {} (X0);
		
       \path [->, line width = 0.5mm, color = red] (X-10) edge node[left] {} (Y-5);
		\path [->, line width = 0.5mm, color = red] (X-9) edge node[left] {} (Y-4);
		\path [->, line width = 0.5mm, color = red] (X-8) edge node[left] {} (Y-3);
		\path [->, line width = 0.5mm, color = red] (X-7) edge node[left] {} (Y-2);
		\path [->, line width = 0.5mm, color = red] (X-6) edge node[left] {} (Y-1);
		\path [->, line width = 0.5mm, color = red] (X-5) edge node[left] {} (Y0);

		\end{tikzpicture}
	}
	\caption{Full time graph for Example \ref{ex_app_3}. Here, the \color{red} red \color{black} edges correspond to $A^{(5)}_{YX}$ and the \color{blue} blue \color{black} edges to $A^{(2)}_{YY}$.}
	\label{ex_full_time_graph_app_3}
\end{figure}

\begin{example}
\label{ex_app_3}
Consider the full time graph from Figure \ref{ex_full_time_graph_app_3}. Here,
$d_U=2$ and $d_O=2$. Abbreviate $\{X_t\}_{t\in\mathbb{Z}}:=\{O^2_t\}_{t\in\mathbb{Z}}$. Note that $m_{U^1}=1$, $l^{U^1}_1=1$, $m_{U^2}=1$, $l^{U^2}_1=1$, $m_{U^2U^1}=1$, $l^{U^2U^1}_1=1$, $m_{U^1U^2}=0$,  $m_Y=1$, $l^Y_1=2$, $m_{X}=1$, $l^{X}_1=2$, $m_{YX}=1$, $l^{YX}_1=5$, $m_{XY}=0$, $m_{YU^2}=2$, $l^{YU^2}_1=2$, $l^{YU^2}_2=3$, $m_{XU^2}=0$, $m_{YU^1}=0$, $m_{XU^1}=2$, $l^{XU^1}_1=1$, $l^{XU^1}_2=2$.

First, note that $\textnormal{pa}^{\textnormal{obs}}(Y_t)=\{Y_{t-2}, X_{t-5}\}$ and $\textnormal{pa}^{\textnormal{lat}}(Y_t)=\{U^2_{t-2}, U^2_{t-3}\}$. Let
\begin{align*}
    F^{\textnormal{obs}}:=\{Y_{t+3}, X_{t+3}\}.
\end{align*}
Then, $\textnormal{pa}^{\textnormal{obs}}(F^{\textnormal{obs}})\setminus F^{\textnormal{obs}}=\{Y_{t+1}, X_{t-2}, X_{t+1}\}$ and $\textnormal{pa}^{\textnormal{lat}}(F^{\textnormal{obs}})=\{U^2_{t-2},U^2_{t-3}, U^1_{t+2}, U^1_{t+1}\}$. 
Therefore,
\begin{align*}
    C=\{X_{t-5}, Y_{t-2}, Y_{t+3}, X_{t+3}, Y_{t+1}, X_{t-2}, X_{t+1}\}
\end{align*}
is valid.
Next, consider
\begin{align*}
    B_U:=\{U^1_{t-3}, U^2_{t-3}\}.
\end{align*}
Then, $\tau_{Y}=\{2\}$ and $\tau_{X}=\{3\}$ are valid.

Also note that $C^{(1)}_{Y}=\{Y_{t-2}, Y_{t+1}\}$ and $C^{(1)}_{X}=\{X_{t-5}, X_{t-2}, X_{t+1}\}$. 
One possible $R$ according to Lemma \ref{lemma_resid_class1} is
\begin{align*}
    R:=\{Y_{t-4}, Y_{t-3},Y_{t-2}, X_{t-6}, X_{t-5}, X_{t-4}, X_{t-3}\}.
\end{align*}
In generic settings, the parameters of interest $A^{(5)}_{YX}$ and $A^{(1)}_{YY}$ and $A^{(3)}_{YY}$ are given by
    \begin{align*}
    \label{est_app_3}
        &\begin{pmatrix}
        \color{red}A^{(5)}_{YX}\color{black}\\
        \color{blue}A^{(2)}_{YY}\color{black}\\
        \vdots \\
        \textnormal{other terms}\\
        \vdots
        \end{pmatrix}\\
        &=\begin{pmatrix}
        \Gamma_{Y_{t-4}\color{red}X_{t-5}\color{black}} & \Gamma_{Y_{t-4}\color{blue}Y_{t-2}\color{black}} & \Gamma_{Y_{t-4}Y_{t+3}} & \Gamma_{Y_{t-4}X_{t+3}} & \Gamma_{Y_{t-4}Y_{t+1}} & \Gamma_{Y_{t-4}X_{t-2}} & \Gamma_{Y_{t-4}X_{t+1}}\\
        \Gamma_{Y_{t-3}\color{red}X_{t-2}\color{black}} & \Gamma_{Y_{t-3}\color{blue}Y_{t-2}\color{black}} & \Gamma_{Y_{t-3}Y_{t+3}} & \Gamma_{Y_{t-3}X_{t+3}} & \Gamma_{Y_{t-3}Y_{t+1}} & \Gamma_{Y_{t-3}X_{t-2}} & \Gamma_{Y_{t-3}X_{t+1}}\\
        \Gamma_{Y_{t-2}\color{red}X_{t-2}\color{black}} & \Gamma_{Y_{t-2}\color{blue}Y_{t-2}\color{black}} &
        \Gamma_{Y_{t-2}Y_{t+3}} & \Gamma_{Y_{t-2}X_{t+3}} & \Gamma_{Y_{t-2}Y_{t+1}} & \Gamma_{Y_{t-2}X_{t-2}} & \Gamma_{Y_{t-2}X_{t-2}}\\
        \Gamma_{X_{t-6}\color{red}X_{t-5}\color{black}} & \Gamma_{X_{t-6}\color{blue}Y_{t-2}\color{black}} & \Gamma_{X_{t-6}Y_{t+3}} & \Gamma_{X_{t-6}X_{t+3}} & \Gamma_{X_{t-6}Y_{t+1}} & \Gamma_{X_{t-6}X_{t-2}} & \Gamma_{X_{t-6}X_{t+1}}\\
        \Gamma_{X_{t-5}\color{red}X_{t-5}\color{black}} & \Gamma_{X_{t-5}\color{blue}Y_{t-2}\color{black}} &
        \Gamma_{X_{t-5}Y_{t+3}} &
        \Gamma_{X_{t-5}X_{t+3}} & \Gamma_{X_{t-5}Y_{t+1}} & \Gamma_{X_{t-5}X_{t-2}} & \Gamma_{X_{t-5}X_{t+1}}\\
        \Gamma_{X_{t-4}\color{red}X_{t-5}\color{black}} & \Gamma_{X_{t-4}\color{blue}Y_{t-2}\color{black}} & \Gamma_{X_{t-4}Y_{t+3}} & \Gamma_{X_{t-4}X_{t+3}} & \Gamma_{X_{t-4}Y_{t+1}} & \Gamma_{X_{t-4}X_{t-2}} & \Gamma_{X_{t-4}X_{t+1}}\\
        \Gamma_{X_{t-3}\color{red}X_{t-5}\color{black}} & \Gamma_{X_{t-3}\color{blue}Y_{t-2}\color{black}} & \Gamma_{X_{t-3}Y_{t+3}} & \Gamma_{X_{t-3}X_{t+3}} & \Gamma_{X_{t-3}Y_{t+1}} & \Gamma_{X_{t-3}X_{t-2}} & \Gamma_{X_{t-3}X_{t+1}}
        \end{pmatrix}^{-1}\cdot
        \begin{pmatrix}
        \Gamma_{Y_{t-4}Y_{t}}\\
        \Gamma_{Y_{t-3}Y_{t}}\\
        \Gamma_{Y_{t-2}Y_{t}}\\
        \Gamma_{X_{t-6}Y_{t}}\\
        \Gamma_{X_{t-5}Y_{t}}\\
        \Gamma_{X_{t-4}Y_{t}}\\
        \Gamma_{X_{t-3}Y_{t}}
        \end{pmatrix}.    \numberthis
    \end{align*}
    Here, the red, blue and orange colours indicate which component corresponds to which column.
    
     Besides, in Figure \ref{Figure_num_validation_ex_app}c we present a numerical validation for this example.
    \demo

\end{example}

\clearpage

\bibliography{main}
\end{document}